\newcommand{\pg}[1]{\marginpar{\textcolor{orange}{pg}}\textcolor{orange}{#1}}
\newcommand{\pv}[1]{\marginpar{\textcolor{blue}{pv}}\textcolor{blue}{#1}}
\newcommand{\eg}[1]{\marginpar{\textcolor{magenta}{eg}}\textcolor{magenta}{#1}}
\newcommand{\pg}[1]{{}}
\newcommand{\pv}[1]{{}}
\newcommand{\eg}[1]{{}}
\newcommand{\tree}[2]{\tikz[style={
    level distance=#1pt,
    sibling distance=0pt,
    frontier/.style={distance from root=#2pt},
    font=\small}]}
\newcommand{\tmove}[1]{\tikz{\node at (0,0) {};
\node at (0,#1pt) {\(\ra_{\cA}\)};}}
\newcommand{\tmoveT}[1]{\tikz{\node at (0,0) {};
\node at (0,#1pt) {\(\ra_{\TTA}\)};}}
\newcommand{\tdot}[1]{\tikz{\node at (0,0) {};
\node at (0,#1pt) {\(\cdot\)};}}
\newcommand{\tuple}[1]{\langle {#1}\rangle}
\newcommand{\lang}[1]{{\mathcal{L}(#1)}}
\newcommand{\subs}[2]{{\ensuremath{#1[\![#2]\!]}}}
\newcommand{\concat}{{\cdot}}
\newcommand{\AND}{\ensuremath{\land}}
\newcommand{\OR}{\ensuremath{\lor}}
\newcommand{\wt}{\widetilde}
\newcommand{\cA}{\mathcal{A}}
\newcommand{\cC}{\mathcal{C}}
\newcommand{\cD}{\mathcal{D}}
\newcommand{\cF}[1]{\mathsf{Min}^{\mathsf{#1}}}
\newcommand{\cG}[1]{\mathsf{Det}^{\mathsf{#1}}}
\newcommand{\cJ}[1]{\mathsf{TDet}^{\mathsf{#1}}}
\newcommand{\cK}[1]{\mathsf{TMin}^{\mathsf{#1}}}
\newcommand{\len}[1]{\vert{#1}\vert}
\newcommand{\rank}[1]{\langle{#1}\rangle}
\newcommand{\cH}{\mathsf{H}}
\newcommand{\cI}{\mathcal{I}}
\newcommand{\cM}{\mathcal{M}}
\newcommand{\cL}{\mathcal{L}}
\newcommand{\cN}{\mathcal{N}}
\newcommand{\cT}{\mathcal{T}}
\newcommand{\TA}{\mathcal{A}}
\newcommand{\BTA}{\mathcal{A}^{\mathsf{B}}}
\newcommand{\TTA}{\mathcal{A}^{\mathsf{T}}}
\newcommand{\cX}{
\mathchoice
{\scalebox{0.85}{\raisebox{-0.3pt}{\(\square\)}}}
{\scalebox{0.85}{\raisebox{-0.3pt}{\(\square\)}}}
{\scalebox{0.7}{\raisebox{-0.3pt}{\(\square\)}}}
{\scalebox{0.5}{\raisebox{-0.3pt}{\(\square\)}}}
}
\newcommand{\rd}{\sim^{\mathsf{d}}}
\newcommand{\ru}{\sim^{\mathsf{u}}}
\newcommand{\ruA}{\ru_{\cA}}
\newcommand{\rdA}{\rd_{\cA}}
\newcommand{\ruAT}{\ru_{{\TTA}}}
\newcommand{\rdAT}{\rd_{{\TTA}}}
\newcommand{\rdAB}{\rd_{{\BTA}}}
\newcommand{\ruAB}{\ru_{{\BTA}}}
\newcommand{\dashdownarrowShort}{\raisebox{2.0ex}{\rotatebox{-90}{\scalebox{0.60}{$\leadsto$}}}}
\newcommand{\simp}{\sim_{\dashdownarrowShort}}
\newcommand{\true}{\textbf{T}}
\newcommand{\false}{\textbf{F}}
\newcommand{\db}{\mathsf{D}}
\newcommand{\cdb}{\mathsf{cD}}
\newcommand{\udiff}{\stackrel{\rm\scriptscriptstyle{\fontfamily{ptm}\selectfont def}}{\Leftrightarrow}}
\newcommand{\udiffg}{\stackrel{\rm\scriptscriptstyle{\fontfamily{ptm}\selectfont def}}{=}}
\newcommand{\Lra}{\Leftrightarrow}
\newcommand{\Ra}{\Rightarrow}
\newcommand{\La}{\Leftarrow}
\newcommand{\ra}{\rightarrow}
\newcommand{\ua}{\uparrow}
\newcommand{\da}{\downarrow}
\newcommand{\mydots}{\hbox to 0.5em{.\hss.}}
\newcommand{\myldots}{\hbox to 0.5em{.\hss.\hss.}}
\newcommand{\range}[1]{1\mydots#1}
\DeclareMathOperator{\leaf}{{\ensuremath{\ell}}}
\DeclareMathOperator{\dom}{{dom}}
\DeclareMathOperator{\height}{h}
\DeclareMathOperator{\piv}{piv}
\DeclareMathOperator{\pre}{pre}
\DeclareMathOperator{\wpre}{wpre}
\DeclareMathOperator{\post}{post}
\DeclareMathOperator{\initials}{i}
\DeclareMathOperator{\finals}{f}
\newcommand{\eox}{\hfill{\ensuremath{\Diamond}}}
\begin{document}

   \setcounter{page}{1}
   \publyear{2021}
   \papernumber{2091}
   \volume{184}
   \issue{1}


\title{A Congruence-Based Perspective on Finite Tree Automata}

\author{Pierre Ganty\thanks{Address of correspondence: IMDEA Software Institute,  Madrid, Spain. \newline \newline
          \vspace*{-6mm}{\scriptsize{Received April 2021; \ revised December 2021.}}},  Elena Guti\'errez, Pedro Valero
    \\
IMDEA Software Institute\\
Madrid, Spain\\
pierre.ganty@imdea.org, elenagutiv@gmail.com, pevalme@fb.com
}

\maketitle

\runninghead{P. Ganty et al.}{A Congruence-Based Perspective on Finite Tree Automata}

\begin{abstract}
We provide new insights on the determinization and minimization of tree automata using congruences on trees.
From this perspective, we study a Brzozowski's style minimization algorithm for tree automata.
First, we prove correct this method relying on the following fact: when the automata-based and the language-based congruences coincide, determinizing the automaton yields the minimal one.
Such automata-based congruences, in the case of word automata, are defined using pre and post operators.
Now we extend these operators to tree automata, a task that is particularly challenging due to the reduced expressive power of deterministic top-down (or equivalently co-deterministic bottom-up) automata.
We leverage further our framework to offer an extension of the original result by Brzozowski for word automata.
\end{abstract}

\section{Introduction}
\label{sec:introduction}

Finite tree automata are a well-studied~\cite{tata2007,gcseg1978minimal,gcseg2015tree} automata model processing tree structures, as opposed to the classical finite-state automata, which process words, i.e., trees where every node has at most one child.
Examples of applications of tree automata include model checking~\cite{AbdullaLdR06,bouajjani2012abstract}, natural language processing~\cite{knight2009applications} and representing the nested structured of tree-based metalanguages, such as XML~\cite{hosoya2010foundations}.

There exist two classes of tree automata, which differ in the way they process the input tree: \emph{bottom-up tree automata} (BTAs) and \emph{top-down tree automata} (TTAs).
In their non-deterministic flavor, both types of tree automata have the same expressive power: they are both finite representations of the \emph{regular tree languages}.

Like word automata, tree automata (both BTAs and TTAs) can be deterministic (DBTAs and DTTAs) or non-deterministic, offering the classical trade-off, where deterministic automata are easier to reason about while non-deterministic ones are more concise.
This situation has motivated the study of techniques to reduce the number of states of deterministic automata~\cite{abdulla2007bisimulation,almeida2016reduction,hogberd2009backward} as well as methods for building deterministic automata that are \emph{minimal} in the number of states~\cite{brainerd68minimalization,Bjorklund2016Taxonomy,nivat1997minimal}.
For both word and tree automata the minimal deterministic automaton is unique (up to isomorphisms).

Unlike word automata, where every regular language is definable by a deterministic automaton, there exist regular tree languages that cannot be defined by a deterministic TTA (equivalently, they cannot be defined by a co-deterministic BTA).
It turns out that DTTAs and co-deterministic BTAs (co-DBTAs) define a subclass of regular tree languages called \emph{path-closed languages}~\cite{Viragh81}.

In this paper, we address a Brzozowski's style algorithm for minimizing TAs~\cite{Bjorklund2016Taxonomy}.
That is, given a BTA or a TTA defining the language \(L\), the algorithm combines a co-determinization and a determinization operation to build the minimal DBTA or DTTA for \(L\).
In this sense, the algorithm works in the same fashion as the classical Brzozowski's algorithm~\cite{brzozowski1962canonical} (also known as the double-reversal method) for finding the minimal deterministic word automaton for a given regular language.
Note that in the tree case, the method only applies to tree automata generating path-closed tree languages since it requires the construction of a co-DBTA or a DTTA for the given language.

Brzozowski's algorithm relies on the fact that determinizing a co-deterministic word automaton returns the minimal deterministic automaton.
The reason why this method is also coined as the \emph{double-reversal} method is that it builds a co-deterministic automaton for the input language by combining a reverse, determinization and reverse operation.

\subsection{Contributions}%
\label{par:contributions}

We study Brzozowski's minimization algorithm for tree automata from the perspective of \emph{congruences} on trees.
In this sense, we build on work by Ganty et al.~\cite{ganty2019congruence} who applied congruences on words to the study of word automata minimization techniques.
In this work, we use congruences of finite index on \emph{trees} and \emph{contexts}.
The latter are trees for which exactly one node is labeled by a distinguished symbol of arity \(0\).

Concretely, we use so-called \emph{upward congruences}~\cite{kozen1993MyhillTrees,tata2007} over trees, which are equivalences on trees that behave well w.r.t. the concatenation of symbols on top of the tree; and \emph{downward} congruences~\cite{nivat1997minimal}, i.e., equivalences on contexts that behave well when concatenating contexts on the leaves.
Given a tree language \(L\) and a congruence satisfying some required properties, we show how to build deterministic and co-deterministic automata defining \(L\).

We leverage two kinds of congruences: \emph{language-based} congruences, defined in terms of a regular tree language, and \emph{automaton-based} congruences, relative to an automaton.
Hence we show how to use upward automata-based congruences to construct DBTAs and DTTAs, which are isomorphic to those obtained with the standard subset constructions, while using the upward language-based congruences results in minimal DBTAs and DTTAs.
We also show that a similar reasoning holds for downward congruences and (minimal) co-DBTAs and co-DTTAs.

While upward congruences allows us to build deterministic BTAs, we observe that downward congruences must satisfy an extra condition to guarantee that they yield BTAs that are \emph{co-deterministic}.
Our first contribution is to identify the class of congruences satisfying this condition, which we coin \emph{strongly downward congruences} (Definition~\ref{def:strong_downward_congruence}).
Roughly speaking, since not every tree language has a finite representation in the form of a co-DBTA (recall that only path-closed languages do), the so-called strongly downward congruences attempt to capture the notion of path-closedness in their definition.

Secondly, unlike the language-based upward~\cite{kozen1993MyhillTrees,tata2007} and downward congruences~\cite{nivat1997minimal}, which have been studied previously in the context of TA minimization; to the best of our knowledge, our automata-based congruences are novel.
For this purpose, \emph{we define an operator \(\post(\cdot)\) for TAs} (Definition~\ref{def:postpre}), in the same fashion to the existing one for word automata, and we use it to define the automata-based upward congruence (Definition~\ref{def:automataEquivalences}).
Analogously, \emph{we introduce an operator \(\pre(\cdot)\)} (Definition~\ref{def:postpre}) which allows us to define its downward counterpart.
While the definition of \(\post(\cdot)\) is a straightforward generalization of the word case, that is not the case of \(\pre(\cdot)\).
One more time, our \(\pre(\cdot)\) operator aims to capture the notion of path-closedness that is imposed by our goal of providing \emph{co-deterministic} BTA constructions.

Then, by chaining together the right constructions we obtain a simple \emph{proof of correctness of a Brzozowski's style algorithm} (Corollary~\ref{cor:RDRD}) producing the minimal DBTA for a given tree language.
This algorithm was first proposed by Björklund and Cleophas~\cite{Bjorklund2016Taxonomy}.
However, they did not include a proof, which, far from being conceptually new, could have resulted in several lines of technical details.
We believe that our proof does bring new insights in form of new non-trivial notions adapted from the classical word automata.

Finally, \emph{we generalize Brzozowski's algorithm} similarly to what Brzozowski and Tamm~\cite{brzozowski2014theory} have done for the case of automata over words.
More precisely, we give a sufficient and necessary condition on BTAs such that their determinization produces the minimal DBTA (Theorem~\ref{theorem:minimalifreverseatomic}).
This condition lifts the limitation to path-closed languages all the way up to regular tree languages.

In the main part of the document we only consider BTAs.
In Appendix~\ref{appendix:TTA} we adapt our results to TTAs enabled by a “reversal” operation on tree automata.

\section{Preliminaries}\label{sec:preliminaries}
We write \(\mathbb{N}\) for the natural numbers and \(\mathbb{N}_{+}\) for \(\mathbb{N}\setminus \{ 0 \}\).
A tree domain is a finite set of sequences over \(\mathbb{N}_+\) describing a tree structure.
Formally, a \emph{tree domain} \(D\) is a finite non-empty set \(D \subseteq (\mathbb{N}_{+})^*\) such that for each \(v \in (\mathbb{N}_{+})^*\), \(n \in \mathbb{N}_{+}\):
\begin{enumerate}
\renewcommand\labelenumi{\theenumi}
\renewcommand{\theenumi}{(\roman{enumi})}
\item if \(v\concat n \in D\) then \(v\in D\), i.e., \(D\) is prefix-closed, and
\item if \(v\concat n \in D\) and \(n>1\) then \(v\concat (n{-}1) \in D\).\label{def:TreeDomainChildrenSeq}
\end{enumerate}
The elements of \(D\) are called \emph{nodes} and every tree domain contains a node \(\varepsilon\) called the \emph{root}.
For example, \(D = \{\varepsilon, 1, 2, 3, 3\concat1, 3 \concat 2\}\) is a tree domain, while \(D' = \{\varepsilon, 1,2,3, 1\concat2, 1\concat3\}\) is not.
Note that \(D'\) does not satisfy condition~\ref{def:TreeDomainChildrenSeq} since \(1\concat1 \notin D'\).

Given an \emph{alphabet}, i.e., a finite non-empty set of \emph{symbols}, a \emph{tree} is a total function that maps nodes onto symbols.
Formally, given an alphabet \(A\) and a tree domain \(D \subseteq (\mathbb{N}_{+})^*\), an \(A\)-labeled \emph{tree} is a function \(t\colon D \to A\).
We use \(\dom(t)\) to denote the tree domain of a tree \(t\), and \(t(v)\) to denote the label of a node \(v \in \dom(t)\).

The rank of a node is the number of its children.
Formally, given a tree \(t\), the \emph{rank} of \(v \in\dom(t)\), denoted \(\rank{v}_t\), is \(\len{ \{ k\in\mathbb{N}_+ \mid  v\concat k \in \dom(t)\}}\).
Nodes of rank \(0\) are \emph{leaves}, and \(\leaf(t)\) is the set of leaves of \(t\).
We say that a tree is \emph{monadic} if{}f \(\forall v \in \dom(t) \colon \rank{v}_t \leq 1\).
Note that, in the context of word automata, monadic trees over an alphabet \(A\) can be interpreted as words over \(A\).

The \emph{height} of a tree \(t\) is defined as \(\height(t)\udiffg 1 {+} \max \{\len{v} \mid v \in \dom(t) \}\), where \(\len{v}\) denotes the \emph{length} of node \(v\) when interpreted as a sequence in \((\mathbb{N}_{+})^*\).

Given a tree \(t\) with root \(v \in \dom(t)\) the \emph{subtree} \(t'\) rooted at \(v\) is such that \(\dom(t') = \{u \in (\mathbb{N}_{+})^* \mid v\concat u \in \dom(t)\}\) and \(t'(u) = t(vu)\), for every \(u\in\dom(t')\).
Let \(t(v) = f\) for some \(v \in \dom(t)\) and let \(t_i\), with \(i = \range{\rank{v}_t}\), be the subtree of \(t\) at node \(v \concat i\).
Then we denote the subtree \(t'\) rooted at node \(v\) as \(f[t_1, \ldots, t_{\rank{v}_t}]\).
Given a symbol \(a\), we often write \(a\), instead of \(a[\,]\), to describe the tree \(t = a[\,]\).
For instance, we write \(f[a,b]\) instead of \(f[a[\,],b[\,]]\).

\begin{example}
\label{ex:running}
We describe our running example for the next definitions.
Let \(\tilde{t}\colon D \to A\) be the tree shown below, defined as \(\tilde{t} \udiffg  f[a,b,g[a,c]]\).
In this case, \(A = \{a,b,c,g,f\}\) and \(\dom(\tilde{t}) = D =  \{\varepsilon,1,2,3,3\concat 1,3\concat 2\}\) with \(\tilde{t}(\varepsilon) = f\), \(\tilde{t}(1) = \tilde{t}(3\concat 1) = a\), \(\tilde{t}(2) = b\), \(\tilde{t}(3) = g\) and \(\tilde{t}(3\concat2) = c\).
Thus, \(\rank{1}_{\tilde{t}} = \rank{2}_{\tilde{t}} = \rank{3\concat 1}_{\tilde{t}} = \rank{3\concat2}_{\tilde{t}} = 0\), and \(\leaf(\tilde{t}) = \{1, 2, 3\concat 1, 3\concat2\}\).
On the other hand, \(\rank{\varepsilon}_{\tilde{t}} =3\) and \(\rank{3}_{\tilde{t}} = 2\).
Finally, the height of \(\tilde{t}\) is \(\height(\tilde{t}) = 3\).
Figure~\ref{fig:tree} shows a depiction of the tree \(\tilde{t}\).
\end{example}

\begin{figure}[!ht]
\noindent\centering\tikz[style={
    level distance=30pt,
    sibling distance=4pt,
    frontier/.style={distance from root=60pt}}]{
\Tree
[.{\(f\)}
  \edge node[auto=right,pos=.5] {\tiny{$1$}};
  [.{\(a\)} ]
  \edge node[auto=left,pos=.2] {\tiny{$2$}};
  [.{\(b\)} ]
  \edge node[auto=left,pos=.6] {\tiny{$3$}};
  [.{\(g\)}
  \edge node[auto=right,pos=.6] {\tiny{$1$}};
  [.{\(a\)} ]
  \edge node[auto=left,pos=.6] {\tiny{$2$}};
  [.{\(c\)} ]]
]
}\vspace*{-1mm}
\caption{Depiction of the tree \(\tilde{t}\) from Example~\ref{ex:running}.}\label{fig:tree}%
\end{figure}

Next we introduce \emph{ranked trees} building upon \emph{ranked alphabets}.
A \emph{ranked alphabet} \(A\) is an alphabet partitioned into pairwise disjoint subsets \(A = \bigcup_{k \in \mathbb{N}} A_k\) where \(A_k\) are the symbols of rank \(k\).
Given \(f \in A\), the unique index \(k\) such that \(f\in A_k\) is the \emph{rank} of \(f\) and we denote it by \(\rank{f}\).
The rank of a ranked alphabet \(A\) is the greatest index \(k\) such \(A_k \neq \varnothing\).

Given a ranked alphabet \(A\), let \(\cT_{A}\) denote the set of all \(A\)-labeled \emph{ranked trees} such that, in every tree \(t\in\cT_{A}\), the rank of every node \(v \in \dom(t)\) coincides with the rank of \(t(v)\), i.e., \(\forall v \in \dom(t) \colon  \rank{v}_t = \rank{t(v)}\).
For an unranked alphabet \(A\), let \(\cT_A\) denote the set of all \(A\)-labeled (\emph{unranked}) trees.
In Example~\ref{ex:running}, \(\tilde{t} \in \cT_A\) is a ranked tree with \(A = A_0 \cup A_1 \cup A_2 \cup A_3\) where \(A_0 = \{a,b,c\}\), \(A_1=\varnothing\), \(A_2 = \{g\}\) and \(A_3 = \{f\}\).

\medskip
A \emph{tree language} over an alphabet \(A\) is a subset \(L \subseteq \cT_{A}\).
Define the \emph{path language} of a tree \(t \in \cT_A\), denoted by \(\pi(t)\), as the subset of \(A(\mathbb{N}_{+}A)^*\) given by:
\[\pi(t) \udiffg %
  \begin{cases}
    \{f\} & \text{if  \(t = f[\,]\) }\\
    \bigcup_{i = 1}^{\rank{f}} \{f\concat i\concat w \mid w \in \pi(t_i)\} & \text{if \(t = f[t_1,\ldots,t_{\rank{f}}]\)}
  \end{cases} \enspace .
\]
The \emph{path language of a tree language} \(L\) is defined as \(\pi(L) \udiffg  \bigcup_{t \in L} \pi(t)\).
A tree language \(L \subseteq \cT_A\) is \emph{path-closed} if{}f \(\{t\in \cT_A \mid \pi(t) \subseteq \pi(L)\} = L\).
For example, the path language of tree \(\tilde{t}\) is \(\pi(\tilde{t}) = \{f1a, f2b, f3g1a, f3g2c\}\).
On the other hand, the path language of the tree language \(L_1 \udiffg  \{f[a,b], f[b,a]\}\) is \(\pi(L_1) = \{f1a, f2b, f1b, f2a\}\) and it is not path-closed since \(f[a,a] \notin L_1\) and \(\pi(f[a,a]) = \{f1a, f2a\} \subseteq \pi(L_1)\); while \(L_2 \udiffg  \{f[a,b], f[b,a], f[a,a], f[b,b]\}\) satisfies that \(\pi(L_2) = \pi(L_1)\) and it is path-closed.

\subsection{Contexts and substitution}

Let \(A\) be a ranked alphabet and let \(\cX \notin A\) be a special symbol with \(\rank{\cX} = 0\).
A \emph{context} over \(A\) is a tree \(t \in \cT_{A\cup \{\cX\}}\) such that \(t(v) = \cX\) for exactly one node \(v\in\dom(t)\) that we call the \emph{pivot} of \(t\) and that we denote \(\piv(t)\).
Note that if \(A\) is a ranked alphabet, so is \(A \cup \{\cX\}\).
We define the \(\cX\)-height of a context \(x \in \cC_A\), denoted \(\height_{\cX}(x)\) as follows \(\height_{\cX}(x) \udiffg  1 + \len{\piv(x)}\).

The set of all contexts over an alphabet \(A\) is denoted by \(\cC_A\).
Since contexts are trees defined over a special alphabet, all notions defined for trees, including the ones we introduce next, apply to contexts.
For clarity, we typically use \(t,r\) to denote trees and \(x,y\) for contexts when the distinction is important.

We define a substitution operator for trees as follows.
Let \(t,t' \in \cT_A\) and let \(v \in \dom(t)\).
The \emph{tree substitution} \(\subs{t}{t'}_v\) is the result of replacing the subtree rooted at node \(v\) in \(t\) with \(t'\).
Formally, \(\subs{t}{t'}_v(u)=t(u)\) for all \(u\in\dom(t)\setminus v\concat (\mathbb{N}_+)^*\) and \(\subs{t}{t'}_v(v\concat u)=t'(u)\), for all \(u \in \dom(t')\).
We omit the subindex \(v\) from \(\subs{t}{t'}_v\) when \(t\) is a context and \(v=\piv(t)\).
For instance, recall the tree \(\tilde{t}=f[a, b, g[a,c]]\) from Example~\ref{ex:running} and let \(\tilde{r} = h[a, b, c]\).
We have \(\subs{\tilde{t}}{\cX}_2 = f[a, \cX, g[a,c]]\) and \(\subs{\tilde{t}}{\tilde{r}\,}_{31} = f[a, b, g[h[a, b, c],c]]\).
Note that \(\cX[\,]\) satisfies \(\subs{\cX}{t} = t\), for every \(t \in \cT_{A}\), and thus it is called the \emph{identity} context.
For monadic trees, tree substitution coincides with word concatenation, where contexts correspond to prefixes of words.
In particular, \(\cX[\;]\) corresponds to \(\varepsilon\), the empty word.

\begin{definition}[Upward and Downward Quotients]
Given a tree language \(L \subseteq \cT_A\), a tree \(t\in\cT_A\), we define the \emph{upward quotient} of \(L\) w.r.t. \(t\) as \(L\,t^{-1} \udiffg  \{x \in \cC_{A} \mid \subs{x}{t} \in L\}\).
Similarly, we define the \emph{downward quotient} of \(L\) w.r.t. a context \(x \in \cC_A\) as \(x^{-1}L \udiffg  \{t \in \cT_{A} \mid \subs{x}{t} \in L\}\).
\end{definition}
Observe that \(t \in x^{-1} L \Lra x \in Lt^{-1}\).
It is worth remarking that, in the monadic case, \(Lt^{-1}\) and \(x^{-1}L\)  correspond to the so-called \emph{right quotient} of \(L\) by \(t\) and \emph{left quotient} of \(L\) by \(x\), respectively, where \(t\) and \(x\) are words over \(A\).

\subsection{Bottom-up tree automata}\label{sec:bta}
\begin{definition}[Bottom-up tree automaton]
A \emph{bottom-up tree automaton} (BTA) is a tuple \(\TA = \tuple{Q, \Sigma, \delta, F}\) where \(Q\) is a finite set of \emph{states}; \(\Sigma\) is a ranked alphabet of rank \(n\); \(\delta\colon  \bigcup_{i=0}^n \Sigma_i\times Q^i \rightarrow \wp(Q)\) is the partial \emph{transition function}, where \(Q^i\) denotes the set of \(i\)-tuples of elements in \(Q\) and \(\wp(Q)\) denotes the powerset of \(Q\); and \(F \subseteq Q\) is the set of \emph{final states}.
\end{definition}
We denote a tuple \((f,q_1,\ldots,q_{\rank{f}}) \in \Sigma_{\rank{f}}\times Q^{\rank{f}}\) as \(f[q_1,\ldots,q_{\rank{f}}]\).
Note that if \(\rank{f} = 0\), we denote the singleton tuple \((f) \in \Sigma_0 \times Q^0\) as \(f[\,]\).
We extend \(\delta\) to sets \(S\) of tuples as \(\delta(S) \udiffg  \bigcup_{t \in S} \delta(t)\) and define the set of \emph{initial states} of a BTA \(\TA\) as \(\initials(\TA) \udiffg  \{q \in Q \mid \exists a \in \Sigma_0 \colon q \in \delta(a[\,])\}\).

A BTA is \emph{deterministic} (DBTA for short) if{}f
every set of states in the image of \(\delta\) is a singleton or is empty.
Similarly, a BTA is \emph{co-deterministic} (co-DBTA for short) if{}f \(F\) is a singleton and for each \(q \in Q\) and \(f \in \Sigma_n\), with \(n \geq 1\), we have: %
if \(q\in\delta(f[q_1,\ldots,q_{\rank{f}}])\) and \(q\in\delta(f[q'_1,\ldots,q'_{\rank{f}}])\) then \(q_i=q'_i\) for each \(i=\range{\rank{f}}\).

We define the \emph{move} relation on a BTA, denoted by \(\xrightarrow{}_{\TA} \in  \cT_{\Sigma \cup Q} \times \cT_{\Sigma \cup Q}\) as follows.
Let \(t = \subs{x}{f[t_1,\ldots,t_{\rank{f}}]}\) and \(t' = \subs{x}{q[t_1,\ldots,t_{\rank{f}}]}\) for some \(x \in \cC_{\Sigma\cup Q}, f \in \Sigma\) and \(t_1,\ldots,t_{\rank{f}}\!\in\!\cT_{Q}\).
Then, \(t \ra_{\TA} t' \udiff q \in \delta(f[t_1(\varepsilon),\ldots,t_{\rank{f}}(\varepsilon)])\).
We use \(\xrightarrow{}^*_{\TA}\) to denote the reflexive and transitive closure of \(\xrightarrow{}_{\TA}\).
The \emph{language} defined by \(\TA\) is \(\lang{\TA} \udiffg  \{t \in \cT_{\Sigma} \mid \exists t' \in \cT_{Q} \colon  t'(\varepsilon) \in F \land t \to^*_{\TA} t'\}\).
Intuitively, a run of a BTA on a tree \(t\in\cT_\Sigma\) relabels the nodes of the \(t\) starting with its leaves and makes its way up towards the root of \(t\) as prescribed by the move relation.
The run accepts when the root of \(t\) is labelled with an accepting state of the BTA.

\begin{remark}
\label{remark:path-closed-languages}
BTAs and DBTAs define the class of \emph{regular tree languages}, while co-DBTAs (which are equivalent to deterministic top-down automata, in the sense that they accept the same class of tree languages, as we shall see in Appendix~\ref{sec:appendix-reverseTAs}) define the subclass of path-closed tree languages~\cite{Viragh81}.
It is decidable whether the language of a BTA is path-closed~\cite{tata2007}.
\end{remark}
\begin{example}\label{example:BTA}
Let \(\TA = \tuple{Q,\Sigma_0\cup \Sigma_2,\delta,F}\) be a BTA with \(Q = \{q_0,q_1\}\), \(\Sigma_0 =\{\true,\false\}\), \(\Sigma_2= \{\AND,\OR\}\), \(\{q_0\}= \delta(\AND[q_0,q_1]) = \delta(\AND[q_0,q_0]) = \delta(\AND[q_1,q_0]) = \delta(\OR[q_0,q_0]) = \delta(\false[\,])\), \(\{q_1\} = \delta(\AND[q_1,q_1]) = \delta(\OR[q_1,q_0]) = \delta(\OR[q_0,q_1]) = \delta(\OR[q_1,q_1]) = \delta(\true[\,])\) and \(F = \{q_1\}\).

\medskip
Note that \(\lang{\TA}\) is defined as the set of all trees of the form \(t \in \cT_{\Sigma_0 \cup \Sigma_2}\) which yield to propositional formulas, over the binary connectives \(\land\) and \(\lor\) and the constants \(\true\) (true) and \(\false\) (false), that evaluate to \(\true\).
For instance, the following is a sequence of moves from a tree \(t \in \cT_{\Sigma_0 \cup \Sigma_2}\) such that \(t\in \lang{\TA}\).

\centering
\tree{15}{30}{
\Tree
[.{\AND}
  [.{\OR} {\true} {\false} ]
  [.{\true} ]
]
}%
\tmove{15}\hspace{-3pt} %
\tikz{\node at (0,0) {};
\node at (0,15pt) {\(\ldots\)};}%
\tmove{15}\hspace{-7pt}%
\tree{15}{30}{
\Tree
[.{\AND}
  [.{\OR} {\(q_1\)} {\(q_0\)} ]
  [.{\(q_1\)} ]
]
}%
\tmove{15}\hspace{-7pt}\vspace{7pt}%
\tree{15}{30}{
\Tree
[.{\AND}
  [.{\(q_1\)} {\(q_1\)} {\(q_0\)} ]
  [.{\(q_1\)} ]
]
} %
\hspace{-5pt}\tmove{15}\hspace{-7pt}%
\tree{15}{30}{
\Tree
[.{\(q_1\)}
  [.{\(q_1\)} {\(q_1\)} {\(q_0\)} ]
  [.{\(q_1\)} ]
]
}\tdot{15}

Note that  \(\subs{\AND[\cX,q_1]}{\OR[q_1,q_0]} \rightarrow_{\TA} \subs{\AND[\cX,q_1]}{q_1[q_1,q_0]}\) as \(q_1 \in \delta(\OR[q_1,q_0])\).
Observe that \(\TA\) is deterministic but not co-deterministic, since \(\{q_0\}= \delta(\AND[q_0,q_1]) = \delta(\AND[q_1,q_0])\), for instance.
Finally, the language \(\lang{\cA}\) is not path-closed since \(\lor[\false,\false] \notin \lang{\TA}\) but \(\pi(\lor[\false,\false]) = \{\lor1\false, \lor2\false\} \subseteq \pi(\lang{\TA})\).\eox
\end{example}

For each \(q \in Q\) and \(S \subseteq Q\), define the \emph{upward} and \emph{downward language} of \(q\) w.r.t. \(S\), denoted by \(\mathcal{L}^{\TA}_{\ua}(q,S)\) and \(\mathcal{L}^{\TA}_{\da}(q,S)\), respectively, as follows:
\begin{equation*}%
\begin{aligned}
  \mathcal{L}^{\TA}_{\ua}(q,S) &\udiffg  \{ c \in \cC_{\Sigma} \mid \exists t' \in \cT_Q \colon  \subs{c}{q} \to^*_{\TA} t', t'(\varepsilon) \in S \}\\
  \mathcal{L}^{\TA}_{\da}(q,S) &\udiffg  \{ t \in \cT_{\Sigma} \mid \exists t' \in \cT_Q \colon  t \to^*_{\TA} t', t'(\varepsilon) = q, \ell(t') \subseteq S \}
  \enspace.
\end{aligned}
\end{equation*}
We will simplify the notation and write \(\mathcal{L}^{\TA}_{\ua}(q)\) when \(S = F\), and \(\mathcal{L}^{\TA}_{\da}(q)\) when \(S = \initials(\TA)\).
Also, we will drop the superscript \(\TA\) when the BTA \(\TA\) is clear from the context.
Note that, in the monadic case,  \(\mathcal{L}^{\TA}_{\ua}(q)\) corresponds to the so-called \emph{right language} of state \(q\), i.e., the set of words that can be read from \(q\) to a final state of the corresponding word automaton; while \(\mathcal{L}^{\TA}_{\da}(q)\) corresponds to the \emph{left language} of \(q\), i.e., the set of words that can be read from an initial state of the corresponding word automaton to state \(q\).
When generalizing to trees we have that \(\mathcal{L}_{\ua}(q)\) is the set of contexts such that the result of being processed by \(\TA\) (starting from state \(q\) instead of \(\cX\)) have their root labelled with a final/accepting state.
Similarly, \(\mathcal{L}_{\da}(q,S)\) is the set of trees such that their processing by \(\TA\) return trees with \(q\) as root and leaves labelled by \(S\).
Finally, it is easy to see that \(\lang{\TA} = \bigcup_{q \in F} \mathcal{L}_{\da}(q) \) for every BTA \(\TA\).

Given a ranked alphabet \(\Sigma\), \(f \in \Sigma\) and \(T_1,\ldots,T_{\rank{f}} \subseteq \cT_\Sigma\), let \(f[T_1,\ldots,T_{\rank{f}}] \udiffg  \{f[t_1,\ldots,t_{\rank{f}}] \mid t_i \in T_i, i=\range{\rank{f}}\}\).
This notation allows us to give an inductive characterization of downward languages of a BTA (which will be useful in the proofs of Lemmas~\ref{HBPreservesL} and~\ref{HTPreservesL}).

\begin{lemma}
\label{DownwardLanguages}
Let \(\cA = \tuple{Q,\Sigma,\delta,F}\) be a BTA.
\mbox{For every \(q \in Q\):}
\[
\mathcal{L}_{\da}(q) = \{f[\mathcal{L}_{\da}(q_1),\ldots,\mathcal{L}_{\da}(q_{\rank{f}})] \mid \exists f \!\in\! \Sigma \colon  q \!\in\! \delta(f[q_1,\ldots,q_{\rank{f}}])\} \enspace .
\]
\end{lemma}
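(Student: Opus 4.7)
The plan is to prove the equality by double inclusion, after unfolding the definitions. Recall $\mathcal{L}_{\da}(q)$ collects those $t \in \cT_{\Sigma}$ that rewrite, via $\to^*_{\TA}$, to some $t' \in \cT_Q$ with $t'(\varepsilon)=q$ and $\ell(t') \subseteq \initials(\TA)$. The set on the right-hand side collects trees whose root is some symbol $f$, whose children belong respectively to some $\mathcal{L}_{\da}(q_i)$, and where the transition $q \in \delta(f[q_1,\ldots,q_{\rank{f}}])$ is available.

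For the inclusion $\supseteq$, I would start with any $t=f[t_1,\ldots,t_{\rank{f}}]$ with $t_i \in \mathcal{L}_{\da}(q_i)$ and $q \in \delta(f[q_1,\ldots,q_{\rank{f}}])$. By definition of $\mathcal{L}_{\da}(q_i)$, each $t_i \to^*_{\TA} t'_i$ with $t'_i(\varepsilon)=q_i$ and $\ell(t'_i)\subseteq \initials(\TA)$. Applying these subtree rewrites successively inside the contexts $f[\cX,t_2,\ldots,t_{\rank{f}}]$, $f[t'_1,\cX,\ldots,t_{\rank{f}}]$ and so on, one obtains $t \to^*_{\TA} f[t'_1,\ldots,t'_{\rank{f}}]$. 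One more application of the move relation at the root, legal precisely because $q \in \delta(f[q_1,\ldots,q_{\rank{f}}])$, yields $q[t'_1,\ldots,t'_{\rank{f}}]$, whose root is $q$ and whose leaves are $\bigcup_i \ell(t'_i) \subseteq \initials(\TA)$. The degenerate case $\rank{f}=0$ is included directly: $f[\,] \to_{\TA} q[\,]$ with $q \in \delta(f[\,]) \subseteq \initials(\TA)$.

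For the inclusion $\subseteq$, take $t \in \mathcal{L}_{\da}(q)$ with witnessing derivation $t \to^*_{\TA} t'$. The main technical point, and the only mild obstacle, is to argue that this derivation can be re-structured so that the rewrite at the root is the last step, preceded by derivations purely inside the subtrees. The key observation is that each rewrite step of $\to_{\TA}$ replaces a node labelled by some $f \in \Sigma$ whose children already lie in $\cT_Q$; hence, once the root becomes labelled by a state, no further rewrites can touch that subtree (rewrites apply only to $\Sigma$-nodes, and the subtrees that existed at the moment of the root rewrite are already in $\cT_Q$). Writing $t = f[t_1,\ldots,t_{\rank{f}}]$, the derivation therefore splits as $t_i \to^*_{\TA} t'_i$ for each $i$, followed by a final move $f[t'_1,\ldots,t'_{\rank{f}}] \to_{\TA} q[t'_1,\ldots,t'_{\rank{f}}] = t'$. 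Setting $q_i \ud t'_i(\varepsilon)$, the move relation forces $q \in \delta(f[q_1,\ldots,q_{\rank{f}}])$, and since $\ell(t') = \bigcup_i \ell(t'_i) \subseteq \initials(\TA)$, we get $t_i \in \mathcal{L}_{\da}(q_i)$, as required.

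The only subtlety worth writing carefully is the reordering argument in the second inclusion; one can make it rigorous either by induction on the length of the derivation (peeling off rewrites not at the root and handing them to the appropriate subtree) or by an explicit commutation lemma showing that independent rewrites can be permuted. Either way, the proof is essentially a careful inspection of the rewrite system and does not require any new idea beyond the definitions.
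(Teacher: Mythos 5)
Your proposal is correct and follows essentially the same route as the paper's proof: both directions are obtained by unfolding the definition of \(\mathcal{L}_{\da}(q)\) and composing/decomposing the run at the root. In fact you are more explicit than the paper on the only delicate point --- that a run on \(f[t_1,\ldots,t_{\rank{f}}]\) necessarily performs the root move last (since the move relation requires all children subtrees to lie in \(\cT_Q\)) and hence restricts to independent runs on the subtrees --- which the paper's chain of biconditionals passes over silently.
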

\begin{proof}
Let \(\cA = \tuple{Q,\Sigma,\delta,F}\) be a BTA and let \(f \in \Sigma\) and \(q,q_1,\ldots,q_{\rank{f}} \in Q\) be such that \(q \in \delta(f[q_1,\ldots,q_{\rank{f}}])\).
Then,
\begin{align*}
  \forall i \in \{\range{\rank{f}}\} \colon  t_i \in \mathcal{L}_{\da}(q_i) & \Lra \\
  \forall i \in \{\range{\rank{f}}\}  \colon  \exists t_i' \in \cT_Q \colon  t_i'(\varepsilon) = q_i,\,t_i \to^*_{\cA} t'_i & \Lra\\
\exists t' \in \cT_Q \colon  t'(\varepsilon) = q,\,f[t_1,\ldots,t_{\rank{f}}] \to^*_{\cA} t' & \Lra \\
f[t_1,\ldots,t_{\rank{f}}] \in \mathcal{L}_{\da}(q) \enspace .
\end{align*}
Note that the first and last double-implication hold by definition of downward language of a state w.r.t. \(\initials(\TA)\).
The second implication holds since, by H.I., \(q \in \delta(f[q_1,\ldots,q_{\rank{f}}])\).
\end{proof}

\begin{example}
In Example~\ref{example:BTA}, \(\mathcal{L}_{\ua}(q_0) \neq \mathcal{L}_{\ua}(q_1)\).
In fact, \(\cX[\;] \in \mathcal{L}_{\ua}(q_1)\), since \(q_1 \in F\) while \(\cX[\;] \notin \mathcal{L}_{\ua}(q_0)\), since \(q_0 \notin F\).

On the other hand, \(\false[\,] \in \mathcal{L}_{\da}(q_0)\) and \(\true[\,] \in \mathcal{L}_{\da}(q_1)\).
Finally, consider the set of trees \(\land[\mathcal{L}_{\da}(q_1)$, $\mathcal{L}_{\da}(q_1)]\).
The reader may check that \(\land[\mathcal{L}_{\da}(q_1), \mathcal{L}_{\da}(q_1)] \subseteq \mathcal{L}_{\da}(q_1) \) and  \(\land[\mathcal{L}_{\da}(q_1), \mathcal{L}_{\da}(q_1)] \nsubseteq \mathcal{L}_{\da}(q_0)\).
In fact, \(\mathcal{L}_{\da}(q_0) \cap \mathcal{L}_{\da}(q_1) = \varnothing\) as \(\TA\) is deterministic.
\eox
\end{example}

A state \(q \in Q\) of a BTA is \emph{unreachable} (resp.\ \emph{empty}) if{}f its downward (resp.\ upward) language is empty.
The \emph{minimal} DBTA for a regular tree language is the DBTA with the least number of states which is unique modulo isomorphism.
Let \(\TA \equiv \TA'\) denote that two BTAs \(\TA\) and \(\TA'\) are isomorphic.

Given a BTA \(\TA = \tuple{Q, \Sigma, \delta, F}\), the \emph{bottom-up determinization} builds a DBTA $\cD \udiffg  \langle \wp(Q), \Sigma$, $\delta', F'\rangle$ where \(\delta'(f[R_1,\ldots,R_{\rank{f}}]) \udiffg  \{q \mid \exists q_1 \!\in\! R_1,\ldots,q_{\rank{f}} \!\in\! R_{\rank{f}} \colon  q \!\in\! \delta(f[q_1,\ldots,q_{\rank{f}}])\}\) for each \(f \in \Sigma\), \(R_i \in \wp(Q)\) and \(i=\range{\rank{f}}\),
and \(F' \udiffg  \{R \in \wp(Q) \mid R \cap F \neq \varnothing\}\), such that \(\lang{\cD} = \lang{\TA}\)~\cite{cleophas2008tree}.
We denote \({\cA^{\mathsf{D}}}\) the result of applying the bottom-up determinization to \(\TA\) and removing unreachable states.

Given a BTA \(\TA = \tuple{Q, \Sigma, \delta, F}\), the \emph{bottom-up} \emph{co-determinization} builds a co-DBTA \(\mathcal{E} \udiffg  \langle \wp(Q)$, $\Sigma, \delta', F' \rangle$, where \(F' \udiffg  \{F\}\) and \(\delta'\) is defined as follows.
Given \(f \in \Sigma \setminus \Sigma_0\) and \(R \in \wp(Q)\), \(R \in \delta'(f[R_1,\ldots,R_{\rank{f}}])\), where
\(R_i \udiffg  \{q_i \in Q \mid \exists q \in R, q_1,\ldots,q_{i{-}1},q_{i{+}1},\ldots,q_{\rank{f}} \in Q \colon  q \in \delta(f[q_1,\ldots,q_i, \ldots, q_{\rank{f}}])\}\).
Moreover, for every \(a \in \Sigma_0\) and \(R \in \wp(Q)\) such that \(\exists q \in R \colon q \in \delta(a[\,])\), we have that \(R \in \delta'(a[\,])\).
Note that if \(\lang{\TA}\) is a path-closed language and \(\cA\) has no unreachable states then \(\lang{\mathcal{E}} = \lang{\TA}\)~\cite{cleophas2008tree}.
We denote \(\cA^{\mathsf{cD}}\) the result of applying the bottom-up co-determinization to \(\TA\) and removing empty states.
\begin{example}
\label{example:co-BTA}
Let us give an example of the bottom-up co-determinization by removing the symbol \(\AND\) from Example~\ref{example:BTA}.
Let \(\TA = \tuple{Q,\Sigma_0\cup \Sigma_2,\delta,F}\) be a BTA with \(Q = \{q_0,q_1\}\), \(\Sigma_0 =\{\true,\false\}\), \(\Sigma_2= \{\AND\}\), \(\{q_0\}= \delta(\AND[q_0,q_1]) = \delta(\AND[q_0,q_0]) = \delta(\AND[q_1,q_0]) = \delta(\false[\,])\), \(\{q_1\} = \delta(\AND[q_1,q_1]) =  \delta(\true[\,])\) and \(F = \{q_1\}\).

Note that \(\lang{\TA}\) is defined as the set of all trees of the form \(t \in \cT_{\Sigma_{0} \cup \Sigma_2}\) which yield to propositional formulas over the binary connective \(\land\) and the constants \(\true\) and \(\false\) that evaluate to \(\true\).
It is routine to check that \(\lang{\TA}\) is path-closed, hence there exists a co-DBTA defining \(\lang{\TA}\).

Since \(\TA\) has no unreachable states, we use the bottom-up co-determinization to build a co-DBTA \(\mathcal{E}\) such that \(\lang{\mathcal{E}} = \lang{\TA}\).
We obtain \(\mathcal{E} = \tuple{\wp(Q), \Sigma_0 \cup \Sigma_2 , \delta', F'}\) where \(\delta'\) is defined as \(\{q_1\} \in \delta'(\land[\{q_1\}, \{q_1\}])\) and  \(\{q_0\}, \{q_0, q_1\} \in \delta'(\land[\{q_0, q_1\}, \{q_0, q_1\}])\).
On the other hand, \(\{q_1\}, \{q_0, q_1\} \in \delta'(\true[\,])\) and \(\{q_0\},\) \(\{q_0, q_1\} \in \delta'(\false[\,])\).
Finally, let \(F' =  \{\{q_1\}\}\).

Note that states \(\{q_0\}\) and \(\{q_0, q_1\}\) are empty.
Thus, we would remove them (and the corresponding entries of the transition function \(\delta'\)) to build \(\cA^{\mathsf{cD}}\).\eox
\end{example}

\begin{remark}
Note that if \(\lang{\TA}\) is path-closed but \(\TA\) has unreachable states then we cannot guarantee that \(\mathcal{E}\), the BTA that results from applying the bottom-up co-determinization operation, is such that \(\lang{\TA}  =  \lang{\mathcal{E}}\).

For instance, consider the BTA \(\TA  = \tuple{Q, \Sigma_0 \cup \Sigma_2, \delta, F}\) from Example~\ref{example:co-BTA} and define
 $\TA' \langle Q' $, $\Sigma_0 \cup \Sigma'_2, \delta', F' \rangle$  by setting \(Q' \udiffg  Q \cup \{q_2\}\), \(\Sigma'_2 \udiffg  \Sigma_2 \cup \{\star\}\) and \(F' \udiffg  F \cup\{ q_2\}\).
Finally, define \(\delta'\) as the union of \(\delta\) and the following entries: \(\{q_2\} = \delta'(\star[q_1, q_2]) = \delta'(\star[q_2, q_1]) = \delta'(\star[q_2, q_2])\).
Note that \(q_2\) is an unreachable state and thus, \(\lang{\TA'} = \lang{\TA}\).
However, \(\lang{\TA'} \subset \lang{\mathcal{E}'}\).
In fact, let \(\mathcal{E}' = \tuple{\wp(Q'), \Sigma_0 \cup \Sigma'_2, \delta_{\mathcal{E}'}, F_{\mathcal{E}'}}\).
Then, the reader may check that, since \(\{q_1, q_2\} \in \delta_{\mathcal{E}'}(\true[\,])\), \(\{q_1, q_2\} \in \delta_{\mathcal{E}'}(\star[\{q_1,q_2\}, \{q_1, q_2\}])\) and \(F_{\mathcal{E}'} = \{q_1, q_2\}\), the tree \(\star[\true, \true] \in \lang{\mathcal{E}'}\), while \(\star[\true, \true] \notin \lang{\TA}\).
\end{remark}

\subsection{Equivalences and congruences}

Given a set \(X\), an \emph{equivalence relation} \(\mathord{\sim} \subseteq X\times X\) induces a \emph{partition} \(P_{\sim}\) of \(X\) given by a family \(P_{\sim} = \{B_i\}_{i\in \cI}\), with \(\cI \subseteq \mathbb{N}\), of pairwise disjoint subsets of \(X\) (called \emph{blocks} or \emph{equivalence classes}) the union of which is \(X\).
The partition \(P_{\sim}\) is said to be \emph{finite} when \(\sim\) has \emph{finite index}, i.e., \(\sim\) consists of finitely many equivalence classes.

Given \(t\in X\), let \( P_{\sim}(t) \udiffg \{r \in X \mid t \sim r \}\) be the unique block containing \(t\).
This definition can be extended in a natural way to a set \(S \subseteq X\) as \(P_{\sim}(S) = \bigcup_{t\in S}{P_{\sim}(t)}\).
Unless stated otherwise, we consider equivalence relations of finite index.

Given two equivalence relations \(\sim_1, \sim_2\), we say that \(\sim_1\) is \emph{finer than or equal to} \(\sim_2\) when \(\mathord{\sim_1} \subseteq \mathord{\sim_2}\).  Sometimes we also say that \(\sim_2\) is \emph{coarser than or equal to} \(\sim_1\).
Observe that \(\mathord{\sim_1} \subseteq \mathord{\sim_2}\) is equivalent to write \(\forall t,r \in X \colon t \sim_1 r \Ra t \sim_2 r\).

Analogously to the notion of left and right congruences on words~\cite{ganty2019congruence}, we introduce \emph{upward} and \emph{downward} congruences on trees and contexts, respectively.
Intuitively, upward congruences are equivalences on trees that behave well when substituting trees into contexts, while downward congruences are equivalences on contexts that behave well when substituting pivots for trees.

\begin{definition}[Upward and Downward congruences]
Given a ranked alphabet \(\Sigma\), an equivalence \(\ru\) on \(\cT_\Sigma\) is an \emph{upward congruence} if{}f for every \(t,r \in \cT_{\Sigma}\) and context \(c \in \cC_{\Sigma}\): \(t \ru r \Ra \subs{c}{t} \ru \subs{c}{r}\).\\
Similarly, an equivalence \(\rd\) on \(\cC_\Sigma\) is a \emph{downward congruence} if{}f for every \(x,y,c \in \cC_{\Sigma} \colon  x \rd y \Ra \subs{x}{c} \rd \subs{y}{c}\).
\end{definition}

In the sequel, we follow the convention that upward congruences are denoted with a \(\mathsf{u}\) superscript while downward congruences are denoted with a \(\mathsf{d}\) superscript.
Note that similar definitions of upward~\cite{kozen1993MyhillTrees,tata2007} and downward~\cite{nivat1997minimal} congruences have been previously proposed in the literature.
In the monadic case, an upward congruence means that two congruent words remain so after prefixing the same word to their left, formally,  \(u \sim v\) implies \( wu \sim wv\) for all \(w\).
Similarly, a downward congruence in the monadic case means that two congruent words remain so after appending the same word to their right, formally, \(u \sim v\) implies \( uw \sim vw\) for all \(w\).

\begin{example}
\label{ex:congruences}
Given a tree language \(L \subseteq \cT_{\Sigma}\), the congruence defined as \(t \sim r \udiff Lt^{-1} = Lr^{-1}\), for every \(t,r \in \cT_{\Sigma}\), is an upward congruence.
Let us prove this fact by contradiction.

Suppose that \(t \sim r\), for every \(t,r \in \cT_{\Sigma}\), but \(\subs{c}{t} \not\sim \subs{c}{r}\), for some \(c \in \cC_{\Sigma}\).
In other words, \(Lt^{-1} = \{x \in \cC_{\Sigma} \mid \subs{x}{t} \in L\} = \{x \in \cC_{\Sigma} \mid \subs{x}{r} \in L\} = Lr^{-1}\), but there exists \(y \in \cC_{\Sigma}\) such that \(\subs{y}{\subs{c}{t}} \in L\) and  \(\subs{y}{\subs{c}{r}} \notin L\).
It follows that, by defining \(x = \subs{y}{c}\), we have: \(\subs{x}{t} \in L\) and \(\subs{x}{r} \notin L\), which contradicts the fact \(Lt^{-1} = Lr^{-1}\).

This upward congruence is also known as the \emph{Myhill-Nerode relation for tree languages}~\cite{kozen1993MyhillTrees}.

Similarly, the congruence defined as \(x \sim y \udiff x^{-1}L = y^{-1}L\), for every \(x,y \in \cC_{\Sigma}\), is a downward congruence.
Lemma~\ref{NerodeStronglyDownward} proves this result.\eox
\end{example}

\section{Tree automata constructions from congruences}\label{sec:AutomataConstruction}

In this section we present two tree automata constructions that are built upon a given tree language and a congruence.
Generally speaking, upward congruences yield deterministic constructions while downward congruences yield co-deterministic ones.

First we introduce the deterministic construction.
Given a regular tree language \(L\) and an upward congruence \(\ru\) that preserves the equivalence between upward quotients, i.e., \(t \ru r \Ra Lt^{-1} = Lr^{-1}\) for every \(t,r \in \cT_{\Sigma}\), the following construction yields a DBTA defining exactly \(L\).

\begin{definition}[Automata construction \(\cH^{\mathsf{u}}(\ru, L)\)]\label{def:up_TA_construction}
Let \(\ru\) be an upward congruence and let \(L \subseteq \cT_{\Sigma}\).
Define the BTA \(\cH^{\mathsf{u}}(\ru,L) \udiffg \tuple{Q, \Sigma, \delta, F}\) where \(Q = \{ P_{\ru}(t) \mid t \in \cT_{\Sigma}\}\), \(F = \{P_{\ru}(t) \mid t \in L\}\) and the transition function \(\delta\) is defined as follows:
\begin{enumerate}
\renewcommand\labelenumi{\theenumi}
\renewcommand{\theenumi}{(\roman{enumi})}
\item for every \(a \in \Sigma_0\), \(\delta(a[\,])=\{ P_{\ru}(a[\,]) \}\), and
\item for every \(f\! \in \Sigma_n\), with \(n\!\geq\!1\), and trees \(t_1,... ,t_{\rank{f}},t \!\in \cT_\Sigma\): \(P_{\ru}(t)\! \in \delta(f[P_{\ru}(t_1),... , P_{\ru}(t_{\rank{f}})]) \) if{}f \(f[P_{\ru}(t_1),... , P_{\ru}(t_{\rank{f}})]\) \(\subseteq P_{\ru}(t)\).
\end{enumerate}
\end{definition}

\begin{remark}\label{remark:DBTA}
Note that \(\cH^{\mathsf{u}}(\ru, L)\) is \emph{well-defined} since we assume that \(\ru\) has finite index and is an upward congruence.
Moreover, \(\cH^{\mathsf{u}}(\ru, L)\) is a \emph{deterministic} BTA since for every \(f \in \Sigma\) and \(t_1, \ldots, t_{\rank{f}}, t \in \cT_{\Sigma}\) there exists \emph{exactly one} block \(P_{\ru}(t) \in Q\) such that \(f[P_{\ru}(t_1), \ldots, P_{\ru}(t_{\rank{f}})]\) \(\subseteq P_{\ru}(t)\).
Finally, observe that \(\cH^{\mathsf{u}}(\ru, L)\) might contain empty states but every state is reachable.
\end{remark}

\begin{lemma}
\label{HBPreservesL}
Let \(L \subseteq \cT_{\Sigma}\) be a tree language and let \(\ru\) be an upward congruence such that \(t\ru r \Ra L\,t^{-1} = L\,r^{-1}\) for every \(t,r \in \cT_\Sigma\).
Then \(\lang{\cH^{\mathsf{u}}(\ru,L)} = L\).
\end{lemma}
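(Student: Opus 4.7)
The plan is to show that the deterministic BTA $\cH^{\mathsf{u}}(\ru, L)$ processes a tree $t$ into exactly one state, namely the block $P_{\ru}(t)$ of $t$ under the congruence, and then to observe that $P_{\ru}(t)$ lies in $F$ precisely when $t \in L$. The first ingredient is structural in nature; the second one is where the hypothesis $t \ru r \Ra L t^{-1} = L r^{-1}$ is used.

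First, I would prove by induction on the height of $t \in \cT_\Sigma$ the statement: $t \in \mathcal{L}_\da^{\cA}(P_{\ru}(t))$, where $\cA = \cH^{\mathsf{u}}(\ru, L)$. For the base case $t = a[\,]$ with $a \in \Sigma_0$, clause (i) of Definition~\ref{def:up_TA_construction} gives $\delta(a[\,]) = \{P_{\ru}(a[\,])\}$, which is an initial state, so the conclusion is immediate. For the inductive step, with $t = f[t_1, \ldots, t_{\rank{f}}]$, the induction hypothesis yields $t_i \in \mathcal{L}_\da(P_{\ru}(t_i))$ for each $i$, and by Lemma~\ref{DownwardLanguages} it suffices to show $P_{\ru}(t) \in \delta(f[P_{\ru}(t_1), \ldots, P_{\ru}(t_{\rank{f}})])$. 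By clause (ii), this reduces to proving the containment $f[P_{\ru}(t_1), \ldots, P_{\ru}(t_{\rank{f}})] \subseteq P_{\ru}(t)$. Given $r_i \in P_{\ru}(t_i)$, successive applications of the upward congruence property (using the contexts $f[r_1, \ldots, r_{i-1}, \cX, t_{i+1}, \ldots, t_{\rank{f}}]$ for $i = 1, \ldots, \rank{f}$) chain the equivalences $r_i \ru t_i$ into $f[r_1, \ldots, r_{\rank{f}}] \ru f[t_1, \ldots, t_{\rank{f}}] = t$, as required.

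Next, I would leverage determinism. By Remark~\ref{remark:DBTA}, for each transition input there is a unique block making the containment hold, so $\delta$ is deterministic (and total on reachable inputs). Consequently the state $P_{\ru}(t)$ identified above is the only state that $t$ can be processed into. This gives the chain of equivalences: $t \in \lang{\cA}$ iff $P_{\ru}(t) \in F$ iff there exists $r \in L$ with $t \ru r$. The last step uses the definition $F = \{P_{\ru}(r) \mid r \in L\}$.

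It remains to close the loop using the hypothesis $t \ru r \Ra L t^{-1} = L r^{-1}$. If $t \in L$ then trivially $t \ru t$ with $t \in L$, so $P_{\ru}(t) \in F$. Conversely, if $P_{\ru}(t) \in F$ there is some $r \in L$ with $t \ru r$, whence $L t^{-1} = L r^{-1}$; since $r \in L$ means $\cX \in L r^{-1}$, we get $\cX \in L t^{-1}$, i.e., $t = \subs{\cX}{t} \in L$. The main obstacle is the inductive step of the first claim, specifically the chaining of the upward congruence to replace the $t_i$ by arbitrary representatives $r_i$ simultaneously; the rest is bookkeeping on the definitions of $\cH^{\mathsf{u}}$, $F$, and $\initials(\cA)$.
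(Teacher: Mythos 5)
Your proof is correct. It follows the paper's overall strategy — an induction on tree height showing that each tree is processed into its own congruence block, followed by the quotient argument ($\cX \in Lr^{-1} \Ra \cX \in Lt^{-1}$) to identify $F$-membership with membership in $L$ — but it differs in one genuine respect: the paper establishes the full equality $\mathcal{L}^{\cH}_{\da}(P_{\ru}(r)) = P_{\ru}(r)$ by proving \emph{both} inclusions by separate inductions, whereas you prove only the easier inclusion (every $t$ reaches the state $P_{\ru}(t)$) and then invoke determinism of $\cH^{\mathsf{u}}(\ru,L)$ (Remark~\ref{remark:DBTA}) to conclude that $P_{\ru}(t)$ is the \emph{only} state $t$ can reach, which yields the converse inclusion for free. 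This is a legitimate shortcut that saves one induction; it does rely on the (standard, but unstated) fact that a deterministic BTA with a total transition function assigns a unique complete run to each input tree, which you may want to make explicit. Your explicit chaining of the single-hole congruence property across the arguments $r_1,\ldots,r_{\rank{f}}$ to get $f[P_{\ru}(t_1),\ldots,P_{\ru}(t_{\rank{f}})]\subseteq P_{\ru}(t)$ is in fact more detailed than the corresponding step in the paper, which asserts it directly from ``$\ru$ is an upward congruence.''
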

\begin{proof}
To simplify the notation, we denote \(P_{\ru}\), the partition induced by \(\ru,\) simply by \(P\).
Let \(\cH = \cH^{\mathsf{u}}({\ru,}~L)\).
First, we prove that:
\begin{equation}\label{eq:ru_preserve_downward_language}
\mathcal{L}^{\cH}_{\da}(P(r)) = P(r), \text{ for each } r \in \cT_{\Sigma} \enspace .
\end{equation}

(\(\subseteq\)). We show that, for all \(t \in \cT_{\Sigma}\), \(t \in \mathcal{L}^{\cH}_{\da}(P(r)) \Ra t \in P(r)\).
We proceed by induction in the height of the tree \(t\).
\begin{itemize}
\item \textit{Base case:} Let \(t\) be of height \(0\), i.e. \(\exists a \in \Sigma_0\) such that \(t = a[\,]\).
Then,
\begin{align*}
a[\,] \in \mathcal{L}^{\cH}_{\da}(P(r)) & \Lra \quad \text{[Def. of \(\mathcal{L}^{\cH}_{\da}(P(r))\)]} \\
\exists t \in \cT_Q \colon  a[\,] \xrightarrow{}^*_{\cH} t \land t(\varepsilon) = P(r) & \Lra \\
P(r) \in \delta(a[\,]) & \Lra \quad \text{[Def.~\ref{def:up_TA_construction}]} \\
a[\,] \in P(r) \enspace .
\end{align*}

\item \textit{Inductive step:}
Now we assume by hypothesis of induction that \(t \in \mathcal{L}^{\cH}_{\da}(P(r)) \Ra t \in P(r)\) for all trees of height up to \(n\).
Let \(t\) be a tree of height \(n{+}1\), i.e. \(t = f[t_1,t_2]\) for some \(f \in \Sigma\) and \(t_1,t_2 \in \cT_\Sigma\) height up \(n\).
Note that, w.l.o.g., we assume \(\rank{f} = 2\) for the sake of clarity.
Then,
\begin{align*}
t \in \mathcal{L}^{\cH}_{\da}(P(r)) & \Lra \\
\exists t' \in \cT_Q \colon  t \xrightarrow{}^*_{\cH} t' \land t'(\varepsilon) = P(r) & \Lra \\
\exists r_1, r_2 \in \cT_\Sigma \colon  t_1 \in \mathcal{L}^{\cH}_{\da}(P(r_1)), t_2 \in \mathcal{L}^{\cH}_{\da}(P(r_2))& \text{ and }\\
P(r) \in \delta(f[P(r_1),P(r_2)]) & \enspace ,
\end{align*}
where the second double-implication holds by Lemma~\ref{DownwardLanguages}, since \(t = f[t_1,t_2]\).

By Definition~\ref{def:up_TA_construction}, the fact that \(P(r) \in \delta(f[P(r_1),P(r_2)])\) is equivalent to \(f[P(r_1),P(r_2)] \subseteq P(r)\).
Since \(t = f[t_1, t_2]\) and, by I.H., \(t_i \in P(r_i)\) with \(i \in \{1,2\}\), we conclude that \(t \in f[P(r_1),P(t_2)]\).
\end{itemize}

(\(\supseteq\)). We show that, for all \(t \in \cT_{\Sigma}\), \(t \in P(r) \Ra t \in \mathcal{L}^{\cH}_{\da}(P(r))\).
We proceed by induction in the height of the tree \(t\).
\begin{itemize}
\item \textit{Base case:} Let \(t\) be of height \(0\), i.e. \(\exists a \in \Sigma_0\) such that \(t = a[\,]\).
Then,
\begin{align*}
a[\,] \in P(r) & \Ra \quad \text{[Def.~\ref{def:up_TA_construction}]} \\
P(r) \in \delta(a[\,]) & \Ra \quad \text{[Def. of \(\mathcal{L}^{\cH}_{\da}(P(r))\)]} \\
a[\,] \in \mathcal{L}^{\cH}_{\da}(P(r)) \enspace .
\end{align*}

\item \textit{Inductive step:}
Now we assume by hypothesis of induction that \(t \in P(r) \Ra t \in \mathcal{L}^{\cH}_{\da}(P(r))\) holds for all trees \(t\) of height up to \(n\).
Let \(t\) be a tree of height \(n{+}1\), i.e. \(t = f[t_1,t_{2}]\) for some \(f \in \Sigma\) and \(t_1,t_2 \in \cT_\Sigma\) have height up to \(n\).
Note that, w.l.o.g., we assume \(\rank{f} = 2\).
Then,
\begin{align*}
t \in P(r) & \Ra \quad \text{[\(t = f[t_1,t_2]\)]} \\
f[t_1,t_2] \in P(r) & \Ra \quad \text{[\(\ru\) is an upward cong.]} \\
f[P(t_1), P(t_2)] \subseteq P(r) & \Ra \quad \text{[Def.~\ref{def:up_TA_construction}]} \\
P(r) \in \delta(f[P(t_1),P(t_2)]) & \Ra  \\
t \in \mathcal{L}^{\cH}_{\da}(P(r)) \enspace ,
\end{align*}
where the last implication holds by Lemma~\ref{DownwardLanguages} and the fact that, by I.H, \(t_i \in \mathcal{L}^{\cH}_{\da}(P(t_i))\), with \(i \in \{1,2\}\).
\end{itemize}
We conclude this proof by showing that \(\lang{\cH} = L\).
To do that, we first prove that \(P(L) = L\).

On the one hand, \(L \subseteq P(L)\) by reflexivity of \(\ru\).
Now we show that \(P(L) \subseteq L\).
Let \(t \in \cT_{\Sigma}\) such that \(t \in P(L)\).
Then, there exists \(r \in L\) with \(t \ru r\).
On the other hand, \(t \ru r\) implies \(Lt^{-1} = Lr^{-1}\).
Then, we conclude that \(r \in L \Ra \cX \in Lr^{-1} \Ra \cX\in Lt^{-1} \Ra t \in L \).

Finally,
\begin{align*}
\lang{\cH} &= \quad \text{[Def. of \(\lang{\cH}\)]} \\
\bigcup\limits_{q \in F} \mathcal{L}^{\cH}_{\da}(q) & = \quad \text{[Def.~\ref{def:up_TA_construction}]}\\
\bigcup\limits_{t \in L} \mathcal{L}^{\cH}_{\da}(P(t)) & = \quad \text{[Equation~\eqref{eq:ru_preserve_downward_language}]} \\
\bigcup\limits_{t \in L} P(t) & = \quad  \text{[\(P(L) = L\)]}\\
 L \enspace .
\end{align*}

\vspace*{-8mm}
\end{proof}
\begin{example}
\label{ex:cons1}
Let us give an example of the DBTA construction of Definition~\ref{def:up_TA_construction}.
Consider  \(L = \lang{\TA}\) where \(\TA\) is the BTA defined in Example~\ref{example:BTA}.
Namely, \(\TA = \tuple{Q,\Sigma,\delta,F}\) with \(Q = \{q_0,q_1\}\); \(\Sigma = \Sigma_0\cup \Sigma_2\) with \(\Sigma_0 =\{\true, \false\}\) and \(\Sigma_2= \{\AND,\OR\}\); \(\{q_0\}= \delta(\AND[q_0,q_1]) = \delta(\AND[q_0,q_0]) = \delta(\AND[q_1,q_0]) = \delta(\OR[q_0,q_0]) = \delta(\false[\,])\), \(\{q_1\} = \delta(\AND[q_1,q_1]) = \delta(\OR[q_1,q_0]) = \delta(\OR[q_0,q_1]) = \delta(\OR[q_1,q_1]) = \delta(\true[\,])\) and \(F = \{q_1\}\).
Thus, \(L\) is the set of all trees \(t \in \cT_{\Sigma}\) that yield to propositional formulas, over the binary connectives \(\land\) and \(\lor\) and the constants \(\true\) and \(\false\), that evaluate to \(\true\).

\medskip
On the other hand, consider the upward congruence from Example~\ref{ex:congruences}, i.e., \(t \ru r \udiff Lt^{-1} = Lr^{-1}\), for every \(t,r \in \cT_{\Sigma}\).
Note that, \(\ru\) defines two equivalence classes:
\begin{itemize}
  \item \(P_{\ru}(\true[\,]) = L\) and
  \item \(P_{\ru}(\false[\,]) = L^\complement\), where \(L^{\complement}\) is the complement of \(L\), i.e., the set \(\cT_{\Sigma} \setminus L \).
\end{itemize}
Specifically, \mbox{\(\{\lor[\false,\true], \lor[\true,\false], \lor[\true,\true], \land[\true,\true]\}\subseteq P_{\ru}(\true[\,])\)}$\,$ and \\
 $\{\land[\false,\true], \land[\true,\false], \land[\false,\false]$,  $\lor[\false, \false]\} \subseteq$ $P_{\ru}(\false[\,])$.

Thus, \(\cH^{\mathsf{u}}(\ru,L) = \tuple{Q', \Sigma_0 \cup \Sigma_2 , \delta', F'}\) where \(Q' = \{ L, L^{\complement}\}, F' = \{L\}\) %
 and \(\delta'\) is defined as \(\{L\} = \delta'(\true[\,]) = \delta'(\land([L,L])) =  \delta'(\lor([L^{\complement},L])) =  \delta'(\lor([L,L^{\complement}])) =  \delta'(\lor([L,L]))\) and \(\{L^{\complement}\} = \delta'(\false[\,]) =  \delta'(\land([L^{\complement},L])) =  \delta'(\land([L,L^{\complement}])) =  \delta'(\land([L^{\complement},L^{\complement}])) =  \delta'(\lor([L^{\complement},L^{\complement}]))\).\eox
\end{example}

Similarly, given a regular tree language \(L\) that is path-closed and a downward congruence \(\rd\),
that preserves the equivalence between downward quotients, i.e., \(x \rd y \Ra x^{-1}L = y^{-1}L\) for every \(x, y \in \cC_{\Sigma}\),
we give a tree automata construction that yields a BTA that defines exactly \(L\).
To this aim, we first recall an alternative characterization of path-closed languages due to Nivat and Podelski~\cite{nivat1997minimal}.
For simplicity, we state the characterization for symbols \(f \in \Sigma\) with \(\rank{f} \leq 2\), although it holds for alphabets of rank greater than \(2\).
Let \(L \subseteq \cT_\Sigma\) be a regular tree language, then \(L\) is \emph{path-closed} if{}f:
\begin{gather}
\forall x \in \cC_\Sigma, \forall f \in \Sigma \colon \notag\\
\subs{x}{f[t_1,t_2]},\subs{x}{f[r_1,r_2]} \in L \Ra \subs{x}{f[t'_1,t'_2]} \in L \text{ where } t'_1\in\{t_1,r_1\}, t'_2\in\{t_2,r_2\} \enspace \label{eq:pathClosed}
 .
\end{gather}
We recall that the definition of being path-closed is given at the end of Section~\ref{sec:preliminaries}.

Before turning to the construction using downward congruences we introduce a notation.
Given \(C \subseteq \cC_\Sigma\) and \(c \in \cC_\Sigma\), let
\(\subs{C}{c} \udiffg  \{\subs{c'}{c} \mid c' \in C\}\).
\begin{definition}[Automata construction \(\cH^{\mathsf{d}}(\rd, L)\)]\label{def:down_TA_construction}
Let \(\rd\) be a downward congruence and let \(L \subseteq \cT_{\Sigma}\).
Define the BTA \mbox{\(\cH^{\mathsf{d}}(\rd,L) \udiffg  \tuple{Q, \Sigma, \delta, F}\)} where \mbox{\(Q = \{ P_{\rd}(x) \mid x \in \cC_{\Sigma}, x^{-1}L \neq \varnothing\}\)}, \(F = \{P_{\rd}(\cX)\}\) and the transition function \(\delta\) is defined as follows:
\begin{enumerate}
\renewcommand\labelenumi{\theenumi}
\renewcommand{\theenumi}{(\roman{enumi})}
\item for every \(a \in \Sigma_0\), \(P_{\rd}(x) \in Q\colon P_{\rd}(x) \in \delta(a[\,])\) if{}f \(\subs{P_{\rd}(x)}{a} \subseteq L\), and
\item for every \(f \in \Sigma_n\) with \(n \geq 1\), \(P_{\rd}(x), P_{\rd}(x_i) \in Q\) with \(i=\range{\rank{f}}\): \(P_{\rd}(x)\) \linebreak \({}\in \delta(f[P_{\rd}(x_1), ... , P_{\rd}(x_{\rank{f}})]) \) if{}f
\(\exists t_1, ... , t_{\rank{f}} \in \cT_{\Sigma}\colon \subs{P_{\rd}(x)}{\subs{f[t_1, ... , t_{\rank{f}}]} {\cX}_i}\) \(\subseteq P_{\rd}(x_i)\), for every \(i=\range{\rank{f}}\).
\end{enumerate}
\end{definition}

\begin{remark}
\label{remark:no-determinism}
Note that \(\cH^{\mathsf{d}}(\rd, L)\) is \emph{well-defined} since we assume that \(\rd\) has finite index and is a downward congruence.
On the other hand, by definition, \(\cH^{\mathsf{d}}(\rd, L)\) does not contain empty states and the final set of states is a singleton.
Furthermore, \(\cH^{\mathsf{d}}(\rd, L)\) does not contain unreachable states either.
This is also a consequence of its definition, since \(Q\) only includes those blocks \(P_{\rd}(x)\) s.t. \(x^{-1} L \neq \varnothing\).
However, \(\cH^{\mathsf{d}}(\rd, L)\) is not necessarily co-deterministic since it might be the case that, for some \(f \in \Sigma_n, P_{\rd}(x), P_{\rd}(x_i), P_{\rd}(y_i) \in Q\) with \(i=\range{\rank{f}}\), there exists \(t_1, \ldots, t_{\rank{f}}, r_1, \ldots, r_{\rank{f}} \in \cT_{\Sigma}\) such that \(\subs{P_{\rd}(x)}{\subs{f[t_1, \ldots, t_{\rank{f}}]} {\cX}_i}\) \(\subseteq P_{\rd}(x_i)\), \(\subs{P_{\rd}(x)}{\subs{f[r_1, \ldots, r_{\rank{f}}]} {\cX}_i}\) \(\subseteq P_{\rd}(y_i)\) with \(x_i \not\rd y_i\).
\end{remark}

Now we prove that if \(L\), the language used to construct$\,$ \(\cH^{\mathsf{d}}(\rd,~L)\),$\,$ is path-closed then \linebreak  \(\lang{\cH^{\mathsf{d}}(\rd,L)}= L\).
Notice that if \(L\) is not path-closed then only \( L \subseteq \lang{\cH^{\mathsf{d}}(\rd, L)} \) is guaranteed (the reader may check this fact by looking at the proof of the lemma).
\begin{lemma}
\label{HTPreservesL}
Let \(L \subseteq \cT_{\Sigma}\) and \(\rd\) be a downward congruence such that \(x \rd y \Ra x^{-1}L = y^{-1}L\) for every \( x,y \in \cC_\Sigma\).
Then, \(\lang{\cH^{\mathsf{d}}(\rd,L)}\supseteq L\) holds and, if, incidentally, \(L\) is path-closed then \(\lang{\cH^{\mathsf{d}}(\rd,L)}\subseteq L\).

\end{lemma}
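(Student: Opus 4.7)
The plan is to lift the statement to the following pair of claims, parametric in a context $x \in \cC_\Sigma$ and a tree $t \in \cT_\Sigma$, and then specialize to $x = \cX$: (A)~$\subs{x}{t} \in L$ implies $t \in \mathcal{L}^{\cH}_{\da}(P_\rd(x))$ (unconditional), and (B)~$t \in \mathcal{L}^{\cH}_{\da}(P_\rd(x))$ implies $\subs{x}{t} \in L$ (assuming $L$ is path-closed), where $\cH \udiffg \cH^{\mathsf{d}}(\rd, L)$. Specializing both claims to $x = \cX$ delivers the two inclusions of the lemma, since $\subs{\cX}{t} = t$ and the only final state of $\cH$ is $P_\rd(\cX)$, so that $\lang{\cH} = \mathcal{L}^{\cH}_{\da}(P_\rd(\cX))$.

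Both claims are proved by induction on the height of $t$, uniformly in $x$. For (A), the base case $t = a[\,]$ uses that $\subs{x}{a} \in L$ forces $\subs{y}{a} \in L$ for every $y \rd x$ (via $y^{-1}L = x^{-1}L$), yielding $\subs{P_\rd(x)}{a} \subseteq L$ and hence $P_\rd(x) \in \delta(a[\,])$ by clause~(i) of Definition~\ref{def:down_TA_construction}. For the inductive step with $t = f[s_1, \ldots, s_n]$, set $x_i \udiffg \subs{x}{\subs{f[s_1, \ldots, s_n]}{\cX}_i}$; from $\subs{x_i}{s_i} = \subs{x}{t} \in L$ we obtain $P_\rd(x_i) \in Q$, and by the induction hypothesis $s_i \in \mathcal{L}^{\cH}_{\da}(P_\rd(x_i))$. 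Clause~(ii) of Definition~\ref{def:down_TA_construction} is then witnessed by the choice $t_i \udiffg s_i$: the required inclusion $\subs{P_\rd(x)}{\subs{f[s_1, \ldots, s_n]}{\cX}_i} \subseteq P_\rd(x_i)$ is immediate, since $\rd$ is a downward congruence applied to any $y \rd x$.

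For claim (B), the base case mirrors (A): $a \in \mathcal{L}^{\cH}_{\da}(P_\rd(x))$ forces $P_\rd(x) \in \delta(a[\,])$, and clause~(i) yields $\subs{x}{a} \in L$. In the inductive step, the acceptance of $t = f[s_1, \ldots, s_n]$ at $P_\rd(x)$ exposes states $P_\rd(x_i)$ with $s_i \in \mathcal{L}^{\cH}_{\da}(P_\rd(x_i))$ together with witnesses $t_1, \ldots, t_n$ from clause~(ii) satisfying $\subs{x}{\subs{f[t_1, \ldots, t_n]}{\cX}_i} \rd x_i$. Combining the induction hypothesis $\subs{x_i}{s_i} \in L$ with the resulting equality $\bigl(\subs{x}{\subs{f[t_1, \ldots, t_n]}{\cX}_i}\bigr)^{-1} L = x_i^{-1} L$ yields, for each $i \in \range{n}$,
\[
u_i \udiffg \subs{x}{f[t_1, \ldots, t_{i-1}, s_i, t_{i+1}, \ldots, t_n]} \in L \enspace .
\]
It then remains to deduce $\subs{x}{f[s_1, \ldots, s_n]} \in L$ from the $u_i$, and this is where path-closedness enters. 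I argue pathwise on $u \udiffg \subs{x}{f[s_1, \ldots, s_n]}$: every path of $u$ either avoids the pivot of $x$---in which case it is already a path of every $u_i$---or it descends through the $i$-th child of the $f$-node at the pivot into $s_i$, in which case it agrees with a path of $u_i$ (since the $i$-th child of $f$ also carries $s_i$ in $u_i$). Either way the path lies in $\pi(L)$, so $\pi(u) \subseteq \pi(L)$ and path-closedness delivers $u \in L$.

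The main obstacle is precisely this last splicing step in (B): the witnesses $t_1, \ldots, t_n$ delivered by Definition~\ref{def:down_TA_construction}(ii) are \emph{a priori} unrelated to the true children $s_1, \ldots, s_n$ of $t$, and there is no co-determinism to force them to coincide with the $s_i$ (cf.\ Remark~\ref{remark:no-determinism}). Path-closedness---equivalently, the Nivat--Podelski characterization~\eqref{eq:pathClosed}---is exactly the structural property that combines the $n$ partial matches $u_i \in L$ into a single membership witness $u \in L$. Without it only inclusion (A) survives, which explains why the second inclusion must be stated conditionally.
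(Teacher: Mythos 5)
Your proof is correct and follows essentially the same route as the paper's: the same two parametric inclusions \(\mathcal{L}^{\cH}_{\da}(P_{\rd}(x)) \supseteq x^{-1}L\) (unconditional) and \(\mathcal{L}^{\cH}_{\da}(P_{\rd}(x)) \subseteq x^{-1}L\) (under path-closedness), each proved by induction on tree height with the same choice of witnesses for clause (ii) of Definition~\ref{def:down_TA_construction}, then specialized to \(x=\cX\). The only difference is cosmetic: where the paper invokes the Nivat--Podelski characterization~\eqref{eq:pathClosed} to splice the partial memberships \(u_i \in L\) into \(\subs{x}{t}\in L\), you argue directly from the path-language definition of path-closedness, which handles arbitrary rank in a single step.
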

\begin{proof}
To simplify the notation, we denote \(P_{\rd}\), the partition induced by \(\rd\), simply by \(P\).
Let \(\cH\!=\!\cH^{\mathsf{d}}({\rd,}~L) = (Q, \Sigma, \delta, F)\).
First, we prove that for every \(P(x)\in Q\):
\begin{align}
  \mathcal{L}^{\cH}_{\da}(P(x)) &\subseteq x^{-1}L, \text{if \(L\) is path-closed, and }\label{eq:sup_downward_eq_quotient_language}\\
  \mathcal{L}^{\cH}_{\da}(P(x)) &\supseteq x^{-1}L\enspace .\label{eq:sub_downward_eq_quotient_language}
\end{align}

(\(\subseteq\)).
We show that, if \(L\) is path-closed, then for all \(t \in \cT_{\Sigma}\), \(t \in \mathcal{L}^{\cH}_{\da}(P(x)) \Ra t \in x^{-1}L\).
We proceed by induction in the height of the tree \(t\).

\begin{itemize}
\item \textit{Base case:} Let \(t\) be of height \(0\), i.e. \(\exists a \in \Sigma_0\) such that \(t = a[\,]\).
Then.
\begin{align*}
a[\,] \in \mathcal{L}^{\cH}_{\da}(P(x)) & \Ra \quad \text{[Def. of \(\mathcal{L}^{\cH}_{\da}(P(x))\)]} \\
P(x) \in \delta(a[\,]) & \Ra \quad \text{[Def. \ref{def:down_TA_construction}]} \\
\subs{P(x)}{a} \subseteq L & \Ra \quad \text{[\(P(x) \supseteq \{x\} \)]} \\
\subs{x}{a} \in L & \Ra\quad  \text{[Def. of downward quotient]}\\
a[\,] \in x^{-1}L \enspace .
\end{align*}
\item \textit{Inductive step:} Now we assume by hypothesis of induction that \(t \in \mathcal{L}^{\cH}_{\da}(P(x)) \Ra t \in x^{-1}L\) for all trees of height up to \(n\).
Let \(t\) be a tree of height \(n{+}1\), i.e. \(t = f[t_1,t_2]\) for some \(f \in \Sigma\) and  \(t_1, t_2 \in \cT_\Sigma\) have height up to \(n \geq 0\).
Note that, w.l.o.g., we assume \(\rank{f} = 2\) for the sake of clarity.
Then,
\begin{align}
t \in \mathcal{L}^{\cH}_{\da}(P(x)) & \Lra \notag\\
\exists t' \in \cT_Q \colon  t'(\varepsilon) = P(x),\, t \rightarrow^*_{\cH} t'  & \Lra \nonumber\\
\exists P(x_1),P(x_2) \in Q \colon  t_1 \in \mathcal{L}^{\cH}_{\da}(P(x_1)), t_2\in \mathcal{L}^{\cH}_{\da}(P(x_2)) &\text{ and } \nonumber\\
 P(x) \in \delta(f[P(x_1),P(x_2)]),\label{eq:proofHTPreservesL-1}
\end{align}
where the first double-implication holds by definition of \(\mathcal{L}^{\cH}_{\da}(P(x))\) and the second one holds by Lemma~\ref{DownwardLanguages}, since \(t = f[t_1,t_2]\).
By Definition~\ref{def:down_TA_construction}, Equation~\eqref{eq:proofHTPreservesL-1} is equivalent to:
\begin{align*}
\exists r_1, r_{2} \in \cT_{\Sigma} \colon  &\subs{P(x)}{\subs{f[r_1, r_2]}{\cX}_1} \subseteq P(x_1),\\
&\subs{P(x)}{\subs{f[r_1, r_2]}{\cX}_2} \subseteq P(x_2)\enspace .
\end{align*}
By definition of \(P\), we have that:
\begin{align}
\exists r_1, r_{2} \in \cT_{\Sigma} \colon  x_1 &\rd \subs{x}{\subs{f[r_1, r_2]}{\cX}_1},\notag\\
x_2 &\rd \subs{x}{\subs{f[r_1, r_2]}{\cX}_2} \label{eq:proofHTPreservesL-2} \enspace .
\end{align}
Note that, by H.I., \(t_1 \in \mathcal{L}^{\cH}_{\da}(P(x_1)), t_2\in \mathcal{L}^{\cH}_{\da}(P(x_2)) \Ra t_1 \in x_1^{-1}L, t_2 \in x_2^{-1}L\).
Moreover, \(x \rd y\) implies that \(x^{-1}L = y^{-1}L\).
Therefore, it follows from Equations~\eqref{eq:proofHTPreservesL-1} and~\eqref{eq:proofHTPreservesL-2} that \(\exists r_1, r_2 \in \cT_{\Sigma}\colon  t_1 \in \subs{x}{\subs{f[r_1, r_2]}{\cX}_1}^{-1}L\) and \(t_2 \in \subs{x}{\subs{f[r_1,r_2]}{\cX}_2}^{-1}L\), which can be rewritten as \(\exists r_1, r_2 \in \cT_{\Sigma}\colon  \subs{x}{f[t_1, r_2]} \in L\) and \(\subs{x}{f[r_1, t_2]} \in L\).
Finally, since \(L\) is path-closed and \(t=f[t_1,t_2]\), we conclude that \(\subs{x}{t} \in L\), i.e., \(t \in x^{-1}L\).
\end{itemize}

(\(\supseteq\)). We next show that, for all \(t \in \cT_{\Sigma}\), \(t \in x^{-1}L \Ra t \in \mathcal{L}^{\cH}_{\da}(P(x))\).
We proceed by induction on the height of \(t\).
\begin{itemize}
\item \textit{Base case:} Let \(t\) be of height \(0\), i.e. \(\exists a \in \Sigma_0\) such that \(t = a[\,]\).
Then,
\begin{align*}
a[\,] \in x^{-1}L & \Ra \quad \text{[Def. of downward quotient]} \\
\subs{x}{a} \in L  & \Ra \quad \text{[\(\rd\) is a downward cong.]} \\
\subs{P(x)}{a} \subseteq L  & \Ra \quad \text{[Def.~\ref{def:down_TA_construction}]} \\
P(x) \in \delta(a[\,]) & \Ra \quad \text{[Def. of \(\mathcal{L}^{\cH}_{\da}(P(x))\)]} \\
a[\,] \in \mathcal{L}^{\cH}_{\da}(P(x)) \enspace .
\end{align*}

\item \textit{Inductive step:}
Now we assume by hypothesis of induction that \(t \in x^{-1}L \Ra t \in \mathcal{L}^{\cH}_{\da}(P(x))\) holds for all trees \(t\) of height up to \(n\).
Let \(t\) be a tree of height \(n{+}1\), i.e. \(t = f[t_1,t_2]\) for some \(f \in \Sigma\) and \(t_1,t_2 \in \cT_\Sigma\) with height up to \(n\).
Note that, w.l.o.g., we assume \(\rank{f} = 2\) for the sake of clarity.
Let \(x_1 = \subs{x}{f[\cX, t_2]}\) and \(x_2 = \subs{x}{f[t_1, \cX]}\).
Since \(\rd\) is a downward congruence, we have that \(\subs{P(x)}{f[\cX, t_2]} \subseteq P(x_1)\) and \(\subs{P(x)}{f[t_1, \cX]}\subseteq P(x_2)\).
Therefore, by Definition~\ref{def:down_TA_construction}, \(P(x) \in \delta(f[P(x_1), P(x_2)])\).

On the other hand, note that \(t_1 \in {x_1}^{-1}L\) and \(t_2 \in {x_2}^{-1}L\).
By I.H., \(t_1 \in \mathcal{L}^{\cH}_{\da}(P(x_1))\) and \(t_2 \in \mathcal{L}^{\cH}_{\da}(P(x_2))\).
Relying on Lemma~\ref{DownwardLanguages}, we conclude that \(t \in \mathcal{L}^{\cH}_{\da}(P(x))\).
\end{itemize}

We finish by proving the two statements of the Lemma.
\begin{align*}
  \lang{\cH} &= \quad \text{[Def. of \(\lang{\cH}\)]} & \lang{\cH} &= \quad \text{[Def. of \(\lang{\cH}\)]}\\
  \textstyle{\bigcup_{q \in F}} \mathcal{L}^{\cH}_{\da}(q) & = \quad \text{[Def.~\ref{def:down_TA_construction}]} & \textstyle{\bigcup_{q \in F}} \mathcal{L}^{\cH}_{\da}(q) & = \quad \text{[Def.~\ref{def:down_TA_construction}]}\\
\mathcal{L}^{\cH}_{\da}(P(\cX)) & \supseteq \quad \text{[Equation~\eqref{eq:sub_downward_eq_quotient_language}]} & \mathcal{L}^{\cH}_{\da}(P(\cX)) & \subseteq \quad \text{[Equation~\eqref{eq:sup_downward_eq_quotient_language}, $L$ path-closed]}\\
(\cX)^{-1}L & = L & (\cX)^{-1}L & = L \enspace.
\end{align*}
\end{proof}

Next we introduce the notion of \emph{strongly downward congruence} which defines an extra condition on a downward congruence \(\rd\) that guarantees that \(\cH^{\mathsf{d}}(\rd, L)\) is a co-DBTA.
This condition is an ad-hoc design so as to avoid the issue raised at Remark~\ref{remark:no-determinism}.
Intuitively, strongly downward means that every congruent pair \(x\) and \(y\) of contexts remains congruent after plugging the context \(\subs{t}{\cX}_i\) in \(x\) and the context \(\subs{r}{\cX}_i\) in \(y\) where \(t\) and \(r\) are chosen such that \(\subs{x}{t},\subs{y}{r} \in L\).
Recall that given a tree \(t\in\cT\), \(\subs{t}{\cX}_i\in \cC\) denotes the context obtained by replacing the subtree rooted at \(i\) with \(\cX\).
\begin{definition}[Strongly downward congruence]\label{def:strong_downward_congruence}
Let \(\rd\) be a downward congruence and let \(L \subseteq \cT_{\Sigma}\).
We say \(\rd\) is \emph{strongly downward} w.r.t. \(L\) if{}f for every \( x,y \in\cC_\Sigma\) with \(x\rd y\), \(t \in x^{-1}L, r \in y^{-1}L \text { and } t(\varepsilon) \!=\! r(\varepsilon)\) where \(t,r \in \cT_\Sigma\), we have that \( \subs{x}{\subs{t}{\cX}_i} \rd \subs{y}{\subs{r}{\cX}_i}, \forall i\in\{\range{\rank{t(\varepsilon)}}\}\).
\end{definition}

Finally, we obtain the following lemma.
\begin{lemma}
\label{downwardIsDet}
Let \(\rd\) be a strongly downward congruence w.r.t. \(L\) such that \(x \rd y \Ra x^{-1}L = y^{-1}L\) for every \(x,y \in \cC_{\Sigma}\).
Then \(\cH^{\mathsf{d}}(\rd,L)\) is a co-DBTA.
\end{lemma}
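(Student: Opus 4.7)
The plan is to verify the two clauses of co-determinism from Section~\ref{sec:bta}: (a) that \(F\) is a singleton, which is immediate since Definition~\ref{def:down_TA_construction} sets \(F=\{P_{\rd}(\cX)\}\); and (b) that whenever \(P_{\rd}(x)\in\delta(f[P_{\rd}(x_1),\ldots,P_{\rd}(x_n)])\) and \(P_{\rd}(x)\in\delta(f[P_{\rd}(y_1),\ldots,P_{\rd}(y_n)])\) for some \(f\in\Sigma_n\) with \(n\geq 1\), we have \(P_{\rd}(x_i)=P_{\rd}(y_i)\) for every \(i\). Thus, the real work lies in clause (b), which addresses exactly the situation flagged in Remark~\ref{remark:no-determinism}.

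To tackle (b), I would unfold both transitions via Definition~\ref{def:down_TA_construction}(ii) to obtain tuples \(t_1,\ldots,t_n\) and \(r_1,\ldots,r_n\) of trees such that, for every \(i\),
\[\subs{P_{\rd}(x)}{\subs{f[t_1,\ldots,t_n]}{\cX}_i}\subseteq P_{\rd}(x_i)\quad\text{and}\quad \subs{P_{\rd}(x)}{\subs{f[r_1,\ldots,r_n]}{\cX}_i}\subseteq P_{\rd}(y_i) \enspace .\]
Specialising to \(x\in P_{\rd}(x)\) gives \(\subs{x}{\subs{f[t_1,\ldots,t_n]}{\cX}_i}\rd x_i\) and \(\subs{x}{\subs{f[r_1,\ldots,r_n]}{\cX}_i}\rd y_i\). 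By transitivity of \(\rd\), it then suffices to prove the middle link \(\subs{x}{\subs{f[t_1,\ldots,t_n]}{\cX}_i}\rd \subs{x}{\subs{f[r_1,\ldots,r_n]}{\cX}_i}\), and this is precisely the conclusion of Definition~\ref{def:strong_downward_congruence} applied with \(y=x\), \(t=f[t_1,\ldots,t_n]\), and \(r=f[r_1,\ldots,r_n]\).

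The main obstacle is that strongly downward requires the hypotheses \(t\in x^{-1}L\) and \(r\in x^{-1}L\), i.e.\ \(\subs{x}{f[t_1,\ldots,t_n]}\in L\) and \(\subs{x}{f[r_1,\ldots,r_n]}\in L\), which are not part of Definition~\ref{def:down_TA_construction}(ii). My key observation is that we can sidestep this by replacing only the \(i\)-th child. Since \(P_{\rd}(x_i)\in Q\), the set \(x_i^{-1}L\) is non-empty, so pick \(s_i\in x_i^{-1}L\) and set \(\tilde t^{(i)}\udiffg f[t_1,\ldots,t_{i-1},s_i,t_{i+1},\ldots,t_n]\). Using the hypothesis that \(\rd\)-equivalent contexts have equal downward quotients together with the already-established \(\subs{x}{\subs{f[t_1,\ldots,t_n]}{\cX}_i}\rd x_i\), one checks \(\subs{x}{\tilde t^{(i)}}\in L\), so \(\tilde t^{(i)}\in x^{-1}L\). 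Define \(\tilde r^{(i)}\) analogously using some \(s'_i\in y_i^{-1}L\). Because \(\subs{\tilde t^{(i)}}{\cX}_i=\subs{f[t_1,\ldots,t_n]}{\cX}_i\) and \(\subs{\tilde r^{(i)}}{\cX}_i=\subs{f[r_1,\ldots,r_n]}{\cX}_i\), applying the strongly downward property to \(x\rd x\), \(\tilde t^{(i)}\), and \(\tilde r^{(i)}\) yields exactly the desired middle link.

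Chaining the three \(\rd\)-equivalences produces \(x_i\rd y_i\), hence \(P_{\rd}(x_i)=P_{\rd}(y_i)\) for every \(i\), which completes clause (b) and therefore shows that \(\cH^{\mathsf{d}}(\rd,L)\) is co-deterministic. Conceptually, the construction of \(\tilde t^{(i)}\) and \(\tilde r^{(i)}\) is exactly the trick that allows the "strongly" part of the congruence to do its work at one coordinate at a time, avoiding any appeal to path-closedness of \(L\) itself.
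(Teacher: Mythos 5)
Your proposal is correct and follows essentially the same route as the paper's proof: both reduce the problem to the second co-determinism clause, extract the witnessing tuples from Definition~\ref{def:down_TA_construction}(ii), repair the $i$-th child using the non-emptiness of $x_i^{-1}L$ and $y_i^{-1}L$ (via the quotient-preservation hypothesis) so that the strongly-downward property of Definition~\ref{def:strong_downward_congruence} can be applied with $y=x$, and then conclude $x_i \rd y_i$ by transitivity. The only difference is presentational — you argue directly while the paper argues by contradiction — and your justification of why the repaired trees lie in $x^{-1}L$ is, if anything, spelled out more explicitly than in the paper.
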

\begin{proof}
We prove that \(\cH^{\mathsf{d}}(\rd,L) = (Q, \Sigma, \delta, F)\) is a co-DBTA by contradiction.

\medskip
If \(\cH^{\mathsf{d}}(\rd,L)\) is not co-deterministic, then either \(|F|>1\), which is not possible by Definition~\ref{def:down_TA_construction}; or \(\exists f \in \Sigma, \exists x_1,\ldots,x_{\rank{f}},y_1,\ldots,y_{\rank{f}},x \in \cC_\Sigma\) such that \(P_{\rd}(x) \in \delta(f[P_{\rd}(x_1), \ldots, P_{\rd}(x_{\rank{f}})])\) and \(P_{\rd}(x) \in \delta(f[P_{\rd}(y_1), \ldots, P_{\rd}(y_{\rank{f}})])\) and \(\exists {i_0} \in \range{\rank{f}}\) for which \(x_{i_0} \not\rd y_{i_0}\).

Then, by Definition~\ref{def:down_TA_construction}, \(\exists t_1,\ldots, t_{\rank{f}}, r_1,\ldots, r_{\rank{f}}\!\in\! \cT_{\Sigma}\ \colon \!\subs\!{P_{\rd}(x)}{\subs{f[t_1, \ldots, t_{\rank{f}}]} {\cX}_{i_0}} \subseteq P_{\rd}(x_{i_0})\), and \(\subs{P_{\rd}(x)}{\subs{f[r_1, \ldots, r_{\rank{f}}]} {\cX}_{i_0}}\) \(\subseteq P_{\rd}(y_{i_0})\) with \(x_{i_0} \not\rd y_{i_0}\).
Thus,  \(\subs{x}{\subs{f[t_1, \ldots, t_{\rank{f}}]}{\cX}_{i_0}} \rd x_{i_0}\) and \(\subs{x}{\subs{f[r_1, \ldots, r_{\rank{f}}]}{\cX}_{i_0}} \rd y_{i_0}\).

On the other hand, since \(P_{\rd}(x_{i_0})\) and \(P_{\rd}(y_{i_0})\) are states of \(\cH^{\mathsf{d}}\), we have that \(x_{i_0}^{-1}L \neq \varnothing\) and \(y_{i_0}^{-1}L \neq \varnothing\).
Since  \(\subs{x}{\subs{f[t_1, \ldots, t_{\rank{f}}]}{\cX}_{i_0}} \rd x_{i_0}\) and \(\subs{x}{\subs{f[r_1, \ldots, r_{\rank{f}}]}{\cX}_{i_0}} \rd y_{i_0}\), we also have that \(\subs{x}{\subs{f[t_1, \ldots, t_{\rank{f}}]}{\cX}_{i_0}}^{-1}L \neq \varnothing\) and \(\subs{x}{\subs{f[r_1, \ldots, r_{\rank{f}}]}{\cX}_{i_0}}^{-1}L \neq \varnothing\).
Thus, for some \(t_{i_0}, r_{i_0} \in\cT_{\Sigma}\), \(t = f[t_1, \ldots,t_{i_0}, \dots, t_{\rank{f}}]\in x^{-1}L\) and \(r = f[r_1, \ldots,r_{i_0}, \dots, r_{\rank{f}}]\in x^{-1}L\).

Finally, since \(\rd\) is a strongly downward congruence w.r.t. \(L\), \(x \rd x\) (trivially), \(t,r \in x^{-1}L\) and \(t(\varepsilon) = r(\varepsilon)\),  we have that \(\subs{x}{\subs{f[t_1, \ldots, t_{\rank{f}}]}{\cX}_{i_0}}\rd \subs{x}{\subs{f[r_1, \ldots, r_{\rank{f}}]}{\cX}_{i_0}}\).
By transitivity of \(\mathord{\rd}\), we have that \(x_{i_0} \rd y_{i_0}\), which is a contradiction.

Thus, we conclude that \(\cH^{\mathsf{d}}(\rd,L)\) is co-deterministic.
\end{proof}

\begin{example}
\label{ex:cons2}
Now we give an example of the construction of Definition~\ref{def:down_TA_construction}. %
Consider the BTA \(\TA\) defined in Example~\ref{example:co-BTA}, i.e., \(\TA = \tuple{Q,\Sigma,\delta,F}\) with \(Q = \{q_0,q_1\}\); \(\Sigma = \Sigma_0\cup \Sigma_2\) with \(\Sigma_0 =\{\true,\false\}\) and \(\Sigma_2= \{\AND\}\); \(\{q_0\}= \delta(\AND[q_0,q_1]) = \delta(\AND[q_0,q_0]) = \delta(\AND[q_1,q_0]) = \delta(\false[\,])\), \(\{q_1\} = \delta(\AND[q_1,q_1]) =  \delta(\true[\,])\) and \(F = \{q_1\}\).
Recall that \(L = \lang{\TA}\) is the set of all trees \(t \in \cT_{\Sigma}\) that yield to propositional formulas, over the binary connective \(\land\) and the constants \(\true\) and \(\false\), that evaluate to \(\true\).
Hence, \(\lang{\TA}\) is path-closed.

\medskip
On the other hand, consider the downward congruence from Example~\ref{ex:congruences}, i.e., \(x \rd y \udiff x^{-1}L = y^{-1}L\), for every \(x,y \in \cC_{\Sigma}\).
As we shall see in Lemma~\ref{NerodeStronglyDownward}, since \(L\)  is path-closed, \(\rd\) is a strongly downward congruence w.r.t. \(L\).
It turns out that \(\rd\) defines two equivalence classes:
\begin{itemize}
\itemsep=0.9pt
\item \(P_{\rd}(\land[\true,\cX])\), with \((\land[\true, \cX])^{-1}L = L \), and
\item \(P_{\rd}(\land[\false,\cX])\), with \((\land[\false, \cX])^{-1}L = \varnothing\).
\end{itemize}
Intuitively,  \(P_{\rd}(\land[\true,\cX])\) is composed of those contexts \(x\) such that \(\subs{x}{\true[\,]}\) corresponds to a propositional formula that evaluates to \(\true\).
For instance, \(\{\land[\cX, \true], \cX[\,]\}\subseteq P_{\rd}(\land[\true,\cX]) \).
On the other hand, \(P_{\rd}(\land[\false,\cX])\) is composed of those such that \(\subs{x}{\true[\,]}\) is a formula that evaluates to \(\false\).
For example, \(\land[\cX, \false]\in P_{\rd}(\land[\false,\cX])\).

Thus, \mbox{\(\cH^{\mathsf{d}}(\ru,L) = \tuple{Q', \Sigma, \delta', F'}\) with \(Q' = \{ P_{\rd}(\land[\true,\cX])\}\)} and \(F' = \{P_{\rd}(\land[\true,\cX])\}\) since \linebreak \(P_{\rd}(\cX) = P_{\rd}(\land[\true,\cX]) \).
Note that \(P_{\rd}(\land[\false,\cX]) \notin Q'\) since \((\land[\false,\cX])^{-1}L = \varnothing\).
In fact, including \(P_{\rd}(\land[\false,\cX])\) in \(Q'\) would result in an unreachable state since:
\begin{enumerate}
\renewcommand\labelenumi{\theenumi}
\renewcommand{\theenumi}{(\roman{enumi})}
\item for every \(a \in \{\false,\true\}\), \(\subs{P_{\rd}(\land[\false,\cX])}{a} \cap L = \varnothing\), and
\item (trivially) \(\nexists t \in {(\land[\false,\cX])}^{-1}L\) \(\colon t(\varepsilon) = \land\) and \(\subs{P_{\rd}(\land[\false,\cX])}{\subs{t}{\cX}_i} \subseteq P_{\rd}(x_i)\), with \(x_i \in \cC_{\Sigma}\) and \(i \in \{1,2\}\).
\end{enumerate}
Finally, \(P_{\rd}(\land[1,\cX]) \in \delta' (\land[P_{\rd}(\land[\true,\cX]), P_{\rd}(\land[\true,\cX])])\) (set \(t = \land[\true,\true]\) in Definition~\ref{def:down_TA_construction}).\eox
\end{example}

It is worth noting that the DBTA \(\cH^{\mathsf{u}}\) built in Example~\ref{ex:cons1} and the co-DBTA \(\cH^{\mathsf{d}}\) from Example~\ref{ex:cons2} are indeed the \emph{minimal} DBTA and co-DBTA for their corresponding languages, respectively.
This is due to the fact that we used the Myhill-Nerode relation for tree languages (and its counterpart downward congruence).
We will recall the main properties of these congruences and introduce their approximation using BTAs in the following section.

\section{Language-based equivalences and their approximation using BTAs}\label{sec:CongruencesandBTA}

In this section, we instantiate the automata constructions from the previous section using two classes of congruences: \emph{language-based} congruences, whose definition relies on a regular tree language; and \emph{BTA-based} congruences, whose definition relies on a BTA.

\begin{definition}[Language-based equivalences]\label{def:languageEquivalences}
Let \(L\subseteq \cT_{\Sigma}\) be a tree language and let \(t, r \in \cT_{\Sigma}\) and \(x,y \in \cC_{\Sigma}\).
The \emph{upward and downward language-based equivalences} are, respectively:
\begin{align*}
t \ru_L r &\udiff L\,t^{-1} = L\,r^{-1}\\
x \rd_L y &\udiff x^{-1}L = y^{-1}L\enspace.
\end{align*}
\end{definition}

These equivalence relations are always congruences.
The upward language-based equivalence is also known as the Myhill-Nerode relation for tree languages.
As shown by Kozen~\cite{kozen1993MyhillTrees}, given a tree language \(L \subseteq \cT_{\Sigma}\), the Myhill-Nerode relation is an upward congruence of finite index if{}f \(L\) is regular.
Note that this congruence is the coarsest upward congruence enabling Lemma~\ref{HBPreservesL} (set \(\mathord{\ru}\) to be \(\mathord{\ru_L}\)).
Similarly, Nivat and Podelski~\cite{nivat1997minimal} showed that the downward language-based equivalence has finite index if{}f \(L\) is regular and it is the coarsest downward congruence that enables Lemma~\ref{HTPreservesL}.
Next we prove that \(\rd_L\) is a strongly downward congruence w.r.t. \(L\) when \(L\) is path-closed, hence
it is the coarsest downward congruence enabling Lemma~\ref{downwardIsDet} (set \(\rd\) to \(\rd_L\)).

\begin{lemma}
\label{NerodeStronglyDownward}
Let \(L \subseteq \cT_\Sigma\) be a path-closed language.
Then, \(\rd_L\) is a strongly downward congruence w.r.t. \(L\).
\end{lemma}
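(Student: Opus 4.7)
The plan is to unfold both definitions and reduce the statement to a single application of the path\nobreakdash-closedness characterisation~\eqref{eq:pathClosed}, essentially using that $x \rd_L y$ lets us move the context from $x$ to $y$ for free. Assume $x \rd_L y$, $t \in x^{-1}L$, $r \in y^{-1}L$ and $t(\varepsilon)=r(\varepsilon)=f$. Write $n \udiffg \rank{f}$, $t = f[t_1,\ldots,t_n]$ and $r = f[r_1,\ldots,r_n]$, and fix $i \in \range{n}$. Unravelling the notation, proving $\subs{x}{\subs{t}{\cX}_i} \rd_L \subs{y}{\subs{r}{\cX}_i}$ amounts to showing, for every $s \in \cT_\Sigma$,
\[
\subs{x}{f[t_1,\ldots,t_{i-1},s,t_{i+1},\ldots,t_n]} \in L \ \Lra\ \subs{y}{f[r_1,\ldots,r_{i-1},s,r_{i+1},\ldots,r_n]} \in L \enspace .
\]

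The key observation is that, since $x^{-1}L = y^{-1}L$ and both $t$ and $r$ belong to this common downward quotient, all four trees $\subs{x}{t},\, \subs{x}{r},\, \subs{y}{t},\, \subs{y}{r}$ lie in $L$. Now assume the left\nobreakdash-hand side of the biconditional. I would then apply the path\nobreakdash-closedness characterisation~\eqref{eq:pathClosed} (in its version for symbols of rank $n$, as mentioned in the excerpt right after the binary statement) to the two trees $\subs{x}{f[t_1,\ldots,t_{i-1},s,t_{i+1},\ldots,t_n]}$ and $\subs{x}{f[r_1,\ldots,r_n]}$, both of which lie in $L$. Selecting $s$ at position $i$ and $r_j$ at every position $j \neq i$ yields $\subs{x}{f[r_1,\ldots,r_{i-1},s,r_{i+1},\ldots,r_n]} \in L$, i.e., $f[r_1,\ldots,s,\ldots,r_n] \in x^{-1}L = y^{-1}L$, whence $\subs{y}{f[r_1,\ldots,s,\ldots,r_n]} \in L$. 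The reverse implication is symmetric, applying the same characterisation at $y$ to the trees $\subs{y}{f[r_1,\ldots,s,\ldots,r_n]}$ and $\subs{y}{f[t_1,\ldots,t_n]}$.

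One small preliminary that I would discharge at the very beginning is the (already essentially noted) fact that $\rd_L$ \emph{is} a downward congruence, a prerequisite for Definition~\ref{def:strong_downward_congruence}: from $x^{-1}L = y^{-1}L$ one gets, for every $c,s$, that $s \in \subs{x}{c}^{-1}L$ iff $\subs{c}{s} \in x^{-1}L = y^{-1}L$ iff $s \in \subs{y}{c}^{-1}L$, so $\subs{x}{c} \rd_L \subs{y}{c}$.

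The main obstacle is not conceptual but notational: carefully managing the nested substitutions $\subs{x}{\subs{t}{\cX}_i}[s]$ and matching them with the shape required by the path\nobreakdash-closedness characterisation. Beyond that, the argument is a one\nobreakdash-line invocation of~\eqref{eq:pathClosed} in each direction, made possible because $\rd_L$ lets us transport membership between the $x$\nobreakdash- and $y$\nobreakdash-worlds.
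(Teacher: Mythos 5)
Your proposal is correct and follows essentially the same route as the paper's proof: both reduce the claim to the quotient-transfer $x^{-1}L = y^{-1}L$ (your ``all four trees $\subs{x}{t},\subs{x}{r},\subs{y}{t},\subs{y}{r}$ lie in $L$'' is exactly the paper's step of producing $\subs{x}{\subs{r}{s'}_{i_0}}\in L$ from $r\in y^{-1}L$) followed by one invocation of the path-closedness characterisation~\eqref{eq:pathClosed} per direction. The only difference is presentational — you argue the biconditional directly while the paper argues by contradiction — and your preliminary check that $\rd_L$ is a downward congruence matches the paper's as well.
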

\begin{proof}
First, we show that \(\rd_L\) is a downward congruence by contradiction.
Let \(x,y,c \in \cC_\Sigma\) be such that \(x \rd_L y\) and \(\subs{x}{c} \not\rd_L \subs{y}{c}\).
Then, w.l.o.g, \(\exists t \in \cT_\Sigma\) such that \(\subs{x}{\subs{c}{t}} \in L\) and \(\subs{y}{\subs{c}{t}} \notin L\), hence \(\subs{c}{t} \in x^{-1}L\) while \(\subs{c}{t} \notin y^{-1}L\), which contradicts the fact that \(x \rd_L y\).

Now we show that \(\rd_L\) is strongly downward w.r.t. \(L\).
Consider \(x, y \in \cC_\Sigma\) s.t. \(x \rd_L y\) and \(t, r \in \cT_{\Sigma}\) with \(t \in x^{-1}L\), \(r \in y^{-1}L\) and \(t(\varepsilon)= r(\varepsilon)\).
Assume that \(\rd_L\) is not a strongly downward congruence, i.e. there exists \(i_0 \in \range{\rank{f}}\) such that \(\subs{x}{\subs{t}{\cX}_{i_0}} \not\rd_L \subs{y}{\subs{r}{\cX}_{i_0}}\).
Then, there exists \(s \in \cT_\Sigma\) such that \(\subs{x}{\subs{t}{s}_{i_0}} \in L\), while \(\subs{y}{\subs{r}{s}_{i_0}} \notin L\).

On the one hand, since \(r \in y^{-1}L\), there exists \(s' \in \cT_{\Sigma}\) such that \(\subs{y}{\subs{r}{s'}_{i_0}} \in L\).
Since \(x \rd_L y\), i.e., \(x^{-1}L = y^{-1}L \), we have that \(\subs{x}{\subs{r}{s'}_{i_0}} \in L\).
Therefore, since \(\subs{x}{\subs{t}{s}_{i_0}} \in L\), \(\subs{x}{\subs{r}{s'}_{i_0}} \in L\) and  \(L\) is path-closed, by definition (see \eqref{eq:pathClosed} in Section~\ref{sec:AutomataConstruction}), we have that \(\subs{x}{\subs{r}{s}_{i_0}} \in L\).
Since \(x \rd_L y\), it follows that \(\subs{y}{\subs{r}{s}_{i_0}} \in L\) as well, which is a  contradiction.

We conclude that \(\rd_L\) is a strongly downward congruence w.r.t. \(L\).
\end{proof}

Now, we propose congruences based on the states of a given BTA.
These BTA-based congruences are finer than (or equal to) the corresponding language-based ones and are thus said to \emph{approximate} the language-based congruences.

\medskip
In order to define the downward BTA-based congruences, we first introduce the notion of \emph{root-to-pivot equivalence} between two contexts w.r.t. a BTA \(\TA = \tuple{Q, \Sigma, \delta, F}\) and a subset of states \(S\subseteq Q\).
Intuitively, two contexts \(x,y \in \cC_{\Sigma}\) are root-to-pivot equivalent w.r.t. \(S\) if{}f
\begin{enumerate}
\renewcommand\labelenumi{\theenumi}
\renewcommand{\theenumi}{(\roman{enumi})}
\itemsep=0.85pt
\item their pivots (the nodes with the label \(\cX\)) coincide and so do the paths from their roots to their pivots, and
\item either \(x\) and \(y\) belong to the upward language of some state (w.r.t. \(S\)), or none of them does.
\end{enumerate}
\begin{definition}[Root-to-pivot equivalence]
\label{def:root-pivot}
Let \({\TA} = \tuple{Q, \Sigma, \delta, F}\) be a BTA and \(S \subseteq Q\).
We say that \(x, y \in \cC_{\Sigma}\) are \emph{root-to-pivot equivalent} w.r.t. \(S\), denoted by \(x \simp^S y\), if{}f
\begin{enumerate}
\renewcommand\labelenumi{\theenumi}
\renewcommand{\theenumi}{(\roman{enumi})}
\itemsep=0.85pt
\item \(\piv(x)=\piv(y)\) and \(\forall p,p'\in (\mathbb{N}_+)^*\) if \(p\concat p' = \piv(x) \) then \(x(p) = y(p) \), and
\item \(x \in \mathcal{L}_{\ua}(q,S)\) for some \(q\in Q\) if and only if \(y \in \mathcal{L}_{\ua}(q',S)\) for some \(q' \in Q\).
\end{enumerate}
The root-to-pivot equivalence need not have finite index.
\end{definition}

We will simply write \(x \simp y\) when \(S = F\).   Let us fix intuitions with the following example.
\eject
\begin{example}
\label{ex:root-to-pivot}
Consider the BTA \(\TA\) given in Example~\ref{example:BTA}, i.e., \(\TA = \tuple{Q,\Sigma_0\cup \Sigma_2,\delta,F}\) with \(Q = \{q_0,q_1\}\), \(\Sigma_0 =\{\true,\false\}\), \(\Sigma_2= \{\AND,\OR\}\), \(\{q_0\}= \delta(\AND[q_0,q_1]) = \delta(\AND[q_0,q_0]) = \delta(\AND[q_1,q_0]) = \delta(\OR[q_0,q_0]) = \delta(\false[\,])\), \(\{q_1\} = \delta(\AND[q_1,q_1]) = \delta(\OR[q_1,q_0]) = \delta(\OR[q_0,q_1]) = \delta(\OR[q_1,q_1]) = \delta(\true[\,])\) and \(F = \{q_1\}\).
Thus, \(\lang{\TA}\) is defined as the set of all trees of the form \(t \in \cT_{\Sigma_0 \cup \Sigma_2}\) which yield to propositional formulas, over the binary connectives \(\land\) and \(\lor\) and the constants \(\true\) and \(\false\), that evaluate to \(\true\).

Let \(S = F\), then we have that \(x = \lor[\lor[\true,\false], \cX]\) and \(y = \lor[\lor[\true,\true], \cX]]\) are root-to-pivot equivalent w.r.t. \(S\), i.e., \(x \simp y\).
In fact, \(\piv(x) = \piv(y) = 2\) and for all \(p,p'\in (\mathbb{N}_+)^*\) if \(p\concat p' = \piv(x) \) then \(x(p) = y(p) \).
Moreover, \(q_1\in Q\) is s.t. \(x \in \mathcal{L}_{\ua}(q_1,S)\) and \(y \in \mathcal{L}_{\ua}(q_1,S)\).

On the other hand, we have that \(x' =\land[\land[\true,\true], \cX]\) and \(y' = \land[\land[\true,\false], \cX]\) are \emph{not} root-to-pivot equivalent w.r.t. \(S\).
Note that, despite of having that \(\piv(x') = \piv(y') = 2\) and for all \(p,p'\in (\mathbb{N}_+)^*\) if \(p\concat p' = \piv(x) \) then \(x(p) = y(p) \), \(\nexists q \in Q \colon y' \in \mathcal{L}_{\ua}(q, S)\), while \(x' \in \mathcal{L}_{\ua}(q_1,S) \).
In fact, \((y')^{-1}L = \varnothing\) while \((x')^{-1}L  \neq \varnothing\).
Intuitively, in the absence of unreachable states, the second condition in the definition of root-to-pivot equivalence prevents from defining two equivalent contexts when one of them defines an empty downward quotient and the other does not.
\eox
\end{example}
Now we define the \(\post(\cdot)\) and \(\pre(\cdot)\) operators for trees.
Given a BTA \(\TA\), a tree \(t \in \cT_{\Sigma}\) and a set \(S \subseteq Q\), \(\post^{\TA}_t(S)\) is the set of states appearing at the root after \(\TA\) is done processing the tree \(t\) provided that this processing labelled the leaves of \(t\) with states in \(S\).
On the other hand, given a context \(x \in \cC_{\Sigma}\) and \(S \subseteq Q\), \(\pre^{\TA}_x(S)\) contains the states \(q\) such that the result of \(\TA\) processing \(\subs{x}{q}\) has its root labelled with a state in \(S\).
Furthermore, \(\pre^{\TA}_x(S)\) contains the states \(q\) obtained as above this time starting from \(\subs{y}{q}\) where \(y\in \cC_\Sigma\) is root-to-pivot equivalent to \(x\).

\begin{definition}[Post and pre operators]\label{def:postpre}
Let \({\TA} = \tuple{Q, \Sigma, \delta, F}\) be a BTA and let \(t \in \cT_\Sigma\), \(x \in \cC_\Sigma\) and \(S \subseteq Q\).
Define: %
\begin{align*}
\post^{\TA}_t(S) &\udiffg  \{q \in Q \mid t \in \mathcal{L}^{\TA}_{\da}(q,S)\}\\
\pre^{\TA}_x(S) & \udiffg  \{q \in Q \mid x \in P_{\simp^S}(\mathcal{L}^{\TA}_{\ua}(q, S))\} \enspace . \vspace*{-2mm}
\end{align*}
\end{definition}

As usual, we will omit the superscript \(\TA\) when it is clear from the context.
Note that, for every \(x \in \cC_{\Sigma}, S \subseteq Q \colon \pre_x(S) = \varnothing \Lra \nexists q \in Q \colon x \in \mathcal{L}_{\ua}(q,S)\).

Let us illustrate these definitions with an example.
\begin{example}
As in the previous example, consider the BTA \(\TA\) from Example~\ref{example:BTA}.
Given the tree \(t = \land[\lor[\true,\false], \land[\true,\true]]\), we have that \(\post_{t}(\{q_0,q_1\}) = \{q_1\}\), while \(\post_{t}(\{q_0\}) = \varnothing\).

On the other hand, given the context \(x = \lor[\land[\true,\false], \cX]\), we have that \(x \in P_{\simp}(\mathcal{L}_{\ua}(q_1))\) since \(x \in \mathcal{L}_{\ua}(q_1)\).
Thus, \(q_1 \in \pre_x(F)\).
Moreover, \(x \in P_{\simp}(\mathcal{L}_{\ua}(q_0))\) since \( y = \lor[\land[\true,\true], \cX]\) is s.t. \(y \simp x\) and \(y \in \mathcal{L}_{\ua}(q_0)\).
Hence, \(\pre_x(F) = \{q_0, q_1\}\).

\medskip
Finally, given the context \(x' = \land[\lor[\false,\false], \cX]\),  \(\pre_{x'}(F) = \varnothing\) since \(\nexists q \in Q \colon x' \in \mathcal{L}_{\ua}(q)\).
\eox
\end{example}

Note that the notion of root-to-pivot equivalence in the definition of \(\pre^{\TA}_x(S)\) allows us to prove that the downward BTA-based equivalence (see Definition~\ref{def:automataEquivalences}) is
\begin{enumerate}
\renewcommand\labelenumi{\theenumi}
\renewcommand{\theenumi}{(\roman{enumi})}
\itemsep=0.9pt
\item strongly downward w.r.t. \(\lang{\TA}\) (see Lemma~\ref{automataCongruences}\ref{lemma:automataCongruences:downAT}) and
\eject
\item mimics the construction of each set \(R_i\), from the definition of  bottom-up co-determinization given in Section~\ref{sec:preliminaries}, i.e., we produce the same set of states as that of \(\TA^{\mathsf{cD}}\) by constructing the sets \(\pre^{\TA}_x(F)\), with  \(x \in \cC_\Sigma\).
\end{enumerate}
The reader is referred to Example~\ref{example:new-pre} for fixing these intuitions.

Our next step is to establish basic properties of the \(\post(\cdot)\) and \(\pre(\cdot)\) operator, the upward and downward languages they induce and their relationships with upward and downward quotients.
Before to establish those properties, we need three technical lemmas.

\begin{lemma}\label{lemma:simpImpliesQuotientEqual}
Let \(\cA = \tuple{Q,\Sigma,\delta,F}\) be a BTA without  unreachable states and such that \(L = \lang{\TA}\) is a path-closed language.
Then, \( x \simp y  \Ra x^{-1}L = y^{-1}L\), for every \(x,y \in \cC_{\Sigma}\).
\end{lemma}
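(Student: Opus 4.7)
It suffices to prove $t \in x^{-1}L \Ra t \in y^{-1}L$, the converse being symmetric. The approach has three stages: (i) from $\subs{x}{t}\in L$, produce a companion tree $\subs{y}{r}\in L$ whose path to the pivot matches that of $\subs{x}{t}$; (ii) establish a swap lemma that iteratively leverages path-closedness along that path; (iii) apply the swap lemma to conclude $\subs{y}{t}\in L$.

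For stage (i) I would take any accepting run of $\cA$ on $\subs{x}{t}$ and read off the state $q\in Q$ that it assigns to the pivot position $p:=\piv(x)$; the portion of the run above the pivot witnesses $x \in \mathcal{L}_{\ua}(q,F)$. Clause (ii) of $\simp$ then furnishes $q' \in Q$ with $y \in \mathcal{L}_{\ua}(q',F)$, and, since $\cA$ has no unreachable states, there exists $r \in \mathcal{L}_{\da}(q')$. Processing $r$ first (reaching $q'$ at the root of $r$) and then applying the accepting moves for $\subs{y}{q'}$ yields $\subs{y}{r} \in L$.

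For stage (ii) I would prove, by induction on $|p|$ with an inductive hypothesis strengthened to allow an enclosing context $C \in \cC_\Sigma$, the following swap lemma: if $\subs{C}{u_1}, \subs{C}{u_2} \in L$ and $u_1(v) = u_2(v)$ for every strict prefix $v$ of $p$, then $\subs{C}{u^*} \in L$ for every tree $u^*$ obtained from $u_1$ by replacing any subset of its off-path subtrees (those rooted at positions $v \concat \ell$ with $v$ a strict prefix of $p$ and $v\concat \ell$ not itself a prefix of $p$) by the corresponding subtrees of $u_2$, while retaining $u_1$'s subtree at $p$. The inductive step applies path-closedness (Eq.~\eqref{eq:pathClosed}, extended to arbitrary rank) at the common root symbol of $u_1, u_2$ twice---once keeping the on-path child of $u_1$ and once that of $u_2$---producing two trees of $L$ that already agree with $u^*$ at level $0$ and differ only in the on-path subtree; factoring out their common enclosing context then lets the IH propagate the swap one level deeper. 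For stage (iii), I would apply the swap lemma with $C = \cX$, $u_1 = \subs{x}{t}$, $u_2 = \subs{y}{r}$, and $p = \piv(x) = \piv(y)$; the label-agreement condition is supplied by clause (i) of $x \simp y$, and choosing $u^*$ to take every off-path subtree from $\subs{y}{r}$ (which is $y$'s off-path subtree, since $r$ sits at $p$) while keeping $t$ at the pivot yields $u^* = \subs{y}{t}\in L$, i.e.\ $t\in y^{-1}L$.

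The main obstacle is the swap lemma itself: \eqref{eq:pathClosed} only permits a single-node swap, so propagating the swap all the way down the pivot path requires both the strengthened IH (to carry a surrounding context along the induction) and a doubled application of the rule at each level, so that the next layer still has two witnesses in $L$ to continue working with.
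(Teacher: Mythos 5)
Your proof is correct. The underlying engine is the same as the paper's --- an induction along the root-to-pivot path in which path-closedness is used level by level to exchange off-path subtrees --- but the two arguments are organized quite differently. The paper inducts on the $\cX$-height of the contexts themselves, taking the quotient equality $x^{-1}L=y^{-1}L$ as the inductive statement: it writes $x=\subs{x'}{x''}$ and $y=\subs{y'}{y''}$ with $x'',y''$ carrying only the deepest level of the path, checks $x'\simp y'$ (which needs clause (ii) of $\simp$ and the absence of unreachable states again at the shorter level), uses the inductive hypothesis to transfer $\subs{x''}{t}$ from $(x')^{-1}L$ to $(y')^{-1}L$, and then applies \eqref{eq:pathClosed} once at the bottom. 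You instead quarantine the automaton into a single preliminary step (manufacturing the companion $\subs{y}{r}\in L$ from clause (ii) and reachability of $q'$) and prove a purely language-theoretic swap lemma by induction on $\len{p}$, peeling the path from the top and carrying the enclosing context in the strengthened hypothesis; this costs two applications of path-closedness per level instead of one, plus the extra bookkeeping of the ``any subset of off-path subtrees'' formulation. What your route buys is modularity: the swap lemma is a statement about path-closed languages alone (an iterated form of the Nivat--Podelski characterization \eqref{eq:pathClosed}), reusable independently of $\simp$ and of the automaton, and the empty-quotient case needs no separate treatment since $t\in x^{-1}L\Ra t\in y^{-1}L$ is then vacuous. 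What the paper's route buys is brevity: no auxiliary lemma, with the inductive hypothesis doing all the off-path swapping at higher levels ``for free'' through quotient equality.
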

\begin{proof}
Recall that \(\simp\) shortly denotes \(\simp^F\).
By definition, \mbox{\(x \simp y\)} if{}f:
\begin{enumerate}
\renewcommand\labelenumi{\theenumi}
\renewcommand{\theenumi}{(\roman{enumi})}
\item \(\piv(x)=\piv(y)\) and \(\forall p,p'\in (\mathbb{N}_+)^*\) if \(p\concat p' = \piv(x) \) then \(x(p) = y(p) \), and
\item \(\exists q \in Q \colon x \in \mathcal{L}_{\ua}(q) \Lra \exists q' \in Q \colon y \in \mathcal{L}_{\ua}(q')\).
\end{enumerate}

Note that if \(x \simp y\) then the \(\cX\)-height of \(x\) and \(y\) coincide.
The proof goes by induction on the \(\cX\)-height of \(x\) and \(y\).
\begin{itemize}
\itemsep=0.9pt
\item \emph{Base case:} Let \(\height_{\cX}(x) = \height_{\cX}(y) =1\).
Then \(x = y = \cX[\,]\) and \(x^{-1}L = y^{-1}L = L\).

\item \emph{Inductive step:} Assume that the hypothesis holds for all contexts up to \(\cX\)-height \(n\).
Let \(x,y \in \cC_\Sigma\)  be such that \(x \simp y\) and \(\height_{\cX}(x) = \height_{\cX}(y) = n{+}1\).

First, assume that \(\nexists q \in Q \colon x \in \mathcal{L}_{\ua}(q)\) and \(\nexists q' \in Q \colon y \in \mathcal{L}_{\ua}(q')\).
Then, since \(\TA\) has no unreachable states we have that \(x^{-1}L = y^{-1}L = \varnothing\).

On the other hand, assume that \(\exists q \in Q \colon x \in \mathcal{L}_{\ua}(q)\) and \(\exists q' \in Q \colon y \in \mathcal{L}_{\ua}(q')\).
Since \(\TA\) has no unreachable states, we have that \(\exists t,r \in \cT_\Sigma \colon \subs{x}{t},\subs{y}{r} \in L\).

It is easy to check that \(\exists x'\!,x''\!,y'\!,y''\! \in \cC_\Sigma \colon \height_{\cX}(x') = \height_{\cX}(y') = n{-}1\), \( \height_{\cX}(x'')=\height_{\cX}(y'') = 2\), \(\subs{x'}{x''} = x \) and \(\subs{y'}{y''}=y\).
On the other hand, since \(\exists q \in Q \colon x \in \mathcal{L}_{\ua}(q)\) and \(\exists q' \in Q \colon y \in \mathcal{L}_{\ua}(q')\) then \(\exists \tilde{q} \in Q \colon x' \in \mathcal{L}_{\ua}(\tilde{q})\) and \(\exists \tilde{q}' \in Q \colon y' \in \mathcal{L}_{\ua}(\tilde{q}')\).
Thus, we have that \(x' \simp y'\).

W.l.o.g. and for the sake of clarity, let us consider \( f \in \Sigma\) with \(\rank{f} = 2\) such that \(x'' = f[\cX, t_2]\) and \(y'' = f[\cX,r_2]\) with \(t_2,r_2 \in \cT_\Sigma\).
Then:
\begin{align*}
\subs{x}{t} \in L, \subs{y}{r} \in L & \Lra  \\
\subs{x'}{\subs{x''}{t}} \in L, \subs{y'}{\subs{y''}{r}} \in L & \Ra^{\dagger} \\
\subs{y'}{\subs{x''}{t}} \in L, \subs{y'}{\subs{y''}{r}}  \in L & \Lra \\
\subs{y'}{f[t,t_2]} \in L, \subs{y'}{f[r,r_2]} \in L
& \Ra^{\dagger\dagger} \\
\subs{y'}{f[t,r_2]} \in L & \Lra  \\
\subs{y'}{\subs{y''}{t}} \in L & \Lra \\
\subs{y}{t} \in L \enspace .
\end{align*}

Note that implication \(\dagger\) holds since, by H.I., \((y')^{-1}L = (x')^{-1}L\).
On the other hand, implication \(\dagger\dagger\) holds since \(L\) is path-closed.

We conclude that \(x^{-1}L \subseteq y^{-1}L\).
The proof for the reverse inclusion is symmetric.
Therefore, \(x^{-1}L = y^{-1}L\).
\end{itemize}

\vspace*{-7mm}
\end{proof}

\begin{lemma}
\label{lemma:subs-pre}
Let \(\cA = \tuple{Q,\Sigma,\delta,F}\) be a BTA without unreachable states and such that \(L = \lang{\TA}\) is path-closed then, for every \(x,y \in \cC_{\Sigma}\):
\[
\exists q \in Q \colon \subs{x}{y} \in \mathcal{L}_{\ua}(q) \text{ if{}f }
\exists q_1 \in Q \colon y \in \mathcal{L}_{\ua}(q_1,\pre_x(F)) \enspace .
\]
\end{lemma}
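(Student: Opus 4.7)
The plan is to unpack both sides of the equivalence in terms of actual accepting runs of \(\cA\) and then bridge the gap introduced by the root-to-pivot equivalence \(\simp\) that appears in the definition of \(\pre_{x}(F)\). A common observation underlies both directions: once a subtree has been reduced to some \(t'\in\cT_{Q}\) with root \(q\), the moves applicable at its ancestors see only \(q\) at that position, so the subsequent computation is interchangeable with the one starting from \(\subs{c}{q}\).

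For the direction \((\Ra)\), assume \(\subs{x}{y}\in\mathcal{L}^{\cA}_{\ua}(q,F)\) for some \(q\). Then \(\subs{x}{\subs{y}{q}}\to^{*}_{\cA}t\) with \(t\in\cT_{Q}\) and \(t(\varepsilon)\in F\). Reorganizing the run so that the occurrence of \(\subs{y}{q}\) at \(\piv(x)\) is fully reduced first, we get \(\subs{y}{q}\to^{*}_{\cA}t_{1}\) with \(t_{1}(\varepsilon)=q^{*}\) for some \(q^{*}\in Q\), followed by \(\subs{x}{t_{1}}\to^{*}_{\cA}t\). The first step shows \(y\in\mathcal{L}^{\cA}_{\ua}(q,\{q^{*}\})\), and the second, which only depends on \(q^{*}\) sitting at \(\piv(x)\), equally witnesses \(x\in\mathcal{L}^{\cA}_{\ua}(q^{*},F)\). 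Since \(x\simp x\) trivially, \(q^{*}\in\pre_{x}(F)\), hence \(y\in\mathcal{L}^{\cA}_{\ua}(q,\pre_{x}(F))\).

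The \((\La)\) direction is the main obstacle. Starting from \(y\in\mathcal{L}^{\cA}_{\ua}(q_{1},\pre_{x}(F))\), we obtain \(\subs{y}{q_{1}}\to^{*}_{\cA}t_{1}\) with \(t_{1}(\varepsilon)=q^{*}\in\pre_{x}(F)\), so by definition of \(\pre_{x}(F)\) there exists \(x^{*}\simp x\) with \(x^{*}\in\mathcal{L}^{\cA}_{\ua}(q^{*},F)\). Composing runs exactly as in the previous direction but with \(x^{*}\) in place of \(x\) gives \(\subs{x^{*}}{y}\in\mathcal{L}^{\cA}_{\ua}(q_{1},F)\). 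The difficulty is that the side subtrees of \(x^{*}\) and of \(x\) may differ, so we cannot blindly replay this run over \(x\) itself.

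To close this gap, the plan is to pass through downward quotients. First, since \(\cA\) has no unreachable states, for every \(c\in\cC_{\Sigma}\) we have \(\exists q\colon c\in\mathcal{L}^{\cA}_{\ua}(q,F)\) if and only if \(c^{-1}L\neq\varnothing\): the forward direction picks any witness \(t\in\mathcal{L}^{\cA}_{\da}(q)\) (which exists by reachability) and pastes the acceptance of \(\subs{c}{q}\) on top of the run that reduces \(t\) to \(q\); the backward direction reads off from any accepting run of \(\subs{c}{t}\) the state assigned at \(\piv(c)\) and transfers it to a run of \(\subs{c}{q}\). Second, from the identity \((\subs{c_{1}}{c_{2}})^{-1}L=\{t\mid\subs{c_{2}}{t}\in c_{1}^{-1}L\}\) together with Lemma~\ref{lemma:simpImpliesQuotientEqual}---which is precisely where path-closedness of \(L\) enters to yield \((x^{*})^{-1}L=x^{-1}L\)---we conclude \((\subs{x^{*}}{y})^{-1}L=(\subs{x}{y})^{-1}L\). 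Combining both observations, \((\subs{x^{*}}{y})^{-1}L\neq\varnothing\) forces \((\subs{x}{y})^{-1}L\neq\varnothing\), hence \(\exists q\colon\subs{x}{y}\in\mathcal{L}^{\cA}_{\ua}(q,F)\), which finishes the proof.
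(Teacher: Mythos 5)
Your proposal is correct and follows essentially the same route as the paper's proof: the $(\Ra)$ direction is the same run decomposition, and the $(\La)$ direction hinges, exactly as in the paper, on picking a witness $x^*\simp x$ from the definition of $\pre_x(F)$, invoking Lemma~\ref{lemma:simpImpliesQuotientEqual} (where path-closedness and absence of unreachable states enter) to get $(x^*)^{-1}L = x^{-1}L$, and using reachability to produce and then read back a witness tree. Your detour through the equivalence ``$\exists q\colon c\in\mathcal{L}_{\ua}(q,F)$ iff $c^{-1}L\neq\varnothing$'' is just a mildly more explicit packaging of the same argument.
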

\begin{proof}
Recall that \(\mathcal{L}_{\ua}(q)\) denotes \(\mathcal{L}_{\ua}(q, F)\).

\smallskip
(\(\Ra\)). We assume that \(\exists q \in Q \colon \subs{x}{y} \in \mathcal{L}_{\ua}(q)\), i.e., \(\exists q \in Q, \exists t, s \in \cT_Q \colon \subs{y}{q} \to^*_{\TA} t\) and \(\subs{x}{ t(\varepsilon)} \to^*_{\TA} s, s(\varepsilon) \in F\).
By setting \(q_1 = q\), we have that \(\subs{y}{q_1} \to^*_{\TA} t\), where \(t(\varepsilon) \in \pre_x(F)\), since \(\exists z \in \cC_{\Sigma}\colon z \simp^F x\) s.t. \(z \in \mathcal{L}_{\ua}(t(\varepsilon))\) by setting \(z = x\).
Thus, \(\exists q_1 \in Q\colon y \in \mathcal{L}_{\ua}(q_1, \pre_x(F))\).

\medskip
(\(\La\)). Assume now that \(\exists q_1 \in Q \colon y \in \mathcal{L}_{\ua}(q_1,\pre_x(F))\), i.e., \(\exists t \in \cT_{Q} \colon \subs{y}{q_1} \to^*_{\TA} t, t(\varepsilon) \in \pre_x(F)\).
We will show that \(\exists q \in Q \colon \subs{y}{q} \to^*_{\TA} t'\) and \(\subs{x}{t'(\varepsilon)} \to^*_{\TA} s', s'(\varepsilon) \in F\).

By hypothesis, \(t(\varepsilon) \in \pre_x(F)\).
In the case that \(x \in \mathcal{L}_{\ua}(t(\varepsilon))\), then indeed, \(\exists q \in Q \colon \subs{y}{q} \to^*_{\TA} t'\) and \(\subs{x}{t'(\varepsilon)} \to^*_{\TA} s', s'(\varepsilon) \in F\), by setting \(q = q_1\) and \( t'(\varepsilon) =  t(\varepsilon)\).
Now, let us assume that \(t(\varepsilon) \in \pre_x(F)\) but \(x \notin \mathcal{L}_{\ua}(t(\varepsilon))\).
By definition of \(t(\varepsilon) \in \pre_x(F)\), we have that \(\exists \tilde{x} \in \cC_{\Sigma}\colon \tilde{x} \simp^F x\) s.t. \(\tilde{x} \in \mathcal{L}_{\ua}(t(\varepsilon))\), where \(\tilde{x} \neq x\).
Since \(\TA\) has no unreachable states, \(\subs{y}{q_1} \to^*_{\TA} t\) and \(\tilde{x} \in \mathcal{L}_{\ua}(t(\varepsilon))\), we have that \(\exists \tilde{t} \in \cT_{\Sigma} \colon \subs{y}{\tilde{t}} \in \tilde{x}^{-1}L\).
Since \(\tilde{x} \simp^F x\) and \(L\) is path-closed, by Lemma~\ref{lemma:simpImpliesQuotientEqual}, we have that \(\tilde{x}^{-1}L = x^{-1}L\).
Thus, \(\subs{y}{\tilde{t}} \in x^{-1}L\).
We conclude that, since \(\TA\) has no unreachable states, \(\exists q \in Q \colon \subs{x}{y} \in \mathcal{L}_{\ua}(q)\).
\end{proof}

\begin{lemma}
\label{lemma:simp_associativity}
Let \(\cA = \tuple{Q,\Sigma,\delta,F}\) be a BTA without unreachable states and such that \(\lang{\TA}\) is path-closed.
Then, for every \(x_1, x_2,y_1,y_2 \in \cC_{\Sigma}\) with \(\height_{\cX}(x_1) = \height_{\cX}(x_2)\), \(\subs{x_1}{y_1} \in \mathcal{L}_{\ua}(q)\) and \(\subs{x_2}{y_2} \in \mathcal{L}_{\ua}(q')\) for some \(q,q' \in Q\):
\begin{equation*}
\subs{x_1}{y_1} \simp \subs{x_2}{y_2} \Lra x_1 \simp x_2 \text{ and } y_1 \simp^{\pre_{x_1}(F)} y_2 \enspace .
\end{equation*}
\end{lemma}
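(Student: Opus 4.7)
The plan is to prove each direction of the biconditional by decomposing the root-to-pivot equivalence $\simp^S$ into its two defining conditions: (i) the structural equality on pivots and on the path-to-pivot labels, and (ii) the existence of a state whose upward language (relative to $S$) contains the context in question.

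For condition (i), I will use the identity $\piv(\subs{x}{y})=\piv(x)\concat\piv(y)$ together with the hypothesis $\height_{\cX}(x_1)=\height_{\cX}(x_2)$, which yields $|\piv(x_1)|=|\piv(x_2)|$. The common path $\piv(\subs{x_1}{y_1})=\piv(\subs{x_2}{y_2})$ then splits uniquely at position $|\piv(x_1)|$, producing simultaneously $\piv(x_1)=\piv(x_2)$ and $\piv(y_1)=\piv(y_2)$, and the corresponding label equalities split along this same boundary. Since the argument is fully reversible, condition (i) decomposes in both directions at once.

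Condition (ii) requires more care. For the $(\La)$ direction it is immediate: since both $\subs{x_1}{y_1}$ and $\subs{x_2}{y_2}$ belong to some upward language by assumption, the biconditional required by (ii) for $\subs{x_1}{y_1}\simp\subs{x_2}{y_2}$ is trivially true. For the $(\Ra)$ direction, condition (ii) of $x_1\simp x_2$ follows from decomposing an accepting run of $\subs{\subs{x_i}{y_i}}{q_i}$: the state reached at the root of $\subs{y_i}{q_i}$ witnesses that $x_i$ itself lies in some upward language. To establish condition (ii) of $y_1\simp^{\pre_{x_1}(F)}y_2$, I invoke Lemma~\ref{lemma:subs-pre} on both sides, obtaining $y_1\in\mathcal{L}_\ua(q_1,\pre_{x_1}(F))$ and $y_2\in\mathcal{L}_\ua(q_1',\pre_{x_2}(F))$ for suitable $q_1,q_1'$.

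The main obstacle -- and the reason the statement uses $\pre_{x_1}(F)$ rather than $\pre_{x_2}(F)$ -- is reconciling these two pre-image sets. I will establish as an auxiliary claim that $x_1\simp^F x_2$ implies $\pre_{x_1}(F)=\pre_{x_2}(F)$: unfolding $\pre_x(F)=\{q\in Q\mid \exists z\simp^F x\colon z\in\mathcal{L}_\ua(q,F)\}$, this reduces to the transitivity of $\simp^F$, which is immediate because both defining clauses of $\simp^F$ are themselves equivalence conditions. With this claim in hand, the upward-language assertion for $y_2$ rewrites as $y_2\in\mathcal{L}_\ua(q_1',\pre_{x_1}(F))$, yielding condition (ii) of $y_1\simp^{\pre_{x_1}(F)}y_2$ and completing the proof.
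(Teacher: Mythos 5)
Your proof is correct and follows essentially the same route as the paper's: decompose $\simp$ into its two defining conditions, handle the structural condition via the splitting of the pivot path at $\len{\piv(x_1)}$, and settle the upward-language condition through Lemma~\ref{lemma:subs-pre} together with the fact that $x_1 \simp x_2$ forces $\pre_{x_1}(F)=\pre_{x_2}(F)$. Your two small deviations are both sound improvements in economy: observing that condition (ii) for $\subs{x_1}{y_1}\simp\subs{x_2}{y_2}$ is trivially met under the lemma's standing hypotheses (where the paper runs a chain of equivalences through Lemma~\ref{lemma:subs-pre}), and explicitly justifying $\pre_{x_1}(F)=\pre_{x_2}(F)$ via transitivity of $\simp^F$, a step the paper asserts without proof.
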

\begin{proof}
Recall  that \(x, y\) are \emph{root-to-pivot equivalent} w.r.t. \(S \subseteq Q\), denoted by \(x \simp^S y\), if{}f:
\begin{enumerate}
\renewcommand\labelenumi{\theenumi}
\renewcommand{\theenumi}{(\roman{enumi})}
\item \(\piv(x)=\piv(y)\) and \(\forall p,p'\in (\mathbb{N}_+)^*\) if \(p\concat p' = \piv(x) \) then \(x(p) = y(p) \), and
\item \(\exists q \in Q \colon x \in \mathcal{L}_{\ua}(q,S) \Lra \exists q' \in Q \colon y \in \mathcal{L}_{\ua}(q',S)\).
\end{enumerate}
The reader can easily check that the double implication in the statement: \(\subs{x_1}{y_1} \simp \subs{x_2}{y_2} \Lra x_1 \simp x_2 \text{ and } y_1 \simp^{\pre_{x_1}(F)} y_2 \) holds w.r.t. condition \((i)\). In particular, the left-to-right implication holds as \(\height_{\cX}(x_1) = \height_{\cX}(x_2)\).

\medskip
We now focus on proving that the double implication holds w.r.t. condition \((ii)\) in the above definition.
Recall that \(\mathcal{L}_{\ua}(q)\) denotes \(\mathcal{L}_{\ua}(q, F)\).\\
(\(\Ra\)).
By hypothesis, \(\exists q \in Q\colon \subs{x_1}{y_1} \in \mathcal{L}_{\ua}(q)\) and  \(\exists q' \in Q\colon \subs{x_2}{y_2} \in \mathcal{L}_{\ua}(q')\).
Then, clearly, \(\exists q_1 \in Q\colon x_1 \in \mathcal{L}_{\ua}(q_1)\) and \(\exists q_2 \in Q\colon x_2 \in \mathcal{L}_{\ua}(q_2)\).
Therefore, \(x_1 \simp x_2\).
Also, by hypothesis and Lemma~\ref{lemma:subs-pre}, \(\exists q'_1 \in Q\colon y_1 \in \mathcal{L}_{\ua}(q'_1, \pre_{x_1}(F))\) and \(\exists q'_2 \in Q\colon y_2 \in \mathcal{L}_{\ua}(q'_2, \pre_{x_2}(F))\).
Since \(x_1 \simp x_2\) implies that \(\pre_{x_1}(F) = \pre_{x_2}(F)\), we have that \(y_1 \simp^{\pre_{x_1}(F)} y_2\). \smallskip\\
(\(\La\)). Finally we show the right-to-left implication:
\begin{align*}
\exists q \in Q\colon \subs{x_1}{y_1} \in \mathcal{L}_{\ua}(q) &\Lra \\
\exists q_1 \in Q \colon y_1 \in \mathcal{L}_{\ua}(q_1, \pre_{x_1}(F)) &\Lra\\
\exists q_2 \in Q \colon y_2 \in \mathcal{L}_{\ua}(q_2, \pre_{x_2}(F)) &\Lra \\
\exists q' \in Q\colon \subs{x_2}{y_2} \in \mathcal{L}_{\ua}(q') \enspace.
\end{align*}
Note that the first and last double-implications hold by Lemma~\ref{lemma:subs-pre}, while the second one holds by hypothesis.
Finally, we conclude that \(\subs{x_1}{y_1} \simp \subs{x_2}{y_2}\).
\end{proof}
We now can turn back to the lemmas on the \(\post(\cdot)\) and \(\pre(\cdot)\) operators, the upward and downward languages they induce and their relationship to upward and downward quotient.

\begin{lemma}
\label{relStateLanguage}
Let \(\TA = (Q, \Sigma, \delta, F)\) be a BTA with \(L = \lang{\TA}\) and let \(t \in \cT_{\Sigma}\) and \(x \in \cC_{\Sigma}\).
Then the following hold:
\begin{enumerate}
\renewcommand\labelenumi{\theenumi}
\renewcommand{\theenumi}{(\alph{enumi})}
\item \(\bigcup_{q \in \post_{t}(\initials(\TA))} \mathcal{L}_{\ua}(q) = L\,t^{-1}\), and \label{lemma:rel-state-language-up}
\item If \(\TA\) has no unreachable states and \(L\) is path-closed, then \(\bigcup_{q \in \pre_{x}(F)} \mathcal{L}_{\da}(q) = x^{-1}L \enspace\). \label{lemma:rel-state-language-dw}
\end{enumerate}
\end{lemma}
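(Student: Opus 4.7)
My plan is to treat the two statements in turn, with (b) essentially reduced to (a) via the key Lemma~\ref{lemma:simpImpliesQuotientEqual} on root-to-pivot equivalent contexts.

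\medskip
For part~\ref{lemma:rel-state-language-up}, the core observation is a \emph{run decomposition} principle: for every \(x\in\cC_\Sigma\), \(t\in\cT_\Sigma\) and \(q\in Q\),
\begin{equation*}
\subs{x}{t}\to^*_{\TA} s \text{ with } s(\varepsilon)\in F \;\Lra\; \exists q\in Q,\ t\to^*_{\TA} t' \text{ with } t'(\varepsilon)=q \text{ and } \subs{x}{q}\to^*_{\TA} s' \text{ with } s'(\varepsilon)\in F.
\end{equation*}
The forward direction is obtained by observing that any run on \(\subs{x}{t}\) can be reordered so that all moves whose firing node lies inside the subtree rooted at \(\piv(x)\) fire first; the subsequent moves then act only on the part of the tree coming from \(x\), and their firing depends solely on the root state of the processed subtree. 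The reverse direction composes the two runs: \(\subs{x}{t}\to^*_{\TA}\subs{x}{t'}\), and then the run \(\subs{x}{q}\to^*_{\TA} s'\) can be replayed verbatim on \(\subs{x}{t'}\) since \(\to_{\TA}\) depends only on the root state at the pivot, and \(t'(\varepsilon)=q\). Both directions are formalizable by induction on the height of \(t\), using Lemma~\ref{DownwardLanguages} to expose the inductive structure of \(\mathcal{L}_{\da}\). Once this decomposition is in hand, part~\ref{lemma:rel-state-language-up} is immediate: the right-hand side \(\{q : t\in\mathcal{L}_{\da}(q)\}\) (which by definition of \(\mathcal{L}_{\da}(q)=\mathcal{L}_{\da}(q,\initials(\TA))\) equals \(\post_t(\initials(\TA))\)) paired with the existence of an accepting completion of \(\subs{x}{q}\) (which is exactly \(x\in\mathcal{L}_{\ua}(q)\)) captures precisely \(x\in Lt^{-1}\).

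\medskip
For part~\ref{lemma:rel-state-language-dw}, the inclusion \((\subseteq)\) is an immediate corollary of~\ref{lemma:rel-state-language-up}: given \(t\in x^{-1}L\), i.e.\ \(x\in Lt^{-1}\), part~\ref{lemma:rel-state-language-up} yields \(q\) with \(t\in\mathcal{L}_{\da}(q)\) and \(x\in\mathcal{L}_{\ua}(q)\); by reflexivity of \(\simp^F\), \(x\) itself witnesses \(x\in P_{\simp^F}(\mathcal{L}_{\ua}(q,F))\), so \(q\in\pre_x(F)\). For the inclusion \((\supseteq)\), take \(q\in\pre_x(F)\) and \(t\in\mathcal{L}_{\da}(q)\). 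By Definition~\ref{def:postpre} there exists \(y\in\cC_\Sigma\) with \(y\simp^F x\) and \(y\in\mathcal{L}_{\ua}(q)\). Applying the decomposition principle above to \(y\) and \(t\) (here using \(q\in\post_t(\initials(\TA))\), which holds since \(t\in\mathcal{L}_{\da}(q)\)) gives \(\subs{y}{t}\in L\), i.e.\ \(t\in y^{-1}L\). Since \(\TA\) has no unreachable states and \(L\) is path-closed, Lemma~\ref{lemma:simpImpliesQuotientEqual} applies to yield \(y^{-1}L = x^{-1}L\), whence \(t\in x^{-1}L\).

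\medskip
The main technical obstacle is the rigorous justification of the run decomposition used in~\ref{lemma:rel-state-language-up}; although intuitively clear from the local and bottom-up nature of \(\to_{\TA}\), it requires a careful induction showing that moves inside the subtree at \(\piv(x)\) commute with moves outside it, and that a run on \(\subs{x}{q}\) can be ``transplanted'' onto \(\subs{x}{t'}\) with \(t'(\varepsilon)=q\). Once this is cleanly stated, everything else is a direct unfolding of definitions together with the single invocation of Lemma~\ref{lemma:simpImpliesQuotientEqual} that makes the hypothesis on path-closedness and reachability essential.
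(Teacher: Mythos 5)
Your proposal is correct and follows essentially the same route as the paper: part~\ref{lemma:rel-state-language-up} via splitting an accepting run on \(\subs{x}{t}\) at the pivot into a run on \(t\) reaching some \(q\in\post_t(\initials(\TA))\) followed by a run witnessing \(x\in\mathcal{L}_{\ua}(q)\), and part~\ref{lemma:rel-state-language-dw} by combining that decomposition with Lemma~\ref{lemma:simpImpliesQuotientEqual} to replace a root-to-pivot equivalent witness \(y\) by \(x\). The only difference is presentational: you isolate and sketch a justification for the run-decomposition step that the paper asserts implicitly in its \(\dagger\) equality, which if anything makes the argument more complete.
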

\begin{proof}
  We start by recalling that  \(\initials(\TA)=\{q \in Q \mid \exists a \in \Sigma_0 \colon q \in \delta(a[\,])\}\) as defined in Section~\ref{sec:bta}.

\begin{enumerate}
\renewcommand\labelenumi{\theenumi}
\renewcommand{\theenumi}{(\alph{enumi})}
\item  \(\bigcup_{q \in \post_{t}(\initials(\TA))} \mathcal{L}_{\ua}(q) = Lt^{-1}\).

Recall that \(\mathcal{L}_{\ua}(q)\) simply denotes \(\mathcal{L}_{\ua}(q,F)\).
To simplify further the notation, denote \(\initials(\TA)\) as \(I\).
\begin{align*}
L\,t^{-1} & =  \\
\{x \in \cC_\Sigma \mid \subs{x}{t} \in L\} & =  \\
\{x \in \cC_\Sigma \mid \exists t' \!\in\! \cT_Q \colon  \subs{x}{t} \to^*_{{\TA}} t',\, t'(\varepsilon) \in F,\, \leaf(t') \subseteq I\} & =^{\dagger} \\
\{x \in \cC_\Sigma \mid \exists t' \!\in\! \cT_Q \colon  \subs{x}{q} \to^*_{{\TA}} t',\, t'(\varepsilon) \in F,\, q \in \post_t (I)\}& = \\
\bigcup_{q \in \post_{t}(I)} \mathcal{L}_{\ua}(q) \enspace .
 \end{align*}
Note that, from the statement to the left of the equality \(\dagger\) we have that  \(t \in \mathcal{L}_{\da}(q, I)\) for some \(q \in Q\), which is equivalent to the fact \(q \in \post_t(I)\).

\item If \(\TA\) has no unreachable states and \(L\) is path-closed, then \(\bigcup_{q \in \pre_{x}(F)} \mathcal{L}_{\da}(q) = x^{-1}L \enspace\).\smallskip

Recall that \(\mathcal{L}_{\da}(q)\) denotes \(\mathcal{L}_{\da}(q,\initials(\TA))\), \(\mathcal{L}_{\ua}(q)\) denotes \(\mathcal{L}_{\ua}(q, F)\) and \(\simp\)  denotes  \(\simp^F\).
We have that:
\begin{align*}
x^{-1}L & = \\
\{t \in \cT_\Sigma \mid \subs{x}{t} \in L\} & =^{\dagger} \\
\{t \in \cT_\Sigma \mid \exists y \in \cC_\Sigma \colon x \simp y, \subs{y}{t} \in L\} & = \\
\{t \in \cT_{\Sigma} \mid \exists t' \in \cT_Q  \colon
t \to_{{\TA}}^* t', t'(\varepsilon) = q, x\in P_{\simp}(\mathcal{L}_{\ua}(q))\} &= \\
\{t \in \cT_{\Sigma} \mid  \exists t' \in \cT_Q \colon t \to_{{\TA}}^* t', t'(\varepsilon) = q,  q \in \pre_x(F)\} & = \\
\bigcup_{q \in \pre_{x}(F)} \mathcal{L}_{\da}(q) \enspace ,
\end{align*}
where equality \(\dagger\) holds by Lemma~\ref{lemma:simpImpliesQuotientEqual}.
Namely, since \(L\) is path-closed and \(\TA\) has no unreachable states, then for every \(x,y \in \cC_{\Sigma}\) s.t. \(x \simp y\) we have that \(x^{-1}L = y^{-1}L\).
\end{enumerate}

\vspace*{-6mm}
\end{proof}

\begin{lemma}
\label{prepost}
Let \({\TA} = \tuple{Q,\Sigma,\delta,F}\) be a BTA. %
Let \(S \subseteq Q\), \(f \in \Sigma\), \(t_1,\ldots,t_{\rank{f}} \in \cT_\Sigma\) and let \(x,y \in \cC_\Sigma\).
Then the following hold:
\begin{enumerate}
\renewcommand\labelenumi{\theenumi}
\renewcommand{\theenumi}{(\alph{enumi})}
\itemsep=0.9pt
\item \(\post_{f[t_1,\ldots,t_{\rank{f}}]}(S) =
  \delta( \{ f[\post_{t_1}(S),\ldots,\post_{t_{\rank{f}}}(S)] \} )\).\label{lemma:prepost:prop:tree}
\item If \(\TA\) has no unreachable states and \(\lang{\TA}\) is path-closed then \(\pre_{\subs{x}{y}}(F)  = \pre_y(\pre_x(F))\).\label{lemma:prepost:prop:ctx}
\end{enumerate}
\end{lemma}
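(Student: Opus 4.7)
The plan is to tackle both parts by unfolding the definitions of $\post(\cdot)$ and $\pre(\cdot)$ in terms of the underlying semantics and then leveraging the preceding lemmas to transport those semantics through context composition.

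For part~\ref{lemma:prepost:prop:tree} I would argue directly from the definition $\post_t(S) = \{q \mid t \in \mathcal{L}_{\da}(q,S)\}$ together with the one-step move relation at the root. The observation is that $f[t_1,\ldots,t_{\rank{f}}] \to^*_{\TA} t'$ with $t'(\varepsilon)=q$ and $\leaf(t') \subseteq S$ if and only if there exist states $q_1,\ldots,q_{\rank{f}}$ and trees $t'_i \in \cT_Q$ with $t_i \to^*_{\TA} t'_i$, $t'_i(\varepsilon)=q_i$, $\leaf(t'_i) \subseteq S$, and $q \in \delta(f[q_1,\ldots,q_{\rank{f}}])$. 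Unfolding the conditions on the $t_i$'s into $q_i \in \post_{t_i}(S)$ and repackaging yields exactly $q \in \delta(\{f[\post_{t_1}(S),\ldots,\post_{t_{\rank{f}}}(S)]\})$. This part is purely mechanical; no induction on the height of $t$ is needed, since $\post_{t_i}(S)$ already encodes all runs below the root.

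For part~\ref{lemma:prepost:prop:ctx} the bulk of the work reduces to two applications of Lemma~\ref{lemma:simp_associativity} coupled with Lemma~\ref{lemma:subs-pre}, one in each direction. Starting with $(\subseteq)$, let $q \in \pre_{\subs{x}{y}}(F)$, witnessed by $z \in \mathcal{L}_{\ua}(q)$ with $z \simp \subs{x}{y}$; since $\piv(z) = \piv(x) \concat \piv(y)$, we can canonically split $z = \subs{x'}{y'}$ with $\height_{\cX}(x') = \height_{\cX}(x)$. The $\simp$-clause (ii) forces $\exists q' \colon \subs{x}{y} \in \mathcal{L}_{\ua}(q')$, which together with $\subs{x'}{y'} \in \mathcal{L}_{\ua}(q)$ unlocks Lemma~\ref{lemma:simp_associativity} to obtain $x' \simp x$ and $y' \simp^{\pre_{x'}(F)} y$. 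Because $x' \simp x$ implies $\pre_{x'}(F)=\pre_x(F)$, the second equivalence becomes $y' \simp^{\pre_x(F)} y$. Finally, chasing the accepting run for $\subs{\subs{x'}{y'}}{q}$ exhibits an intermediate state $q_1 \in \pre_{x'}(F) = \pre_x(F)$ certifying $y' \in \mathcal{L}_{\ua}(q, \pre_x(F))$, so $q \in \pre_y(\pre_x(F))$.

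The reverse inclusion $(\supseteq)$ runs symmetrically. Given $q \in \pre_y(\pre_x(F))$, pick $y' \simp^{\pre_x(F)} y$ and an accepting run $\subs{y'}{q} \to^*_{\TA} t'$ with $t'(\varepsilon)=q_1 \in \pre_x(F)$; the definition of $\pre_x(F)$ produces some $x' \simp x$ with $x' \in \mathcal{L}_{\ua}(q_1)$, and concatenating the two runs yields $\subs{x'}{y'} \in \mathcal{L}_{\ua}(q)$. To invoke Lemma~\ref{lemma:simp_associativity} in the reverse direction and conclude $\subs{x'}{y'} \simp \subs{x}{y}$, I also need $\subs{x}{y}$ to lie in some upward language; this is supplied by Lemma~\ref{lemma:subs-pre} applied to the existence side of $y' \simp^{\pre_x(F)} y$. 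Setting $z = \subs{x'}{y'}$ then witnesses $q \in \pre_{\subs{x}{y}}(F)$.

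The main obstacle I anticipate is the bookkeeping of which equivalence $\simp^S$ and which state set $S$ is active at each step, in particular verifying $\pre_{x'}(F) = \pre_x(F)$ from $x' \simp x$ so that the intermediate relations compose cleanly, and ensuring all height-to-pivot conditions hold before invoking Lemma~\ref{lemma:simp_associativity}. Once the decomposition $z = \subs{x'}{y'}$ is justified and these side conditions are checked, the statement follows directly from the two auxiliary lemmas.
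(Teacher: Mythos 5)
Your proposal is correct and follows essentially the same route as the paper: part (a) by directly unfolding the definitions of $\post(\cdot)$, downward languages and $\delta$ on sets without induction, and part (b) by splitting the witness context $z\simp\subs{x}{y}$ at $\cX$-height $\height_{\cX}(x)$ and invoking Lemma~\ref{lemma:simp_associativity} in each direction, with Lemma~\ref{lemma:subs-pre} supplying the non-emptiness hypothesis $\exists q'\colon\subs{x}{y}\in\mathcal{L}_{\ua}(q')$ needed for the reverse inclusion. The side conditions you flag (the height constraint and $\pre_{x'}(F)=\pre_x(F)$ from $x'\simp x$) are exactly the ones the paper checks.
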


\begin{proof}\hfill

\vspace*{-6mm}
\begin{enumerate}
\renewcommand\labelenumi{\theenumi}
\renewcommand{\theenumi}{(\alph{enumi})}
\item \(\post_{f[t_1,\ldots,t_{\rank{f}}]}(S) =
  \delta( \{ f[\post_{t_1}(S),\ldots,\post_{t_{\rank{f}}}(S)] \} )\). \vspace*{-2mm}
  \begin{align*}
\post_{f[t_1,\ldots, t_{\rank{f}}]}(S) & =  \\
\{q  \mid f[t_1,\ldots, t_{\rank{f}}] \in \mathcal{L}_{\da}(q,S)\} & = \\
\{q \mid \forall i \in \range{\rank{f}}, \exists t_i' \in \cT_{Q} \colon  t_i \to^*_{\cA} t'_i,  t_i'(\varepsilon) =  q_i,&\\
 \leaf(t'_i) \subseteq S, q \in \delta(f[q_1,\ldots, q_{\rank{f}}])\}&=\\
\{q \mid \forall i \in \range{\rank{f}}, \exists q_i \in \post_{t_i}(S) \colon q\in \delta(f[q_1,\ldots, q_{\rank{f}}])\} &= \\
\delta( \{ f[\post_{t_1}(S),\ldots,\post_{t_{\rank{f}}}(S)] \}  ) \enspace .
\end{align*}
Note that the first three equalities hold by definition of \(\post(\cdot)\), by definition of downward language of \(q\) w.r.t. \(S\) and by definition of the transition function \(\delta\) respectively.
Finally, the last equality holds by definition of \(\delta\) extended to sets.

\item If \(\TA\) has no unreachable states and \(\lang{\TA}\) is path-closed then \(\pre_{\subs{x}{y}}(F)  = \pre_y(\pre_x(F))\).

First, note that \(\mathcal{L}_{\ua}(q)\) denotes \(\mathcal{L}_{\ua}(q,F)\), and \(\simp\) denotes \(\simp^F\).
Also recall that, by definition of root-to-pivot equivalence, \(x \simp y\) if{}f:
\begin{enumerate}
\renewcommand\labelenumi{\theenumi}
\renewcommand{\theenumi}{(\roman{enumi})}
\item \(\piv(x)=\piv(y)\) and \(\forall p,p'\in (\mathbb{N}_+)^*\) if \(p\concat p' = \piv(x) \) then \(x(p) = y(p) \), and
\item \(\exists q \in Q \colon x \in \mathcal{L}_{\ua}(q) \Lra \exists q' \in Q \colon y \in \mathcal{L}_{\ua}(q')\).
\end{enumerate}
(\(\Ra\)). We first prove that \(\forall q \in Q \colon q \in \pre_{\subs{x}{y}}(F) \Ra q \in \pre_x(\pre_y(F))\).
By definition we have that: \vspace*{-2mm}
\begin{align}
q \in \pre_{\subs{x}{y}}(F) &\Lra \quad \notag\\
\subs{x}{y} \in P_{\simp^F}(\mathcal{L}_{\ua}(q)) &\Lra \quad \notag\\
\exists z \in \cC_\Sigma \text{ with } z \simp \subs{x}{y} \colon z \in \mathcal{L}_{\ua}(q) & \enspace .\quad \label{eq:proof_pre_1}
\end{align}

Note that for every \(z \in \cC_{\Sigma}\) and for every \(i \in \range{\height_{\cX}(z)}\), there exists \(\tilde{x},\tilde{y} \in \cC_{\Sigma}\) such that \(z = \subs{\tilde{x}}{\tilde{y}}\) and \(\height_{\cX}(\tilde{x}) = i\).
Using this, we rewrite statement~\eqref{eq:proof_pre_1} as follows: \vspace*{-2mm}
\begin{align*}
q \in \pre_{\subs{x}{y}}(F) &\Lra\\
\exists \tilde{x},\tilde{y} \in \cC_\Sigma, \subs{\tilde{x}}{\tilde{y}} \simp\subs{x}{y}, \height_{\cX}(\tilde{x}) = \height_{\cX}(x) \colon \subs{\tilde{x}}{\tilde{y}}\in \mathcal{L}_{\ua}(q) & \enspace .
\end{align*}
Using the definition of \(\mathcal{L}_{\ua}(q)\) and \(\to^*_{\cA}\):
\begin{align}
q \in \pre_{\subs{x}{y}}(F) &\Lra \notag\\
\exists \tilde{x},\tilde{y} \!\in \cC_\Sigma, \subs{\tilde{x}}{\tilde{y}} \simp \subs{x}{y}, \height_{\cX}(\tilde{x}) = \height_{\cX}(x),\exists t,r \!\in \cT_Q &\colon \notag\\
\subs{\tilde{y}}{q}\to^*_{\cA} r , \subs{\tilde{x}}{r(\varepsilon)} \to^*_{\cA} t , t(\varepsilon) \in F &\enspace . \label{eq:equivalence_together}
\end{align}

Note that we are under the conditions of Lemma~\ref{lemma:simp_associativity}.
Specifically, \(\TA\) has no unreachable states and \(\lang{\TA}\) is path-closed, by hypothesis.
Furthermore, \(\height_{\cX}(x) = \height_{\cX}(\tilde{x})\), \(\subs{\tilde{x}}{\tilde{y}} \in \mathcal{L}_{\ua}(q)\) and, since \(\subs{x}{y} \simp \subs{\tilde{x}}{\tilde{y}}\), we have that \(\exists q' \in Q\colon \subs{x}{y} \in \mathcal{L}_{\ua}(q')\).
Therefore, by Lemma~\ref{lemma:simp_associativity}, we have that \(\subs{\tilde{x}}{\tilde{y}} \simp \subs{x}{y} \Lra\) \mbox{\(\tilde{x} \simp x\)}, and \(\tilde{y} \simp^{\pre_x(F)} y\).
Thus: \vspace*{-2mm}
\begin{align}
q \in \pre_{\subs{x}{y}}(F) &\Ra \notag \\
\exists \tilde{x},\tilde{y} \in \cC_\Sigma, (\tilde{x} \simp x), (\tilde{y}\simp^{\pre_x(F)} y), \exists t,r \in \cT_Q   &\colon \notag\\
\subs{\tilde{y}}{q} \to^*_{\cA} r , \subs{\tilde{x}}{r(\varepsilon)} \to^*_{\cA} t , t(\varepsilon) \in F &\enspace \label{eq:hypotheses} .
\end{align}
Notice that \(r(\varepsilon) \in \pre_{x}(F)\) since \( \tilde{x} \simp x\) and \(\tilde{x} \in \mathcal{L}_{\ua}(r(\varepsilon))\).
Then:
\begin{align*}
\exists \tilde{x},\tilde{y} \in \cC_\Sigma, (\tilde{x} \simp x), (\tilde{y}\simp^{\pre_x(F)} y), \exists t,r \in \cT_Q   &\colon \notag\\
\subs{\tilde{y}}{q} \to^*_{\cA} r , \subs{\tilde{x}}{r(\varepsilon)} \to^*_{\cA} t , t(\varepsilon) \in F \notag &\Lra \\
\exists  \tilde{y} \in \cC_\Sigma, \tilde{y} \simp^{\pre_x(F)} y,\exists r \in \cT_Q &\colon\\
\subs{\tilde{y}}{q} \to^*_{\cA} r,  r(\varepsilon) \in \pre_{x}(F) &\Lra \notag\\
\exists  \tilde{y} \in \cC_\Sigma, \tilde{y} \simp^{\pre_x(F)} y\colon \tilde{y} \in \mathcal{L}_{\ua}(q, \pre_x(F)) &\Lra \notag\\
y\in P_{\simp^{\pre_x(F)}}(\mathcal{L}_{\ua}(q, \pre_x(F))) &\Lra \notag\\
q \in \pre_y(\pre_x(F)) \notag \enspace .
\end{align*}

(\(\La\)). Now we show that \(\forall q \in Q\colon q \in \pre_y(\pre_x(F))\Ra q \in \pre_{\subs{x}{y}}(F)\).
Observe that, from the hypothesis \(q \in \pre_y(\pre_x(F))\), we can follow the same reasoning as the one above bottom-up up to Equation~\eqref{eq:hypotheses}, as every statement is chained with a \(\Lra\) symbol.
Thus, we have that: \vspace*{-2mm}
\begin{align}
q \in \pre_{x}(\pre_y(F)) &\Lra \notag \\
\exists \tilde{x},\tilde{y} \in \cC_\Sigma, (\tilde{x} \simp x), (\tilde{y}\simp^{\pre_x(F)} y), \exists t,r \in \cT_Q   &\colon \notag\\
\subs{\tilde{y}}{q} \to^*_{\cA} r , \subs{\tilde{x}}{r(\varepsilon)} \to^*_{\cA} t , t(\varepsilon) \in F &\enspace \label{eq:hypotheses2} .
\end{align}

Now we check that we are under the hypotheses of Lemma~\ref{lemma:simp_associativity}.
First, we have that \(\height_{\cX}(x) = \height_{\cX}(\tilde{x})\) since \(\tilde{x} \simp x\).
Also, \(\exists q \in Q\colon \subs{\tilde{x}}{\tilde{y}} \in \mathcal{L}_{\ua}(q)\) (see Equation~\eqref{eq:hypotheses2}).
Finally, using that \(\tilde{x}\simp x\) (which implies that \(\exists q_1 \in Q\colon x \in \mathcal{L}_{\ua}(q_1)\)) and \(\tilde{y} \simp^{\pre_x(F)} y\) (which implies that \(\exists q_2 \in Q\colon y \in \mathcal{L}_{\ua}(q_2, \pre_x(F))\)) together with Lemma~\ref{lemma:subs-pre}, we conclude that \(\exists q' \in Q\colon \subs{x}{y} \in \mathcal{L}_{\ua}(q')\).
Thus, we are indeed under the conditions of Lemma~\ref{lemma:simp_associativity} and we have that:
\begin{align}
q \in \pre_{x}(\pre_y(F)) &\Ra \notag \\
\exists \tilde{x},\tilde{y} \!\in \cC_\Sigma, \subs{\tilde{x}}{\tilde{y}} \simp \subs{x}{y}, \height_{\cX}(\tilde{x}) = \height_{\cX}(x), \exists t,r \!\in \cT_Q &\colon \notag\\
\subs{\tilde{y}}{q}\to^*_{\cA} r , \subs{\tilde{x}}{r(\varepsilon)} \to^*_{\cA} t , t(\varepsilon) \in F &\enspace . \label{eq:hypotheses3}
\end{align}
Finally, from Equation~\eqref{eq:hypotheses3}, we can follow the same reasoning as the one given above from Equation~\eqref{eq:equivalence_together} bottom-up up to the statement \(q \in \pre_{\subs{x}{y}}(F)\), as every statement is chained with a \(\Lra\) symbol.

We conclude that \(\forall q \in Q, q \in \pre_{\subs{x}{y}}(F) \Lra q \in \pre_y(\pre_x(F))\).
\end{enumerate}
\end{proof}

We are now in position to introduce BTA-based equivalences.
\begin{definition}[BTA-based equivalences]
\label{def:automataEquivalences}
Let \({\TA} = \tuple{Q,\Sigma,\delta,F}\) be a BTA and let \(t, r \in \cT_{\Sigma}\) and \(x, y \in \cC_{\Sigma}\).
The \emph{upward and downward BTA-based equivalences} are respectively given by: %
\begin{align*}
  t \ruA r &\udiff \post_{t}(\initials({\TA})) = \post_{r}(\initials({\TA}))\\
  x \rdA y &\udiff \pre_{x}(F) = \pre_{y}(F)\enspace.
\end{align*}
\end{definition}

Note that these equivalences have finite index since tree automata have finitely many states.
Next, we show that \(\ruA\) and \(\rdA\) enable Lemmas~\ref{HBPreservesL} and~\ref{downwardIsDet} respectively.

\begin{lemma}
\label{automataCongruences}
Let \({\TA}\) be a BTA with \(L = \lang{{\TA}}\).
Then,
\begin{enumerate}
\renewcommand\labelenumi{\theenumi}
\renewcommand{\theenumi}{(\alph{enumi})}
\item \(\ruA\) is an upward congruence and \(\mathord{\ruA} \subseteq \mathord{\ru_L}\).\label{lemma:automataCongruences:upAB}
\item If \(\TA\) has no unreachable states and \(L\) is path-closed then \(\rdA\) is a strongly downward congruence w.r.t. \(L\) and \( \mathord{\rdA} \subseteq  \mathord{\rd_L}\). \label{lemma:automataCongruences:downAT}
\end{enumerate}
\end{lemma}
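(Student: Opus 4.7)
The plan is to treat the two parts separately, leveraging the structural identities for \(\post\) and \(\pre\) in Lemma~\ref{prepost} together with the language characterisations in Lemma~\ref{relStateLanguage}.

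For part (a), I would first show that \(\ruA\) is an upward congruence by induction on \(\height_{\cX}(c)\). The base case \(c=\cX[\,]\) is trivial since then \(\subs{c}{t}=t\) and \(\subs{c}{r}=r\). For the inductive step I decompose \(c=f[c_1,\ldots,c_{\rank{f}}]\), where exactly one child \(c_i\) contains the pivot and the remaining \(c_j\) are ground trees; Lemma~\ref{prepost}\ref{lemma:prepost:prop:tree} applied to \(\subs{c}{t}\) then factors \(\post_{\subs{c}{t}}(\initials(\TA))\) through \(\post_{\subs{c_i}{t}}(\initials(\TA))\) and the fixed \(\post\)-sets of the other children, so the induction hypothesis on \(c_i\) yields \(\post_{\subs{c}{t}}(\initials(\TA)) = \post_{\subs{c}{r}}(\initials(\TA))\). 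The inclusion \(\mathord{\ruA}\subseteq\mathord{\ru_L}\) is then immediate from Lemma~\ref{relStateLanguage}\ref{lemma:rel-state-language-up}, since identical \(\post\)-sets produce identical upward quotients.

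For part (b), the fact that \(\rdA\) is a downward congruence is a direct application of Lemma~\ref{prepost}\ref{lemma:prepost:prop:ctx} (whose hypotheses are exactly those assumed in part (b)): given \(x\rdA y\) and any \(c\in\cC_\Sigma\), we have \(\pre_{\subs{x}{c}}(F) = \pre_c(\pre_x(F)) = \pre_c(\pre_y(F)) = \pre_{\subs{y}{c}}(F)\), i.e. \(\subs{x}{c}\rdA\subs{y}{c}\). The inclusion \(\mathord{\rdA}\subseteq\mathord{\rd_L}\) is equally immediate from Lemma~\ref{relStateLanguage}\ref{lemma:rel-state-language-dw}.

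The main obstacle is to verify that \(\rdA\) is strongly downward w.r.t.\ \(L\). Given \(x\rdA y\), \(t\in x^{-1}L\), \(r\in y^{-1}L\) with common root symbol \(f\), and \(i\in\range{\rank{f}}\), a further application of Lemma~\ref{prepost}\ref{lemma:prepost:prop:ctx} reduces the goal \(\pre_{\subs{x}{\subs{t}{\cX}_i}}(F)=\pre_{\subs{y}{\subs{r}{\cX}_i}}(F)\) to showing \(\pre_{c_t}(S)=\pre_{c_r}(S)\), where \(c_t=\subs{t}{\cX}_i\), \(c_r=\subs{r}{\cX}_i\) and \(S=\pre_x(F)=\pre_y(F)\). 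To prove this, I unfold the definition of \(\pre_c(S)\) via the root-to-pivot equivalence \(\simp^S\): both \(c_t\) and \(c_r\) carry the same label \(f\) on the unique path from root to pivot, so clause (i) of Definition~\ref{def:root-pivot} is automatic. Using \(\mathord{\rdA}\subseteq\mathord{\rd_L}\) together with Lemma~\ref{relStateLanguage}\ref{lemma:rel-state-language-dw} I obtain \(x^{-1}L=y^{-1}L=\bigcup_{s\in S}\mathcal{L}_{\da}(s)\); applying Lemma~\ref{DownwardLanguages} to the roots of \(t\) and \(r\) then exhibits runs witnessing \(c_t,c_r\in\mathcal{L}_{\ua}(\cdot,S)\), so clause (ii) of Definition~\ref{def:root-pivot} is also satisfied and \(c_t,c_r\) lie in the same \(\simp^S\)-class. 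Once this is established, the absence of unreachable states allows every tuple of states to be realised as \(\post\)-values of suitable witness trees, so that for any \(c\) in this common \(\simp^S\)-class, \(q\in\pre_c(S)\) becomes equivalent to the purely \(\delta\)-based condition \(\exists p_1,\ldots,p_{i-1},p_{i+1},\ldots,p_{\rank{f}}\in Q\colon \delta(f[p_1,\ldots,p_{i-1},q,p_{i+1},\ldots,p_{\rank{f}}])\cap S\neq\varnothing\), which depends only on \(f\), \(i\) and \(S\) and not on \(t\) versus \(r\). I expect this witness-independent reformulation of \(\pre_c(S)\), and the careful handling of \(\simp^S\) it requires, to be the most delicate step of the proof.
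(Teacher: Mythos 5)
Your proposal is correct, and parts (a) and the downward-congruence/inclusion claims of (b) follow the paper's proof almost verbatim (the paper states the one-step closure under a symbol $f$ and leaves the iteration over the context implicit, where you make the induction on $\height_{\cX}(c)$ explicit; both rest on Lemma~\ref{prepost}\ref{lemma:prepost:prop:tree} and Lemma~\ref{relStateLanguage}). Where you genuinely diverge is the strongly-downward verification. The paper never invokes the composition identity here: it works with the \emph{full} contexts $\subs{x}{\subs{t}{\cX}_i}$ and $\subs{x}{\subs{r}{\cX}_i}$, shows they are root-to-pivot equivalent w.r.t.\ $F$ (clause (i) because the root-to-pivot paths agree, clause (ii) because $t,r\in x^{-1}L$ and the absence of unreachable states put both contexts in some upward language), concludes $\pre_{\subs{x}{\subs{t}{\cX}_i}}(F)=\pre_{\subs{x}{\subs{r}{\cX}_i}}(F)$ by transitivity of $\simp$, and then bridges $x$ to $y$ by a separate appeal to the already-proved downward-congruence property plus transitivity of $\rdA$. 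You instead peel off the outer context via Lemma~\ref{prepost}\ref{lemma:prepost:prop:ctx}, reducing everything in one shot to $\pre_{\subs{t}{\cX}_i}(S)=\pre_{\subs{r}{\cX}_i}(S)$ with $S=\pre_x(F)=\pre_y(F)$, and then establish $\subs{t}{\cX}_i\simp^S\subs{r}{\cX}_i$; this avoids the two-step transitivity dance and is arguably cleaner, at the cost of re-invoking the (nontrivial) composition lemma. One remark: your final ``witness-independent reformulation'' of $\pre_c(S)$ via $\delta$ is unnecessary. Once $\subs{t}{\cX}_i\simp^S\subs{r}{\cX}_i$ is in hand, $q\in\pre_{\subs{t}{\cX}_i}(S)$ means some $z\simp^S\subs{t}{\cX}_i$ lies in $\mathcal{L}_{\ua}(q,S)$, and transitivity of $\simp^S$ makes this condition identical for the two contexts; so the step you flag as the most delicate is in fact immediate from the definition of $\pre$, exactly as in the paper's argument.
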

\begin{proof} Let \({\TA} = \tuple{Q, \Sigma, \delta, F}\) and, to simplify the notation, let \(I\) be the set \(\initials({\TA})\).
\begin{enumerate}
\renewcommand\labelenumi{\theenumi}
\renewcommand{\theenumi}{(\alph{enumi})}
\item \(\ruA\) is an upward congruence and \( \mathord{\ruA}\subseteq \mathord{\ru_L}\).

We first prove that \(\ruA\) is an upward congruence.
Let \(f \in \Sigma\) and \(t_i,r_i \in \cT_\Sigma\) for \(i \in \range{\rank{f}}\) be such that \(t_i \ruA r_i\).
By Definition~\ref{def:automataEquivalences}, we have that \(\post_{t_i}(I) = \post_{r_i}(I)\), for each \( i \in \range{\rank{f}}\).
Therefore, by Lemma~\ref{prepost}\ref{lemma:prepost:prop:tree}, we conclude that \(\post_{f[t_1,\ldots,t_{\rank{f}}]}(I) = \post_{f[r_1,\ldots,r_{\rank{f}}]}(I)\), i.e., \(f[t_1, \ldots, t_{\rank{f}}] \ruA f[r_1, \ldots, r_{\rank{f}}]\).

Next, we show that \(\forall t,r \in \cT_\Sigma \colon  t \ruA r \Ra L\,t^{-1} = L\,r^{-1}\).
Since \(t \ruA r\), i.e., \(\post_t(I) = \post_r(I)\), we have that \(\bigcup_{q \in \post_{t}(I)} \mathcal{L}_{\ua}(q) = \bigcup_{q \in \post_{r}(I)} \mathcal{L}_{\ua}(q)\).
By Lemma~\ref{relStateLanguage}\ref{lemma:rel-state-language-up}, we conclude that \(Lt^{-1} = Lr^{-1}\).

\item If \(\TA\) has no unreachable states and \(L\) is path-closed then \(\rdA\) is a strongly downward congruence w.r.t. \(L\) and \( \mathord{\rdA} \subseteq  \mathord{\rd_L}\).

First, we show that  \(\forall x,y \in \cC_{\Sigma} \colon  x \rdA y \Ra x^{-1}L = y^{-1}L \).
Since \(x \rdA y\), i.e., \(\pre_x(F) = \pre_y(F)\), we have that:
\[\bigcup_{q \in \pre_{x}(F)} \mathcal{L}_{\da}(q) = \bigcup_{q \in \pre_{y}(F)} \mathcal{L}_{\da}(q) \enspace .\]
Since \(\TA\) has no unreachable states and \(L\) is path-closed, by Lemma~\ref{relStateLanguage}\ref{lemma:rel-state-language-dw}, we conclude that \(x^{-1}L = y^{-1}L\).

Now we prove that \(\rdA\) is a downward congruence.
Let  \(x,y \in \cC_{\Sigma} \colon x\rdA y\), i.e., \(\pre_x(F) = \pre_y(F)\).
We will show that for every \(c \in \cC_{\Sigma}\), \( \subs{x}{c}\rdA \subs{y}{c}\), i.e., \(\pre_{\subs{x}{c}}(F) = \pre_{\subs{y}{c}}(F)\).
Relying on Lemma~\ref{prepost}\ref{lemma:prepost:prop:ctx} we have that, \(\pre_{\subs{x}{c}}(F) = \pre_c(\pre_x(F)) = \pre_c(\pre_y(F)) =\pre_{\subs{y}{c}}(F)\).
Thus, we conclude that \( \subs{x}{c}\rdA \subs{y}{c}\).

Finally, we show that \(\rdA\) is strongly downward w.r.t. \(L\).
Let \(x,y \in \cC_\Sigma \colon x \rdA y, t \in x^{-1}L, r \in y^{-1}L\) and \(t(\varepsilon) = r(\varepsilon)\), with \(t,r \in \cT_{\Sigma}\).
We will prove that \(\subs{x}{\subs{t}{\cX}_i} \rdA \subs{x}{\subs{r}{\cX}_i}\) and \(\subs{y}{\subs{t}{\cX}_i} \rdA \subs{y}{\subs{r}{\cX}_i}\), for every \(i \in \range{\rank{f}}\).
Note that if the previous holds then, since \(\rdA\) is a downward congruence, i.e., \(\subs{x}{\subs{t}{\cX}_i} \rdA \subs{y}{\subs{t}{\cX}_i}\) and using transitivity of \(\rdA\), we can conclude that \(\subs{x}{\subs{t}{\cX}_i} \rdA \subs{y}{\subs{r}{\cX}_i}\), for every \(i \in \range{\rank{f}}\).

For the clarity of the argument, let us assume w.l.o.g. that \(t = f[t_1, t_2]\) and \(r = f[r_1, r_2]\).
Thus, we have to prove that \(\subs{x}{f[\cX, t_2]} \rdA \subs{x}{f[\cX, r_2]}\) and \(\subs{x}{f[t_1, \cX]} \rdA \subs{x}{f[r_1, \cX]}\), as well as \(\subs{y}{f[\cX, t_2]} \rdA \subs{y}{f[\cX, r_2]}\) and \(\subs{y}{f[t_1, \cX]} \rdA \subs{y}{f[r_1, \cX]}\).
Note that, by definition,  \(\subs{x}{f[\cX, t_2]} \rdA \subs{x}{f[\cX, r_2]}\) is equivalent to \(\pre_{\subs{x}{f[\cX, t_2]}}(F) = \pre_{\subs{x}{f[\cX, r_2]}}(F)\).

\medskip
Assume \(q \in \pre_{\subs{x}{f[\cX, t_2]}}(F)\), i.e., \(\subs{x}{f[\cX, t_2]} \in P_{\simp}(\mathcal{L}_{\ua}(q))\).
We will show that \\  \(q \in \pre_{\subs{x}{f[\cX, r_2]}}(F)\).

As we have shown previously, \(x \rdA y \Ra x^{-1} L = y^{-1}L\).
Therefore, since \(t \in x^{-1}L\) and \(r \in y^{-1}L\), then \(r \in x^{-1}L\) and \(t \in y^{-1}L\).
Hence, as \(\TA\) has no unreachable states and \(r \in x^{-1}L\), we have that \(\exists {q'} \in Q, \exists t' \in \cT_{Q} \colon \subs{x}{f[q', r_2]} \to^*_\TA t', t'(\varepsilon) \in F\), i.e., \(\exists {q'} \in Q \colon \subs{x}{f[\cX, r_2]} \in \mathcal{L}_{\ua}(q')\).
Note that, by Definition~\ref{def:root-pivot} (root-to-pivot equivalence w.r.t. \(F\)), \( \subs{x}{f[\cX, t_2]} \simp \subs{x}{f[\cX, r_2]}\).
Since, by hypothesis,  \(\subs{x}{f[\cX, t_2]} \in P_{\simp}(\mathcal{L}_{\ua}(q))\), we have that \( \subs{x}{f[\cX, r_2]}\!\in \! P_{\simp}(\mathcal{L}_{\ua}(q))\).$\,$By definition of \( \pre_{\subs{x}{f[\cX\!, r_2]}}(F)\),$\,$we conclude that \(q\! \in\! \pre_{\subs{x}{f[\cX,r_2]}}(F)\) and thus, \(\pre_{\subs{x}{f[\cX, t_2]}}\!(F) \subseteq \pre_{\subs{x}{f[\cX, r_2]}}\!(F)\).

The proof of \(\pre_{\subs{x}{f[\cX, r_2]}}(F) \subseteq \pre_{\subs{x}{f[\cX, t_2]}}(F)\) goes similarly.

Thus, we have that \(\subs{x}{f[\cX, t_2]} \rdA \subs{x}{f[\cX, r_2]}\).
On the other hand, the proof of \(\subs{x}{f[t_1, \cX]} \rdA \subs{x}{f[r_1, \cX]}\) is symmetric.

Finally, note that the remainder of the proof, i.e.,  \(\subs{y}{f[\cX, t_2]} \rdA \subs{y}{f[\cX, r_2]}\) and \(\subs{y}{f[t_1, \cX]} \rdA \subs{y}{f[r_1, \cX]}\) is symmetric, replacing \(x\) by \(y\).
\end{enumerate}

\vspace*{-6mm}
\end{proof}

The next example illustrates the need of root-to-pivot equivalence to define \(\pre(\cdot)\) in order to ensure that the automata-based downward congruence is indeed strongly downward.
\begin{example}
\label{example:new-pre}
Consider the BTA \(\TA = \tuple{Q,\Sigma_0\cup \Sigma_2,\delta,F}\) with \(Q = \{q_a, q_b, q_c, q_f, q_{\bullet}\}\), \(\Sigma_0 = \{a,b,c\}\), \(\Sigma_2 = \{f\}, F = \{q_f\}\), and s.t. \(\delta\) is defined as follows: \(\{q_f\} = \delta(f[q_a, q_a]) = \delta(f[q_b, q_b]) = \delta(f[q_c, q_c]) = \delta(f[q_{\bullet}, q_{\bullet}])\); \(\{q_a, q_{\bullet}\} = \delta(a[\,])\); \(\{q_b, q_{\bullet}\} = \delta(b[\,])\), and \(\{q_c, q_{\bullet}\} = \delta(c[\,])\).
We will construct \(\cH^{\mathsf{d}}(\rd, \lang{\TA}) = \tuple{Q',\Sigma_0\cup \Sigma_2,\delta',F'}\) where \(\rd\) is a downward congruence defined as follows.
For each \(x,y \in \cC_{\Sigma}\):
\[x \rd y \udiff \wpre^{\TA}_x(F) = \wpre^{\TA}_y(F) \enspace,\]
where \(\wpre^{\TA}_x(S)  \udiffg  \{q \in Q \mid x \in \mathcal{L}_{\ua}(q,S)\}\), with \(S \subseteq Q\).
Note that the definition of \(\wpre^{\TA}_x(S)\) is similar to that of \(\text{pre}^{\TA}_x(S)\) (Definition~\ref{def:postpre}) except that we drop root-to-pivot equivalence.

\medskip
It is easy to check that \(\rd\) defines the following blocks:
\begin{itemize}
\itemsep=0.9pt
\item \(P_{\rd}(X_1)\), where \(X_1 = \{f[\cX, a], f[a, \cX]\}\), since \(\wpre_{x}(F) = \{q_a, q_{\bullet}\}\), \(\forall x \in X_1\);
\item \(P_{\rd}(X_2)\),  where \(X_2 = \{f[\cX, b], f[b, \cX]\}\), since \(\wpre_{x}(F) = \{q_b, q_{\bullet}\}\), \(\forall x \in X_2\);
\item \(P_{\rd}(X_3)\), where \(X_3 = \{f[\cX, c], f[c, \cX]\}\), since \(\wpre_{x}(F) = \{q_c, q_{\bullet}\}\), \(\forall x \in X_3\);
\item \(P_{\rd}(\cX)\) since \(\wpre_{x}(F) = \{q_f\}\); and
\item \(P_{\rd}(X_4)\), where \(X_4 = \cC_{\Sigma_0 \cup \Sigma_2} \setminus (X_1 \cup X_2 \cup X_3 \cup \{\cX\})\) since \(\wpre_{x}(F) = \varnothing\), \(\forall x \in X_4\).
\end{itemize}
By Definition~\ref{def:down_TA_construction}, \(Q' = \{P_{\rd}(X_1), P_{\rd}(X_2), P_{\rd}(X_3), P_{\rd}(\cX)\}\), \(F' = P_{\rd}(\cX)\) and \(\delta'\) is defined as follows: \{\(P_{\rd}(X_1)\} = \delta'(a[\,])\); \{\(P_{\rd}(X_2)\} = \delta'(b[\,])\); \{\(P_{\rd}(X_3)\} = \delta'(c[\,])\); and \{\(P_{\rd}(\cX)\} = \delta'(f[P_{\rd}(X_i), P_{\rd}(X_j)])\), \(\forall i,j\!\in\! \{1,2,3\}\).
For instance, note that \(P_{\rd}(\cX)\!\in\! \delta'(f[P_{\rd}(X_3), P_{\rd}(X_1)])\) since there exist \(t_1 = a[\,]\), \(t_2 = c[\,]\) such that  \(\subs{\cX}{\subs{f[a,c]}{\cX}_1} \in P_{\rd}(X_3)\) and \(\subs{\cX}{\subs{f[a,c]}{\cX}_2} \in P_{\rd}(X_1)\).

However, observe that \(\cH^{\mathsf{d}}(\rd, \lang{\TA})\) is not co-deterministic since, for instance, \(P_{\rd}(\cX)\) \({} \in \delta'(f[P_{\rd}(X_3), P_{\rd}(X_1)])\) and \(P_{\rd}(\cX) \in \delta'(f[P_{\rd}(X_1), P_{\rd}(X_3)])\) where \(P_{\rd}(X_1) \neq P_{\rd}(X_3)\).
This is because \(\rd\) is not a strongly downward congruence w.r.t. \(\lang{\TA}\).
In fact, in Definition~\ref{def:strong_downward_congruence}, let \(x = y = \cX\) and \(t = f[a,c]\), \(r = f[a,a]\) where \(\subs{x}{t}, \subs{y}{r} \in \lang{\TA}\).
Then, note that (trivially) \(x \rd y\), but \(\subs{x}{\subs{t}{\cX}_1} = f[\cX, a] \not\rd f[\cX, c] = \subs{x}{\subs{r}{\cX}_1}\).

\medskip
Now we will build \(\cH^{\mathsf{d}}(\rd_\TA, \lang{\TA}) =  \tuple{Q',\Sigma_0\cup \Sigma_2,\delta',F'}\) where \(\rd_\TA\) is the automata-based downward congruence (Definition~\ref{def:automataEquivalences}).
Note that \(\rd_{\TA}\) defines the following classes of equivalence:
\begin{itemize}
\itemsep=0.9pt
\item \(P_{\rd_{\TA}}(X)\), where \mbox{\(X = \{f[\cX, a], f[a, \cX], f[\cX, b],f[b, \cX],\)} \(f[\cX, c], f[c, \cX]\}\), as \(\pre_{x}(F) = \{q_a, q_b$, $q_c, q_{\bullet}\}\), \(\forall x \in X\);
\item \(P_{\rd_{\TA}}(\cX)\) as \(\pre_{x}(F) = \{q_f\}\); and
\item \(P_{\rd_{\TA}}(X_4)\), where \(X_4 = \cC_{\Sigma_0 \cup \Sigma_2} \setminus (X \cup \{\cX\})\).
\end{itemize}
Notice that \(X = X_1 \cup X_2 \cup X_3\).
According to Definition~\ref{def:down_TA_construction}, \(Q' = \{P_{\rd_{\TA}}(X), P_{\rd_{\TA}}(\cX)\}, F' = \{P_{\rd_{\TA}}(\cX)\}\) and \(\delta'\) is defined as follows: \{\(P_{\rd_{\TA}}(X)\} = \delta'(a[\,]) = \delta'(b[\,]) = \delta'(c[\,])\) and \{\(P_{\rd_{\TA}}(\cX)\} = \delta'(f[P_{\rd_{\TA}}(X), P_{\rd_{\TA}}(X)])\).

\medskip
Note that \(\cH^{\mathsf{d}}(\rd_\TA, \lang{\TA})\) is co-deterministic as \(\rd_\TA\) is a strongly downward congruence w.r.t. \(\lang{\TA}\).
The reader may check that \(\cH^{\mathsf{d}}(\rd_\TA, \lang{\TA})\) is isomorphic to the co-DBTA \(\TA^{\mathsf{cD}}\) that results from applying the co-determinization operation to \(\TA\) and removing empty states.
\eox
\end{example}

In the light of the Lemma~\ref{automataCongruences}, the upward BTA-based  congruences are indeed finer than (or equal to) the language-based ones, i.e., \(\mathord{\ruA} \subseteq \mathord{\ru_L}\).
So are their downward counterparts if \(\TA\) has no unreachable states and \(L(\TA)\) is path-closed.
The following theorem gives a sufficient condition for the language-based and the BTA-based congruences to coincide.

\begin{theorem}
\label{automataEqualNerode}
Let \({\TA}\) be a BTA such that \(L = \lang{\TA}\) is a path-closed language.
Then,
\begin{enumerate}
\renewcommand\labelenumi{\theenumi}
\renewcommand{\theenumi}{(\alph{enumi})}
\itemsep=0.9pt
\item If \(\TA\) is a co-DBTA with no empty states, then \(\mathord{\ruA} = \mathord{\ru_L}\).\label{theorem:automata=nerodeUP}
\item If \(\TA\) is a DBTA with no unreachable states then \mbox{\(\mathord{\rdA} = \mathord{\rd_L}\)}.\label{theorem:automata=nerodeDW}
\end{enumerate}
\end{theorem}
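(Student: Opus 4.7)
The strategy is to invoke Lemma~\ref{automataCongruences}, which already provides the inclusion $\mathord{\ruA} \subseteq \mathord{\ru_L}$ and, under the hypotheses of part (b), also $\mathord{\rdA} \subseteq \mathord{\rd_L}$; only the reverse inclusions remain. In both cases I would first rewrite the language-based quotients as unions of state-indexed automata languages via Lemma~\ref{relStateLanguage}, and then exploit (co-)determinism of $\TA$ together with the absence of the relevant dead states to recover the index sets themselves.

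Concretely, for part (a) let $I = \initials(\TA)$ and assume $Lt^{-1} = Lr^{-1}$; Lemma~\ref{relStateLanguage}\ref{lemma:rel-state-language-up} turns this into $\bigcup_{q \in \post_t(I)} \mathcal{L}_{\ua}(q) = \bigcup_{q \in \post_r(I)} \mathcal{L}_{\ua}(q)$. The key structural fact is that in a co-DBTA the family $\{\mathcal{L}_{\ua}(q)\}_{q \in Q}$ is pairwise disjoint: starting from the unique accepting state at the root and applying the co-deterministic transitions top-down along the root-to-pivot path of any context $c$ assigns a uniquely determined state at the pivot, so $c$ lies in $\mathcal{L}_{\ua}(q)$ for at most one $q \in Q$. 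Since no state is empty, each $\mathcal{L}_{\ua}(q)$ admits a witness context, and the displayed equality of unions forces $\post_t(I) = \post_r(I)$, i.e., $t \ruA r$. Part (b) is entirely dual: from $x^{-1}L = y^{-1}L$, Lemma~\ref{relStateLanguage}\ref{lemma:rel-state-language-dw} (applicable since $\TA$ is a DBTA with no unreachable states and $L$ is path-closed) gives $\bigcup_{q \in \pre_x(F)} \mathcal{L}_{\da}(q) = \bigcup_{q \in \pre_y(F)} \mathcal{L}_{\da}(q)$; in a DBTA every tree admits at most one run, so $\{\mathcal{L}_{\da}(q)\}_{q \in Q}$ is pairwise disjoint, and the no-unreachable-states assumption supplies a witness in each class, yielding $\pre_x(F) = \pre_y(F)$.

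The only non-routine ingredient is the disjointness of upward languages in a co-DBTA. Its downward analogue for DBTAs is immediate from uniqueness of runs (as already noted after Lemma~\ref{DownwardLanguages}), but the upward case demands a careful top-down argument along the root-to-pivot path exploiting both that $F$ is a singleton and that for each $q$ and $f$ the tuple $(q_1, \ldots, q_{\rank{f}})$ with $q \in \delta(f[q_1, \ldots, q_{\rank{f}}])$ is unique. I would isolate this observation as a short preliminary claim, after which both parts collapse to the same witness-picking argument.
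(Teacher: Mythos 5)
Your proposal is correct and follows essentially the same route as the paper: one inclusion from Lemma~\ref{automataCongruences}, and the reverse inclusion by reducing quotients to unions of state languages and exploiting that in a co-DBTA (resp.\ DBTA) the upward (resp.\ downward) languages of distinct states are pairwise disjoint, with the no-empty/no-unreachable hypotheses supplying witnesses. If anything, your explicit disjointness claim for downward languages in part (b) is slightly sharper than the paper's formulation (which only shows that equal downward languages force equal states), and it is exactly what the union argument needs.
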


\begin{proof}

\vspace*{-6mm}
\begin{enumerate}
\renewcommand\labelenumi{\theenumi}
\renewcommand{\theenumi}{(\alph{enumi})}
\item If \(\TA\) is a co-DBTA with no empty states, then \( \mathord{\ruA}=\mathord{\ru_L}\).

By Lemma~\ref{automataCongruences}\ref{lemma:automataCongruences:upAB}, we have that \(\mathord{\ruA} \subseteq \mathord{\ru_L}\).

Next, we show that if \({\TA}\) is a co-DBTA without empty states, then \(\mathord{\ru_L} \subseteq \mathord{\ruA}\).
To simplify the notation, let \(I\) denote the set \(\initials(\TA)\).

First, note that since \({\TA}\) has no empty states, \(Lt^{-1} = Lr^{-1} = \varnothing\) if{}f \(\post_t(I) = \post_r(I) = \varnothing\).
Now we consider the case \(Lt^{-1} = Lr^{-1} \neq \varnothing\) and proceed by contradiction.
Assume that \(L\,t^{-1} = L\,r^{-1}\) and \(\post_t(I) \neq \post_{r}(I)\).
Recall that:
\begin{align*}
L\,t^{-1} & = \\
\{x \in \cC_\Sigma \mid \subs{x}{t} \in L\} & = \\
\{x \in \cC_\Sigma \mid \exists t' \!\in\! \cT_Q \colon  \subs{x}{t} \to^*_{{\TA}} t',\, t'(\varepsilon) \in F,\, \leaf(t') \subseteq I\} & =^{\dagger} \\
\{x \in \cC_\Sigma \mid \exists t' \!\in\! \cT_Q \colon  \subs{x}{q} \to^*_{{\TA}} t',\, t'(\varepsilon) \in F,\, q \in \post_t (I)\} &\enspace .
\end{align*}
Note that, from the statement to the left of the equality \(\dagger\) we have that  \(t \in \mathcal{L}_{\da}(q, I)\) for some \(q \in Q\), which is equivalent to the fact \(q \in \post_t(I)\).

By hypothesis, \(Lt^{-1} = Lr^{-1} \neq \varnothing\), and thus we can assume that there exists \(q'\in Q\) with \(q' \neq q\) such that \(q \in \post_t (I) \cap  (\post_r (I))^{\complement}\) and \(q' \in \post_r (I) \).
Then, there exists \(t' \in \cT_{\Sigma}: \subs{x}{q} \to^*_{\TA} t',\, t'(\varepsilon) \in F\), and there exists \(r' \in \cT_{\Sigma}: \subs{x}{q'} \to^*_{\TA} r',\, r'(\varepsilon) \in F\).
Since \(|F| = 1\), \(Lt^{-1} = Lr^{-1}\) and \(q \neq q'\), there exists necessarily \(q_0 \in Q\) and \(f \in \Sigma\) such that \(|\delta^{-1}(q_0) \cap \{f\}\times Q^{\rank{f}}| > 1\).
This is a contradiction since \(\TA\) is co-deterministic.
Therefore, we conclude that if \(Lt^{-1} = Lr^{-1}\) then \(\post_t(I)  =  \post_{r}(I)\).

\item If \(\TA\) is a DBTA with no unreachable states then \(\mathord{\rdA} = \mathord{\rd_L}\).

Since \(\TA\) has no unreachable states and \(L = \lang{\TA}\) is path-closed, by Lemma~\ref{automataCongruences}\ref{lemma:automataCongruences:downAT}, we have that \(\rdA\) is a strongly downward congruence w.r.t. \(L\) and \(\mathord{\rdA} \subseteq \mathord{\rd_L}\).

Now we show that, in particular, if \(\TA\) is a DBTA  then \(\mathord{\rdA} = \mathord{\rd_L}\).
Given \(x,y \in \cC_{\Sigma}\) with \(x^{-1}L = y^{-1}L\), we will show that \(\pre_x(F) = \pre_y(F)\).

By Lemma~\ref{relStateLanguage}\ref{lemma:rel-state-language-dw} and the fact that \(x^{-1}L = y^{-1}L\), we have that:
\begin{align}
x^{-1}L = \bigcup_{q \in \pre_{x}(F)} \mathcal{L}_{\da}(q) = \bigcup_{q \in \pre_{y}(F)} \mathcal{L}_{\da}(q) = y^{-1}L \enspace . \label{eq:proof_equality_downward}
\end{align}
First, note that \(\TA\) has no unreachable states, i.e., \(\forall q \in Q \colon \mathcal{L}_{\da}(q) \neq \varnothing\).
In particular, \(\forall q \in \pre_x(F) \colon \mathcal{L}_{\da}(q) \neq \varnothing\).

Second, we prove that if \(\TA\) is a DBTA then, for every pair of states \(q,q' \in Q\), \(\mathcal{L}_{\da}(q) = \mathcal{L}_{\da}(q')\) implies \(q = q'\).
The proof goes by contradiction.
Assume that \(q \neq q'\).
Then, since \(\mathcal{L}_{\da}(q) = \mathcal{L}_{\da}(q')\), for every \(t \in \mathcal{L}_{\da}(q)\), there exists \(t', t'' \in \cT_Q\) with \(t' \neq t''\) s.t. \(t \to^*_{\TA} t', t'(\varepsilon) = q\) and \(t \to^*_{\TA} t'', t''(\varepsilon) = q''\).
Therefore, for some \(t_0 \in \cT_{\Sigma \cup Q}\) satisfying that \(t \to^*_{\TA} t_0  \to^*_{\TA} t'\) and \(t \to^*_{\TA} t_0  \to^*_{\TA} t''\), we have that \(|\delta(t_0)| > 1\).
This is a contradiction since \(\TA\) is deterministic.
Thus, we conclude that if \(\TA\) is a DBTA then \(\mathcal{L}_{\da}(q) = \mathcal{L}_{\da}(q')\) implies \(q = q'\).

Finally, we prove that \(\pre_x(F)\!=\!\pre_y(F)\) by contradiction.
Assume w.l.o.g. that \(q \in \pre_x(F) \cap (\pre_y(F))^{\complement}\).
Thus, as we have shown before, for each \(q' \in \pre_y(F)\) with  \(q'\neq q\), we have that \(\mathcal{L}_{\da}(q) \neq \mathcal{L}_{\da}(q')\).
Therefore, equality~\eqref{eq:proof_equality_downward} does not hold, which yields to contradiction.
We conclude that, necessarily, \(\pre_x(F) = \pre_y(F)\).
\end{enumerate}

\vspace*{-7mm}
\end{proof}

Finally, the following lemma shows that the blocks of \(\ru_L\) (resp.\ \(\rd_L\)) can be described as intersections of complemented (using the symbol \(\complement\) in superscript) and uncomplemented downward (resp.\ upward) quotients of \(L\).
Similarly, the blocks of \(\ruA\) correspond to intersections of complemented and uncomplemented downward languages  of states of \({\TA}\).

For the counterpart characterization of the blocks of \(\rdA\), the result is not symmetric to that of the blocks of \(\ruA\).
This is expected as \(\rdA\) is based on our definition of \(\pre(\cdot)\), which in turn is defined not only in terms of upward languages but also uses the notion of root-to-pivot equivalence.
We will use this lemma to give the generalization of Brzozowski's method for the construction of the minimal DBTA (in Section~\ref{sec:general-RDRD}).

\begin{lemma}
\label{blockBTAasIntersection}
Let \({\TA}= \tuple{Q,\Sigma,\delta,F}\) be a BTA without unreachable states and such that \(L = \lang{{\TA}}\) and \(I = \initials(\TA)\).

\medskip
Then, for every \(t\in \cT_\Sigma\), \(x \in \cC_\Sigma\), we have that:
\begin{align*}
P_{\ru_L}(t) & = \bigcap_{\mathclap{\substack{y \in L\,t^{-1}}}} y^{-1}L \; \cap \; \bigcap_{\mathclap{\substack{y \notin L\,t^{-1}}}} (y^{-1}L)^{\complement}\\
P_{\rd_L}(x) & = \bigcap_{\mathclap{\substack{r \in x^{-1}L}}} L\,r^{-1} \; \cap\; \bigcap_{\mathclap{\substack{r \notin x^{-1}L}}} (L\,r^{-1})^{\complement}\\
P_{\ruA}(t) & = \bigcap_{\mathclap{q \in \post_t(I)}} \mathcal{L}_{\da}(q) \; \cap \; \bigcap_{\mathclap{q \notin \post_t(I)}}(\mathcal{L}_{\da}(q))^{\complement}\\
P_{\rd_{{\TA}}}(x) & = \bigcap_{\mathclap{q \in \pre_x(F)}} P_{\simp}(\mathcal{L}_{\ua}(q))\; \cap \;\bigcap_{\mathclap{q \notin \pre_x(F)}} (P_{\simp}(\mathcal{L}_{\ua}(q)))^{\complement} \enspace .
\end{align*}
\end{lemma}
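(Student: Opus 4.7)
The plan is to prove all four equalities by a uniform unfolding of definitions: in each case, the defining condition for congruence has the shape ``two sets of witnesses agree'', which translates directly into a conjunction of positive and negative membership conditions, i.e., an intersection of sets and complements of sets.

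For the first equality, I would start from the defining condition \(t \ru_L r \Lra L\,t^{-1} = L\,r^{-1}\) and rewrite set equality as pointwise equivalence of memberships over \(y \in \cC_\Sigma\). Using the identity \(y \in L\,r^{-1} \Lra \subs{y}{r} \in L \Lra r \in y^{-1}L\), this becomes: for every \(y\), \(y \in L\,t^{-1}\) iff \(r \in y^{-1}L\). Splitting on whether \(y \in L\,t^{-1}\) or not gives exactly the stated intersection. For the second equality the argument is symmetric: expand \(x \rd_L y \Lra x^{-1}L = y^{-1}L\) into pointwise membership over \(r \in \cT_\Sigma\), then use \(r \in x^{-1}L \Lra y \in L\,r^{-1}\) to convert each pointwise condition into a membership (or non-membership) of \(y\) in the corresponding upward quotient.

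For the third equality I would repeat the same manoeuvre with \(t \ruA r \Lra \post_t(I) = \post_r(I)\), where \(I = \initials(\TA)\), after unfolding set equality as equivalence over \(q \in Q\). The bridge here is the definition \(q \in \post_r(I) \Lra r \in \mathcal{L}_\da(q,I) = \mathcal{L}_\da(q)\), which lets me rewrite the pointwise condition as \(q \in \post_t(I) \Lra r \in \mathcal{L}_\da(q)\) and then split on whether \(q \in \post_t(I)\). For the fourth equality, the same recipe applies to \(x \rdA y \Lra \pre_x(F) = \pre_y(F)\), where the bridging identity is the very definition \(q \in \pre_y(F) \Lra y \in P_{\simp}(\mathcal{L}_\ua(q))\) (note that the root-to-pivot saturation is baked into \(\pre(\cdot)\), which is why it must appear, rather than \(\mathcal{L}_\ua(q)\) itself, in the intersection); splitting on whether \(q \in \pre_x(F)\) yields the claim.

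None of the four cases presents a real obstacle; each is essentially the elementary schema ``\(A = B \Lra \forall z\colon (z \in A \Lra z \in B)\), hence \(\{r : A_r = A_t\} = \bigcap_{z \in A_t} \{r : z \in A_r\} \cap \bigcap_{z \notin A_t} \{r : z \notin A_r\}\)'' instantiated with the appropriate witness notion. The one place where I would pause to check is the fourth case, where I must make sure the \(P_{\simp}(\cdot)\) wrapping is carried through correctly and not accidentally dropped, since the bridge identity for \(\pre(\cdot)\) (unlike for \(\post(\cdot)\)) involves the root-to-pivot saturation of the upward language rather than the bare upward language itself.
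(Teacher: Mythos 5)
Your proposal is correct and follows essentially the same route as the paper's proof: each equality is obtained by unfolding the congruence condition into a pointwise equivalence of memberships, applying the appropriate bridging identity (\(r \in y^{-1}L \Lra y \in L\,r^{-1}\), \(r \in \mathcal{L}_{\da}(q,I) \Lra q \in \post_r(I)\), and \(z \in P_{\simp}(\mathcal{L}_{\ua}(q)) \Lra q \in \pre_z(F)\)), and splitting on membership in the reference set. Your remark that the \(P_{\simp}(\cdot)\) saturation must be retained in the fourth case is exactly the point the paper's formulation is careful about.
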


\begin{proof}
Let us start with the equalities on the upward congruences \(\ru_L\) and \(\ruA\).

\vspace*{0.6mm}
For each \(r \in \cT_\Sigma\) we have that:\vspace*{-1mm}
\begin{align*}
r \in  \bigcap_{\mathclap{y \in Lt^{-1}}} y^{-1}\,L \;\; \cap\;\; \bigcap_{\mathclap{y \notin Lt^{-1}}} (y^{-1}\,L)^{\complement} &\Lra^{\dagger}\\[2pt]
\forall y \in \cC_\Sigma \colon  y \in Lt^{-1} \Lra r \in y^{-1}\,L &\Lra^{\dagger\dagger} \\[-2pt]
\forall y \in \cC_\Sigma \colon  y \in Lt^{-1} \Lra y \in Lr^{-1}  &\Lra \\[-2pt]
Lt^{-1} = Lr^{-1} & \Lra \\[-2pt]
r \in P_{\ru_L}(t) \enspace ,
\end{align*}
where double-implication \(\dagger\) holds by definition of set intersection and \(\dagger\dagger\) holds since, for every \(r \in \cT_{\Sigma}, y \in \cC_{\Sigma}\colon r \in y^{-1}L \Lra y \in Lr^{-1}\).

\vspace*{0.6mm}
On the other hand, for each \(r \in \cT_{\Sigma}\):
\begin{align*}
r \in \bigcap_{\mathclap{q \in \post_t(I)}} \mathcal{L}_{\da}(q, I) \; \cap \bigcap_{\mathclap{q \notin \post_t(I)}}(\mathcal{L}_{\da}(q, I))^{\complement} &\Lra\\[-2pt]
\forall q \in Q\colon q \in \post_t(I) \Lra  r \in \mathcal{L}_{\da}(q, I) & \Lra^{\dagger}  \\[-2pt]
\forall q \in Q\colon  q \in \post_t(I) \Lra q \in \post_r(I)  & \Lra\\[-2pt]
\post_t(I) = \post_r(I) & \Lra \\[-2pt]
r \in P_{\ru_\TA}(t) \enspace ,
\end{align*}
where double-implication \(\dagger\) holds as, for each \mbox{\(r \in \cT_\Sigma, q \in Q \colon\)}  \(r \in\mathcal{L}_{\da}(q, I) \Lra q \in \post_r(I)\).

\vspace*{0.6mm}
Now we move to the equalities on downward congruences.
First, for each \(z \in \cC_\Sigma\) we have that:
\begin{align*}
z \in  \bigcap_{\mathclap{r \in x^{-1}L}} L\,r^{-1} \;\; \cap\;\; \bigcap_{\mathclap{r \notin x^{-1}L}} (L\,r^{-1})^{\complement} & \Lra\\[-2pt]
\forall r \in \cT_\Sigma \colon  r \in x^{-1}L \Lra z \in L\,r^{-1} & \Lra^{\dagger}  \\[-2pt]
\forall r \in \cT_\Sigma \colon  r \in x^{-1}L \Lra r \in z^{-1}L  & \Lra  \\[-2pt]
x^{-1}L = z^{-1}L & \Lra \\[-2pt]
z \in P_{\rd_L}(x) \enspace ,
\end{align*}
where double-implication \(\dagger\) holds since, for each \( z \in \cC_\Sigma, r \in \cT_\Sigma \colon  r \in z^{-1}L \Lra z \in L\,r^{-1}\).

\vspace*{0.6mm}
Finally, we prove the last equality.
For each \(z \in \cC_{\Sigma}\) we have that:
\begin{align*}
z \in  \bigcap_{\mathclap{q \in \pre_x(F)}} P_{\simp}(\mathcal{L}_{\ua}(q)) \;\; \cap\;\;  \bigcap_{\mathclap{q \notin \pre_x(F)}} (P_{\simp}(\mathcal{L}_{\ua}(q)))^{\complement} & \Lra\\[-2pt]
\forall q \in Q \colon  q \in \pre_x(F) \Lra z \in P_{\simp}(\mathcal{L}_{\ua}(q)) & \Lra^{\dagger} \\[-2pt]
\forall q \in Q \colon  q \in \pre_x(F) \Lra q \in \pre_z(F)  & \Lra \\[-2pt]
\pre_x(F) = \pre_z(F) & \Lra\\[-2pt]
z \in P_{\rdA}(x) & \enspace ,
\end{align*}
where double-implication \(\dagger\) holds as, for each \mbox{\( z \in \cC_\Sigma, q \in Q \colon\)} \(z \in P_{\simp}(\mathcal{L}_{\ua}(q)) \Lra q \in \pre_z(F)\).
\end{proof}

\subsection{Determinization and minimization of BTAs using congruences}\label{sec:detandCodet}

In what follows, we will use \(\cF{\mathsf{u}},\cF{\mathsf{d}}\) and \(\cG{\mathsf{u}},\cG{\mathsf{d}}\) to denote the constructions \(\cH^{\mathsf{u}},\cH^{\mathsf{d}}\) when applied, respectively, to the language-based congruences induced by a regular tree language and the automata-based congruences induced by a BTA\@.

\begin{definition}[\(\cF{\mathsf{u}},\cF{\mathsf{d}}\) and \(\cG{\mathsf{u}},\cG{\mathsf{d}}\)]%
\label{def:FG}
Let \({\TA}\) be a BTA  with \(L = \lang{{\TA}}\).
Define:
\begin{align*}
\cF{\mathsf{u}}(L) & \udiffg  \cH^{\mathsf{u}}(\ru_{L}, L) & \cG{\mathsf{u}}({\TA}) & \udiffg  \cH^{\mathsf{u}}(\ruA, L)\\
\cF{\mathsf{d}}(L) & \udiffg  \cH^{\mathsf{d}}(\rd_{L}, L) & \cG{\mathsf{d}}({\TA}) & \udiffg \cH^{\mathsf{d}}(\rdA, L)\enspace.
\end{align*}
\end{definition}

All the above constructions yield to BTAs defining the language \(L\) (recall that, additionally, we need \(L\) to be path-closed when using the downward congruences).
Concretely, \(\cG{\mathsf{u}}(\TA)\) and \(\cG{\mathsf{d}}(\TA)\) correspond to the bottom-up determinization and co-determinization operations defined in Section~\ref{sec:preliminaries}.

On the other hand, since \(\cF{\mathsf{u}}(L)\) and \(\cF{\mathsf{d}}(L)\) are constructed upon the language-based congruences, the resulting BTAs are minimal.
More precisely, \(\cF{\mathsf{u}}(L)\) yields  the minimal DBTA for \(L\), and \(\cF{\mathsf{d}}(L)\) yields  the minimal co-DBTA, as long as \(L\) is path-closed.
Finally, since \(\cG{\mathsf{d}}(\TA)\) is a co-DBTA satisfying the conditions of Theorem~\ref{automataEqualNerode}\ref{theorem:automata=nerodeUP}, when we apply to it the determinization operation (\(\cG{\mathsf{u}}\)) we obtain the minimal DBTA for \(L\).
Note that the fact that we construct it upon the co-DBTA \(\cG{\mathsf{d}}(\TA)\) requires that \(L\) is path-closed.
A similar result holds when co-determinizing (\(\cG{\mathsf{d}}\)) the DBTA \(\cG{\mathsf{u}}(\TA)\).
All these notions are summarized by the following theorem.

\begin{theorem}
\label{theoremF}
Let \({\TA}\) be a BTA with \(L = \lang{\TA}\).
Then the following properties hold:
\begin{enumerate}
\renewcommand\labelenumi{\theenumi}
\renewcommand{\theenumi}{(\alph{enumi})}
\itemsep=0.8pt
\item \(\lang{\cF{\mathsf{u}}(L)} = L  = \lang{\cG{\mathsf{u}}({\TA})}\).
\label{theoremF:languageBTA}
\item If \(L\) is path-closed then \(\lang{\cF{\mathsf{d}}(L)} = L = \lang{\cG{\mathsf{d}}({\TA})}\).\label{theoremF:languageTTA}
\item\(\cG{\mathsf{u}}({\TA}) \equiv \TA^{\db}\). \label{theoremF:DetismorphicDBTA}
\item If \(L\) is path-closed and \({\TA}\) has no unreachable states then \(\cG{\mathsf{d}}({\TA}) \equiv \TA^{\cdb}\). \label{theoremF:DetismorphiccoDBTA}
\item \(\cF{\mathsf{u}}(L)\) is isomorphic to the minimal DBTA for \(L\).\label{theoremF:MinimalDBTA}
\item If \(L\) is path-closed then \(\cF{\mathsf{d}}(L)\) is isomorphic to the minimal co-DBTA of \(L\).\label{theoremF:MinimalcoDBTA}
\item If \(L\) is path-closed then \(\cG{\mathsf{u}}(\cG{\mathsf{d}}(\TA)) \equiv \cF{\mathsf{u}}(L)\). \label{theoremF:DoubleReversalDBTA}
\item If \(L\) is path-closed then \(\cG{\mathsf{d}}(\cG{\mathsf{u}}(\TA)) \equiv \cF{\mathsf{d}}(L)\).\label{theoremF:DoubleReversalcoDBTA}
\end{enumerate}
\end{theorem}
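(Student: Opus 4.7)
\medskip

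The plan is to dispatch parts (a) and (b) directly from the groundwork already laid in Section~\ref{sec:AutomataConstruction}: by Definition~\ref{def:languageEquivalences}, $\ru_L$ and $\rd_L$ trivially preserve upward and downward quotients, while Lemma~\ref{automataCongruences} ensures that $\ruA$ and $\rdA$ do as well (under the hypothesis that $L$ is path-closed for the downward case, with $\cA$ having no unreachable states for $\rdA$). Hence Lemma~\ref{HBPreservesL} gives $\lang{\cF{\mathsf{u}}(L)}=\lang{\cG{\mathsf{u}}(\cA)}=L$, and Lemma~\ref{HTPreservesL} gives the two equalities in (b) once $L$ is path-closed. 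Parts (e) and (f) follow from (a) and (b) together with uniqueness up to isomorphism of the minimal DBTA (resp.\ co-DBTA): one just has to check that $\cF{\mathsf{u}}(L)$ is deterministic (Remark~\ref{remark:DBTA}) with no two distinguishable states merged, and symmetrically for $\cF{\mathsf{d}}(L)$ using Lemma~\ref{downwardIsDet} and Lemma~\ref{NerodeStronglyDownward}. Since $\ru_L$ and $\rd_L$ are the coarsest congruences enabling the respective construction (by Kozen and Nivat--Podelski), the resulting automata have exactly the minimum number of states, so isomorphism with the canonical minimal automaton follows.

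For (c) and (d) the idea is to exhibit the obvious state bijection between the construction $\cG{\mathsf{u}}(\cA)$ and the subset-construction automaton $\cA^{\db}$. Concretely, the states of $\cG{\mathsf{u}}(\cA)$ are the blocks $P_{\ruA}(t)$ for $t\in\cT_\Sigma$, and by Definition~\ref{def:automataEquivalences} the map $P_{\ruA}(t)\mapsto \post_t^{\cA}(\initials(\cA))$ is well-defined and injective. By Lemma~\ref{prepost}\ref{lemma:prepost:prop:tree} this map transports the transition function of $\cG{\mathsf{u}}(\cA)$ onto that of the subset construction, and its image is exactly the set of reachable subsets, i.e.\ the state set of $\cA^{\db}$; preservation of final states is immediate from the definitions. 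For (d) the same strategy works with $P_{\rdA}(x)\mapsto \pre_x^{\cA}(F)$: Lemma~\ref{prepost}\ref{lemma:prepost:prop:ctx} converts the congruence's transitions into those of the bottom-up co-determinization, and the restriction of $Q$ in Definition~\ref{def:down_TA_construction} to blocks with $x^{-1}L\neq\varnothing$ matches the removal of empty states in $\cA^{\cdb}$.

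Finally, (g) and (h) are obtained by chaining the previous items with Theorem~\ref{automataEqualNerode}. For (g), by (b) and (d) the automaton $\cG{\mathsf{d}}(\cA)$ is a co-DBTA for $L$ without empty states (cf.\ Remark~\ref{remark:no-determinism}); Theorem~\ref{automataEqualNerode}\ref{theorem:automata=nerodeUP} then gives $\mathord{\ru_{\cG{\mathsf{d}}(\cA)}}=\mathord{\ru_L}$, whence
\[
\cG{\mathsf{u}}(\cG{\mathsf{d}}(\cA))\;=\;\cH^{\mathsf{u}}(\ru_{\cG{\mathsf{d}}(\cA)},L)\;=\;\cH^{\mathsf{u}}(\ru_L,L)\;=\;\cF{\mathsf{u}}(L).
\]
Symmetrically for (h): by Remark~\ref{remark:DBTA} the DBTA $\cG{\mathsf{u}}(\cA)$ has no unreachable states, so Theorem~\ref{automataEqualNerode}\ref{theorem:automata=nerodeDW} yields $\mathord{\rd_{\cG{\mathsf{u}}(\cA)}}=\mathord{\rd_L}$ and hence $\cG{\mathsf{d}}(\cG{\mathsf{u}}(\cA))=\cF{\mathsf{d}}(L)$.

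The main obstacle is verifying (c) and (d) in full: one has to carefully check that the two state sets coincide (reachable subsets on one side, blocks of the BTA-based congruence on the other) and that the transition functions agree under the bijection; all the algebraic content needed for this is already packaged in Lemma~\ref{prepost} and Lemma~\ref{relStateLanguage}, so the remaining work is essentially bookkeeping. Parts (g)–(h) look combinatorially delicate but reduce to a single substitution step once the equality of congruences from Theorem~\ref{automataEqualNerode} is in hand.
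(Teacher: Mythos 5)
Your proposal is correct and follows essentially the same route as the paper's proof: parts (a)--(b) via Lemmas~\ref{HBPreservesL}, \ref{HTPreservesL} and~\ref{automataCongruences}; (c)--(d) via the bijections \(P_{\ruA}(t)\mapsto\post^{\cA}_t(\initials(\cA))\) and \(P_{\rdA}(x)\mapsto\pre^{\cA}_x(F)\) transported through Lemma~\ref{prepost}; (e)--(f) via coarseness of the language-based congruences; and (g)--(h) by chaining Theorem~\ref{automataEqualNerode} with Remarks~\ref{remark:DBTA} and~\ref{remark:no-determinism}. The only detail worth flagging is that in (f) the step from ``\(\rd_L\) is coarsest'' to minimality additionally requires showing that \(\pre_x(F)\) is a singleton for every \(x\in\cC_\Sigma\) in any co-DBTA without unreachable states (so that the number of states of a competing co-DBTA bounds the index of its induced downward congruence), which the paper establishes explicitly in its contradiction argument.
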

\begin{proof}
Let \({\TA} \udiffg  \tuple{Q,\Sigma,\delta,F}\).
\begin{enumerate}
\renewcommand\labelenumi{\theenumi}
\renewcommand{\theenumi}{(\alph{enumi})}
\itemsep=0.9pt
\item \(\lang{\cF{\mathsf{u}}(L)} = L  = \lang{\cG{\mathsf{u}}({\TA})}\).

It is well-known that \(\ru_{L}\) is an upward congruence~\cite{kozen1993MyhillTrees,tata2007} and, by Lemma~\ref{automataCongruences}\ref{lemma:automataCongruences:upAB}, so is \(\ruA\).
By definition, we have that \(t \ru_{L} r \Ra Lt^{-1} = Lr^{-1}\).
On the other hand, also by Lemma~\ref{automataCongruences}\ref{lemma:automataCongruences:upAB}, we have that \(\mathord{\ruA} \subseteq \mathord{\ru_L}\).
Therefore, by Lemma~\ref{HBPreservesL}, \(\lang{\cF{\mathsf{u}}(L)} = L  = \lang{\cG{\mathsf{u}}({\TA})}\).

\item If \(L\) is path-closed then \(\lang{\cF{d}(L)} = L =  \lang{\cG{\mathsf{d}}({\TA})}\).%

Lemmas~\ref{NerodeStronglyDownward} and~\ref{automataCongruences}\ref{lemma:automataCongruences:downAT} show that both \(\rd_{L}\) and \(\rdA\) are  downward congruences satisfying the condition of Lemma~\ref{HTPreservesL}.
Therefore, \(\lang{\cF{d}(L)} = L  = \lang{\cG{\mathsf{d}}({\TA})}\).

\item\(\cG{\mathsf{u}}({\TA}) \equiv \TA^{\db}\).

Recall that, given \({\TA} = \tuple{Q, \Sigma, \delta, F}\), \(\TA^{\db}\) denotes the DBTA that results from applying the bottom-up determinization to \({\TA}\) and removing all unreachable states.
Let \(\TA^{\db} = \tuple{Q_{d}, \Sigma, \delta_d, F_d}\) and \(\cG{\mathsf{u}}({\TA}) = \tuple{\widetilde{Q}, \Sigma, \tilde{\delta}, \widetilde{F}}\).

Let \(P\) be the partition induced by \(\ruA\) and let \(\varphi: \widetilde{Q} \rightarrow Q_{d}\) be the mapping assigning to each state \(P(t) \in \widetilde{Q}\), the set \(\post^{\TA}_t(\initials(\TA)) \in Q_{d}\) with \(t \in \cT_\Sigma\).
Observe that \(\varphi\) is well-defined since, by def.~\ref{def:automataEquivalences}, \(t \ruA r \) if{}f \(\post^{\TA}_{t}(\initials({\TA})) = \post^{\TA}_{r}(\initials({\TA}))\).
We show that \(\varphi\) is BTA isomorphism between \(\cG{\mathsf{u}}({\TA})\) and \(\TA^{\db}\).
First, we show that \(\varphi(\initials(\cG{\mathsf{u}}({\TA}))) = \initials(\TA^{\db} )\).
To simplify the notation, let us denote \(\initials(\TA)\) as \(I\). \vspace*{-2mm}
\begin{align*}
\varphi(\initials(\cG{\mathsf{u}}({\TA}))) & = \\
\varphi(\{P(t) \in \wt{Q} \mid \exists a \in \Sigma_0 \colon  P(t) \in \tilde{\delta}(a[\,])\}) & = \quad \text{[Def. of \(\varphi\)]} \\
\{\post^{\TA}_t(I) \mid \exists a \in \Sigma_0 \colon  P(t) \in \tilde{\delta}(a[\,])\} & = \quad \text{[Def.~\ref{def:up_TA_construction}]} \\
\{\post^{\TA}_t(I) \mid \exists a \in \Sigma_0 \colon  a \in P(t)\} & = \quad \text{[Def. of \(P\)]} \\
\{\post^{\TA}_t(I) \mid \exists a \in \Sigma_0 \colon  \post_a(I) = \post_{t}(I)\} \enspace .
\end{align*}
Rewriting the above equation we have that:
\begin{align*}
\varphi(\initials(\cG{\mathsf{u}}({\TA}))) & = \\
\{\post_a(I) \mid  a \in \Sigma_0\} & = \quad \text{[Def. of \(\post_{a}(I)\)]} \\
\{q \in Q \mid q \in \delta(a[\,]),  a \in \Sigma_0\} & = \quad \text{[Def. of \(\initials(\TA^{\db})\)]} \\
\initials(\TA^{\db}) \enspace .
\end{align*}

Next, we show that \(\varphi\) is surjective, i.e. \(\forall S \in Q_d, \exists t \in \cT_\Sigma \colon S = \post^{\TA}_t(I)\).
We proceed by induction on the structure of \(\TA^{\mathsf{D}}\), i.e., we set the base case to \(S = \initials(\TA^{\mathsf{D}})\) and we use the transition function \(\delta_d\) onwards to reach all \(S \in Q_d\).
Recall that \(\TA^{\mathsf{D}}\) has no unreachable states.
\begin{itemize}
\item \emph{Base case:}
Let \(S = \initials(\TA^{\mathsf{D}})\).
Then, \(S = \post_a(I)\), with \(a \in \Sigma_0\).

\item \emph{Inductive step:}
W.l.o.g., let us assume that \(f \in \Sigma\) with \(\rank{f} = 2\).
Let \(S,S_1,S_2 \in Q_d\) be such that \(S \in \delta_d(f[S_1, S_2])\).
Assume \(\exists t_1,t_2 \in \cT_\Sigma\) such that \(S_i = \post_{t_i}(I)\) with \(i \in \{1,2\}\).
Then, \vspace*{-2mm}
\begin{align*}
S & =  \\
\{q \in Q \mid \exists q_1 \in S_1, q_2 \in S_2,  q \in \delta(f[q_1,q_2])\} & =  \\
\{q \in Q \mid \exists q_1 \in \post_{t_1}(I), q_2 \in \post_{t_{2}} (I),&\\
q \in \delta(f[q_1, q_2])\} & = \\
\post_{f[t_1,t_2]}(I) \enspace ,
\end{align*}
where the first equality holds by definition of \(\TA^{\mathsf{D}}\), the second equality holds by definition of \(S_i\) with \(i \in  \{1,2\}\), and the third equality holds using Lemma~\ref{prepost}\ref{lemma:prepost:prop:tree}.

Since \(\TA^{\db}\) keeps only reachable states, it follows that \(\forall S \in Q_d, \exists t\in \cT_\Sigma \colon  \varphi(P(t)) = S\), i.e. \(\varphi\) is surjective.
\end{itemize}

It is routine to check that \(\varphi\) is injective for otherwise there exists \(t \not\ruA r\) with \(\post^{\TA}_{t}(\initials({\TA})) = \post^{\TA}_{r}(\initials({\TA}))\) which contradicts definition~\ref{def:automataEquivalences}.
We thus conclude from above (\(\varphi\) is injective and surjective) that \(\varphi\) is bijective.

Next, we show that \(\varphi(\wt{F}) = F_d\): \vspace*{-2mm}
\begin{align*}
\varphi(\wt{F}) & = \quad \text{[Def.~\ref{def:up_TA_construction}]} \\
\varphi(\{P(t) \mid t \in L \}) & = \quad \text{[Def. of \(\varphi\)]} \\
\{\post^{\TA}_t(I) \mid t \in L\} & = \quad \text{[Def. of \(\lang{\TA}\)]} \\
\{\post^{\TA}_t(I) \mid \post^{\TA}_t(I) \cap F \neq \varnothing\} & =  \\
F_d \enspace .
\end{align*}
Note that the last equality holds by definition of \(\TA^{\db}\) and using the fact that \(\varphi\) is bijective.

Finally, w.l.o.g., assume that \(f \in \Sigma\) with \(\rank{f} = 2\), and let \(P(t),P(t_1), P(t_2) \in \wt{Q}\).
By definition of  \(\TA^{\db}\) we have that:
\begin{align*}
\varphi(P(t)) \in \delta_d(f[\varphi(P(t_1)),\varphi(P(t_2))]) & \Lra \\
\post^{\TA}_t(I) = \{q' \in Q \mid \exists q'_i \in \post_{t_i}(I) \colon  q' \in \delta(f[q'_1,q'_2])\}& \enspace .
\end{align*}
Using Lemma~\ref{prepost}\ref{lemma:prepost:prop:tree}, we have that the above equation is equivalent to:
\begin{align*}
\post^{\TA}_t(I) = \post_{f[t_1,t_2]}(I) & \Lra \;\; \text{[Def. \(\ruA\)]} \\
t \ruA  f[t_1,t_2] & \Lra \;\; \text{[Def. of \(P\)]} \\
P(t) = P(f[t_1, t_2]) &\Lra^{\dagger}\text{[\(f[P(t_1),P(t_2)] \subseteq P(f[t_1, t_2])\)]}\\
f[P(t_1),P(t_2)] \subseteq P(t) & \Lra \;\; \text{[Def.~\ref{def:up_TA_construction}]} \\
P(t) \in \tilde{\delta}(f[P(t_1),P(t_2)]) \enspace .
\end{align*}
Note that double-implication \(\dagger\) holds since \(\ruA\) is an upward congruence and thus, $f[P(t_1)$, $P(t_2)] \subseteq P(f[t_1, t_2])$.

\item If \(L\) is path-closed and \({\TA}\) has no unreachable states then \(\cG{\mathsf{d}}({\TA}) \equiv (\TA)^{\mathsf{cD}}\).

Recall that, given \({\TA} = \tuple{Q, \Sigma, \delta, F}\), \(\TA^{\mathsf{cD}}=\tuple{Q_{d}, \Sigma, \delta_d, F_d}\) denotes the co-DBTA that results from applying the co-determinization operation to \({\TA}\) and removing all empty states.
Let \(\cG{\mathsf{d}}({\TA}) = \tuple{\widetilde{Q}, \Sigma, \tilde{\delta}, \widetilde{F}}\).
Let \(P\) be the partition induced by \(\rdA\) and let \(\varphi: \widetilde{Q} \rightarrow Q_{d}\) be the mapping assigning to each state \(P(x) \in \widetilde{Q}\), the set \(\pre^{\TA}_x(F) \in Q_{d}\).
Observe that \(\varphi\) is well-defined since, by def.~\ref{def:automataEquivalences}, \(x \rdA y \) if{}f \(\pre^{\TA}_{x}(F) = \pre^{\TA}_{y}(F)\).
We show that \(\varphi\) is an isomorphism between \(\cG{\mathsf{d}}({\TA})\) and \({\TA}^{\mathsf{cD}}\).
First, we show that \(\varphi(\wt{F}) = F_d\).
\begin{align*}
\varphi(\wt{F}) & = \quad \text{[Def.~\ref{def:down_TA_construction}]} \\
\varphi(\{P(\cX)\}) & = \quad \text{[Def. of \(\varphi\)]} \\
\{\pre^{\TA}_{\cX}(F)\} & = \quad \text{[Def. of \(\pre^{\TA}_{\cX}(F)\)]} \\
\{F\} & = \quad \text{[Def. of \(F_d\)]} \\
F_d \enspace . \vspace*{-2mm}
\end{align*}

Next, we show that \(\varphi\) is surjective, i.e. \(\forall S \in Q_d, \exists x \in \cC_\Sigma \text{ with } x^{-1}L \neq \varnothing \colon  S = \pre^{\TA}_x(F)\).
We proceed by induction on the structure of \(\TA^{\mathsf{cD}}\), i.e., we set the base case to \(S = F_d\) and we use the transition function \(\delta_d\) backwards to reach all \(S \in Q_d\).
Recall that \(\TA^{\mathsf{cD}}\) has no empty states and the set of states of  \(\cG{\mathsf{d}}({\TA})\), i.e., \(\widetilde{Q}\), is defined as \(\widetilde{Q} = \{P(x) \mid x \in \cC_{\Sigma}, x^{-1}L \neq \varnothing\}\).
\begin{itemize}
\itemsep=0.9pt
\item \textit{Base case:}
Let \(S = F_d\). Then \(S = \pre^{\TA}_{x}(F)\) with \(x = \cX\).

\item \textit{Inductive step:}
Now, let \(f \in \Sigma\) and let \(S,S_1,\ldots,S_{\rank{f}} \in Q_d\) be such that $S \in \delta_d(f[S_1,\ldots\,$, $S_{\rank{f}}])$.
By I.H., \(\exists x \in \cC_\Sigma \text{ with } x^{-1}L \neq \varnothing\) \(\colon  S = \pre^{\TA}_x(F)\).
Now let us show that \(\forall i \in \range{\rank{f}}, \exists z \in \cC_\Sigma \text{ with } z^{-1}L \neq \varnothing \colon  S_i = \pre^{\TA}_z(F)\).
By definition of \(\TA^{\mathsf{cD}}\) we have that, for all \(i \in \range{\rank{f}}\):
\begin{align*}
S_i =
\{q_i \in Q \mid \exists &q' \in S,\exists q_1,\ldots,q_{i{-}1},q_{i{+}1},\ldots,q_{\rank{f}} \in Q \colon\\
&q' \in \delta(f[q_1,\ldots,q_{\rank{f}}]) \}  \enspace .
\end{align*}
Since \(\TA\) has no unreachable states, the above set is equivalent to:
\begin{gather}
\{q_i \in Q \mid \exists t \in \cT_\Sigma, \exists r \in \cT_Q \colon t(\varepsilon) = f , \subs{t}{q_i}_i \to^*_{{\TA}} r, \notag\\
r(\varepsilon) \in S\} \enspace . \label{eq:def_q_i} \vspace*{-2mm}
\end{gather}

Note that \(q_i \in Q\) belongs to the set in~\eqref{eq:def_q_i} if{}f \(q_i \in \pre_{\subs{\widetilde{t}}{\cX}_i}(S)\), for every \(\widetilde{t} \in \cT_{\Sigma}\) with \(\subs{\widetilde{t}}{\cX}_i \simp^S \subs{t}{\cX}_i\).
In fact, \(\exists t \in \cT_\Sigma, \exists r \in \cT_Q \colon  t(\varepsilon) = f ,\; r(\varepsilon) \in S\) and \(\subs{t}{q_i}_i \to^*_{{\TA}} r\) implies that \(q_i \in \pre_{\subs{\widetilde{t}}{\cX}_i}(S)\) by setting \(\subs{\widetilde{t}}{\cX}_i = \subs{t}{\cX}_i\).
On the other hand, if \(q_i \in \pre_{\subs{\widetilde{t}}{\cX}_i}(S)\) with \(\subs{\widetilde{t}}{\cX}_i \simp^S \subs{t}{\cX}_i\) then, by definition of \( \pre_{\subs{\widetilde{t}}{\cX}_i}(S)\), \(\exists y \in \cC_{\Sigma}\) with \(y \simp^S  \subs{\widetilde{t}}{\cX}_i \colon y \in \mathcal{L}^{\TA}_{\ua}(q_i,S)\).
Therefore, \(\exists t \in \cT_\Sigma, \exists r \in \cT_Q \colon  t(\varepsilon) = f,\, \subs{t}{q_i}_i \to^*_{{\TA}} r\) and \(\; r(\varepsilon) \in S\) by setting \(\subs{t}{\cX}_i = y\).
\eject
Thus, the set in~\eqref{eq:def_q_i} is equivalent to:
\begin{equation*}
\{q_i \in Q \mid \exists \widetilde{t} \in \cT_\Sigma \colon \subs{\widetilde{t}}{\cX}_i \simp^S \subs{t}{\cX}_i, q_i \in \pre^{\TA}_{\subs{\widetilde{t}}{\cX}_i}(S)\} \enspace .
\end{equation*}
By induction hypothesis we have that the above set is equivalent to:
\begin{gather*}
\{q_i \in Q \mid \exists \widetilde{t}\in \cT_\Sigma,\, \exists x \in \cC_{\Sigma}\colon \subs{\widetilde{t}}{\cX}_i \simp^S \subs{t}{\cX}_i,\,  x^{-1}L \neq \varnothing,\\
\,q_i \in \pre^{\TA}_{\subs{\widetilde{t}}{\cX}_i}(\pre^{\TA}_x(F))\} \enspace .
\end{gather*}
Finally, using Lemma~\ref{prepost}\ref{lemma:prepost:prop:ctx}, the latter is equivalent to:
\begin{gather*}
\{q_i \in Q \mid \exists \widetilde{t}\in \cT_\Sigma,\, \exists x \in \cC_{\Sigma}\colon \subs{\widetilde{t}}{\cX}_i \simp^S \subs{t}{\cX}_i,\,  x^{-1}L \neq \varnothing,\\
q_i \in \pre^{\TA}_{\subs{x}{\subs{\widetilde{t}}{\cX}_i}}(F)\} \enspace .
\end{gather*}
Therefore, for all non-empty \(S_i \in Q_d\), there exists \(z (= \subs{x}{\subs{\widetilde{t}}{\cX}_i}) \in \cC_\Sigma\) with \(z^{-1}L \neq \varnothing\) such that \(S_i = \pre^{\TA}_{z}(F)\) and, since \(\TA^{\mathsf{cD}}\) has no empty states, it follows that \(\varphi\) is surjective.
\end{itemize}\smallskip
It is routine to check that \(\varphi\) is injective for otherwise there exists \(x \not\rdA y\) with \(\pre^{\TA}_{x}(F) = \pre^{\TA}_{y}(F)\) which contradicts definition~\ref{def:automataEquivalences}.
We thus conclude from above (\(\varphi\) is injective and surjective) that \(\varphi\) is bijective.

\medskip
On the other hand, \(\varphi(\initials(\cG{\mathsf{d}}(\TA))) = \initials(\TA^{\mathsf{cD}})\) since:
\begin{align*}
\varphi(\initials(\cG{\mathsf{d}}(\TA))) & = \\
\varphi(\{P(x) \in \wt{Q} \mid \exists a \in \Sigma_0 \colon  P(x) \in \widetilde{\delta}(a[\,])\}) & = \\
\{\pre^{\TA}_x(F) \mid \exists a \in \Sigma_0 \colon  P(x) \in \widetilde{\delta}(a[\,])\} & = \\
\{\pre^{\TA}_x(F) \mid \exists a \in \Sigma_0 \colon  \subs{P(x)}{a} \subseteq L\} &\enspace .
\end{align*}
By  definition of \(\lang{\TA}\) and Lemma~\ref{lemma:simpImpliesQuotientEqual} we have that the above set is equivalent to:
\begin{equation*}
\{\pre^{\TA}_x(F) \mid \exists a \in \Sigma_0, \exists q \in \pre_x(F) \colon  q \in \delta(a[\,])\} \enspace .
\end{equation*}
Finally, since \(\varphi\) is bijective we have that the latter is equivalent to:
\begin{align*}
\{q_d \in Q_d \mid  \exists a \in \Sigma_0, \exists q \in q_d \colon  q \in \delta(a[\,])\} & = \; \text{[Def. of \(\delta_d\)]} \\
\{q_d \in Q_d \mid \exists a \in \Sigma_0 \colon  q_d \in \delta_d(a[\,])\} & = \; \text{[Def. of \(\initials(\TA^{\mathsf{cD}})\)]} \\
\initials(\TA^{\mathsf{cD}}) \enspace . \vspace*{-3mm}
\end{align*}

Finally, we prove that:
\begin{gather*}
\varphi(P(x)) \in \delta_d(f[\varphi(P(x_1)),\ldots,\varphi(P(x_{\rank{f}}))]) \text{ if{}f }\\
P(x)\in \tilde{\delta}(f[P(x_1), \ldots, P(x_{\rank{f}})]) \enspace ,
\end{gather*}
where \(P(x),P(x_1),\ldots, P(x_k) \in \wt{Q}\) and \(f \in \Sigma\).

\eject

Note that, by definition of \(\TA^{\mathsf{cD}}\) and \(\varphi\),   \(\varphi(P(x)) \in \delta_d(f[\varphi(P(x_1)),\ldots,\varphi(P(x_{\rank{f}}))])\) if{}f, for each \mbox{\(i \in \range{\rank{f}}\colon\)}
\begin{gather*}
\pre^{\TA}_{x_i}(F) = \{q_i \in Q \mid \exists q \in \pre^{\TA}_x(F), \exists q_1,\ldots,q_{\rank{f}} \in Q \colon\\
q \in \delta(f[q_1,\ldots,q_{\rank{f}}])\} \enspace .
\end{gather*}
Since \(\TA\) has no unreachable states, for all \(i \in \range{\rank{f}}\), we have that:
\begin{gather*}
\pre^{\TA}_{x_i}(F) = \{q_i \in Q \mid \exists q \in \pre^{\TA}_x(F), \exists t \in \cL_{\da}(q) \colon\\
t(\varepsilon) = f, q_i \in \pre^{\TA}_{\subs{t}{\cX}_i}(\{q\})\} \enspace .
\end{gather*}

By Lemma~\ref{prepost}\ref{lemma:prepost:prop:ctx}, we have that the above equality is equivalent to:
\begin{align*}
 \exists t \in \cT_{\Sigma} \colon t(\varepsilon) = f, \pre^{\TA}_{x_i}(F) = \pre^{\TA}_{\subs{x}{\subs{t}{\cX}_i}}(F)  & \Lra\\
  \exists t \in \cT_{\Sigma}  \colon  t(\varepsilon) = f, x_i \rdA  \subs{x}{\subs{t}{\cX}_i} & \Lra \\
    \exists t \in \cT_{\Sigma}  \colon  t(\varepsilon) = f, P(x_i) =  P(\subs{x}{\subs{t}{\cX}_i}) & \Lra^{\dagger} \\
 \exists t \in \cT_{\Sigma}  \colon t(\varepsilon) = f,  \subs{P(x)}{\subs{t}{\cX}_i} \subseteq P(x_i) & \Lra \\
P(x) \in \tilde{\delta}(f[P(x_1),\ldots,P(x_k)]) \enspace .
\end{align*}
Note that double-implication \(\dagger\) holds since \(\rdA\) is a downward congruence and thus, \\ \(\subs{P(x)}{\subs{t}{\cX}_i} \subseteq P(\subs{x}{\subs{t}{\cX}_i})\).

\item \(\cF{\mathsf{u}}(L)\) is isomorphic to the minimal DBTA for \(L\).%

Let \(P\) be the partition induced by the Nerode's upward congruence \(\ru_L\).
As shown by Comon et al.~\cite{tata2007}, the minimal DBTA for \(L\) is the BTA \(\cM=\tuple{Q',\Sigma,\delta',F'}\) where \(Q' = \{P(t) \mid t\in \cT_\Sigma\}\), \(F' = \{P(t) \mid t \in L\}\) and \(\delta'(f[P(t_1),\ldots,P(t_{\rank{f}})]) = P(f[t_1,\ldots,t_{\rank{f}}])\) for every \(f \in \Sigma\) and \(t,t_1,\ldots,t_{\rank{f}} \in \cT_\Sigma\).
Hence, it is easy to check that \(\cM \equiv \cF{\mathsf{u}}(L)\).

\item If \(L\) is path-closed then \(\cF{\mathsf{d}}(L)\) is isomorphic to the minimal co-DBTA for \(L\).

Trivially, \(\mathord{\rd} \udiffg  \mathord{\rd_L}\) is the coarsest strongly downward congruence w.r.t. \(L\) that satisfies:
\begin{align}\label{eq:preserveCtx}
\forall x,y \in \cC_\Sigma \colon  x \rd y \Ra x^{-1}L = y^{-1}L \enspace .
\end{align}
Next, we show that \(\cF{\mathsf{d}}(L)\) is the minimal co-DBTA for \(L\) by contradiction.

Let \(\TA' = \tuple{Q,\Sigma,\delta,F}\) be a co-DBTA for \(L\) with strictly fewer states than \(\cF{\mathsf{d}}(L)\).
Note that we can assume that \(\TA'\) has no unreachable states, otherwise we could simply remove them and obtain a smaller equivalent co-DBTA.

First, we show that, for every co-DBTA \(\TA = \tuple{Q,\Sigma,\delta,F}\) with no unreachable states, the set \(\pre_x^{\TA}(F)\) is a singleton, for every \(x \in \cC_{\Sigma}\).
We proceed by contradiction.
Let \(F = \{q_f\}\), \(x \in \cC_{\Sigma}\) and assume that \(\exists q,q' \in Q\) with \(q \neq q'\) and such that \(q,q'\in \pre_x^{\TA}(F)\) , i.e., \(x \in P_{\simp}(\mathcal{L}_{\ua}(q))\) and  \(x \in P_{\simp}(\mathcal{L}_{\ua}(q'))\).
By definition of \(P_{\simp}\), \(\exists y \simp x \colon y \in \mathcal{L}_{\ua}(q)\) and \(\exists y' \simp x \colon y \in \mathcal{L}_{\ua}(q')\), i.e., \(\exists t, t' \in \cT_{Q} \colon \subs{y}{q} \to^*_{\TA} t, \subs{y'}{q'} \to^*_{\TA} t' \) and \(t(\varepsilon) = t'(\varepsilon) = q_f\).
Observe that \(y \simp y'\), by transitivity of \(\simp\).

Since \(q \neq q'\), \(t(\varepsilon) = t'(\varepsilon) = q_f\) and  \(y \simp y'\), there exists necessarily \({y_1}, {y'_1} \in \cC_{\Sigma \cup Q}\) with \(y_1 \simp y'_1\), \(r_i, r_i' \in \cT_Q\), \(f \in \Sigma\) and \(q_0 \in Q\) s.t.:
\begin{align*}
\subs{y}{q} \to^*_{\TA} \subs{y_1}{f[r_1, \ldots, r_{\rank{f}}]} &\to_{\TA} \subs{\tilde{x}}{q_0[r_1, \ldots, r_{\rank{f}}]}  \to^*_{\TA} t\\
&\text{and}\\
\subs{y'}{q'} \to^*_{\TA} \subs{y'_1}{f[r'_1, \ldots, r'_{\rank{f}}]} &\to_{\TA} \subs{\tilde{y}}{q_0[r'_1, \ldots, r'_{\rank{f}}]} \to^*_{\TA} t' \enspace ,
\end{align*}
where \(r_i \neq r'_i\), for some \(i \in \range{\rank{f}}\).
It follows that \(\exists q_0 \in \delta(f[r_1(\varepsilon), \ldots, r_{\rank{f}}(\varepsilon)])\) and \(q_0 \in \delta(f[r'_1(\varepsilon), \ldots, r'_{\rank{f}}(\varepsilon)])\) with \(r_i(\varepsilon) \neq r'_i(\varepsilon)\) for some \(i \in \range{\rank{f}}\), which contradicts the fact that \(\TA\) is co-deterministic.
Thus, we conclude that if \(\TA\) is a co-DBTA with no unreachable states, \(\pre_x^{\TA}(F)\) is a singleton, for every \(x \in \cC_{\Sigma}\).

Therefore, \(\rd_{\TA'}\), which by Lemma~\ref{automataCongruences}\ref{lemma:automataCongruences:downAT} is a strongly downward congruence w.r.t. \(L\) that satisfies Equation~\eqref{eq:preserveCtx}, has as many equivalence classes as states has \(\TA'\).
Since \(\TA'\) has fewer states than \(\cF{\mathsf{d}}(L)\), it follows that \(\mathord{\rd_{\TA'}} \subset \mathord{\rd_L}\), which contradicts the fact that \(\rd_L\) is the coarsest strongly downward congruence that satisfies Equation~\eqref{eq:preserveCtx}.

Therefore \(\TA'\) has, at least, as many states as \(\cF{\mathsf{d}}(L)\) and, as a consequence, \(\cF{\mathsf{d}}(L)\) is the minimal co-DBTA for \(L\).\vspace*{-1mm}

\item If \(L\) is path-closed then \(\cG{\mathsf{u}}(\cG{\mathsf{d}}(\TA)) \equiv \cF{\mathsf{u}}(L)\).

It follows from Corollary~\ref{cor:HTPreservesL} and~\ref{downwardIsDet} that \(\cG{\mathsf{d}}(\TA)\) is a co-DBTA with \(\lang{\cA} = \lang{\cG{\mathsf{d}}(\TA)}\).
Furthermore, by Remark~\ref{remark:no-determinism}, \(\cG{\mathsf{d}}(\TA)\) has no empty states.
Therefore, by Theorem~\ref{automataEqualNerode}\ref{theorem:automata=nerodeUP}, we have that \(\mathord{\ru_{\cG{\mathsf{d}}(\TA)}} = \mathord{\ru_{\lang{\cA}}}\) so \(\cG{\mathsf{u}}(\cG{\mathsf{d}}(\TA)) \equiv \cF{\mathsf{u}}(L)\).\vspace*{-1mm}

\item If \(L\) is path-closed then \(\cG{\mathsf{d}}(\cG{\mathsf{u}}(\TA)) \equiv \cF{\mathsf{d}}(L)\).

It follows by Remark~\ref{remark:DBTA} that \(\cG{\mathsf{u}}(\TA)\) is a DBTA with no unreachable states.
Furthermore, by Lemma~\ref{HBPreservesL}, \(\lang{\cA} = \lang{\cG{\mathsf{u}}(\TA)}\) .
Thus, by Theorem~\ref{automataEqualNerode}\ref{theorem:automata=nerodeDW}, we have that \( \mathord{\rd_{\cG{\mathsf{u}}(\TA)}} = \mathord{\rd_{\lang{\cA}}}\), so \(\cG{\mathsf{d}}(\cG{\mathsf{u}}(\TA)) \equiv \cF{\mathsf{d}}(L)\).
\end{enumerate}

\vspace*{-7mm}
\end{proof}

\section{A congruence-based perspective on Brzozowski's method}
\label{sec:Novel}

Brzozowski's double-reversal method~\cite{brzozowski1962canonical} is a classical algorithm for minimizing word automata.
It relies on the fact that determinizing a co-deterministic automaton \(\cN\) yields  the minimal deterministic automaton for the language of \(\cN\).
In this section, we show, as an easy consequence of Theorem~\ref{theoremF}\ref{theoremF:DetismorphicDBTA},~\ref{theoremF:DetismorphiccoDBTA}, and~\ref{theoremF:DoubleReversalDBTA}, that this algorithm can be adapted for finding the minimal DBTA for the language of a given BTA.
To the best of our knowledge this is the first proof of correctness of this method for minimizing bottom-up tree automata.
A precise statement is given next followed by a justification.

\begin{corollary}\label{thm:Brzozowski-method}
Let \(\TA\) be a BTA without unreachable states such that \(L = \lang{\TA}\) is a path-closed language.
Then, \((\TA^{\cdb})^{\db} \equiv \cF{\mathsf{u}}(L)\).
\end{corollary}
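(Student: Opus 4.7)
The plan is to chain three parts of Theorem~\ref{theoremF} together. We are given that $\TA$ has no unreachable states and that $L = \lang{\TA}$ is path-closed, so we can safely invoke the parts of that theorem that require these hypotheses.

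First, I would apply Theorem~\ref{theoremF}\ref{theoremF:DetismorphiccoDBTA}, which precisely states that under the two hypotheses above, $\cG{\mathsf{d}}(\TA) \equiv \TA^{\cdb}$. In particular, $\cG{\mathsf{d}}(\TA)$ and $\TA^{\cdb}$ define the same language, namely $L$ (which we also know explicitly via Theorem~\ref{theoremF}\ref{theoremF:languageTTA}). Next, since $\cG{\mathsf{u}}$ depends on its input only through its language and the structure of its state space up to isomorphism (and the subset-style construction it encodes is invariant under BTA isomorphism), I would note that $\cG{\mathsf{u}}(\cG{\mathsf{d}}(\TA)) \equiv \cG{\mathsf{u}}(\TA^{\cdb})$. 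Then applying Theorem~\ref{theoremF}\ref{theoremF:DetismorphicDBTA} to the BTA $\TA^{\cdb}$ yields $\cG{\mathsf{u}}(\TA^{\cdb}) \equiv (\TA^{\cdb})^{\db}$.

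Finally, I would invoke Theorem~\ref{theoremF}\ref{theoremF:DoubleReversalDBTA}, which guarantees that for a BTA whose language is path-closed, $\cG{\mathsf{u}}(\cG{\mathsf{d}}(\TA)) \equiv \cF{\mathsf{u}}(L)$. Chaining the three isomorphisms,
\[
(\TA^{\cdb})^{\db} \equiv \cG{\mathsf{u}}(\TA^{\cdb}) \equiv \cG{\mathsf{u}}(\cG{\mathsf{d}}(\TA)) \equiv \cF{\mathsf{u}}(L),
\]
which is the desired statement.

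The only substantive point that needs attention is justifying that $\cG{\mathsf{u}}$ respects isomorphism of BTAs in the middle step (so that we can pass from $\cG{\mathsf{d}}(\TA)$ to $\TA^{\cdb}$ inside $\cG{\mathsf{u}}(\cdot)$). This is immediate from Definition~\ref{def:FG} together with Definition~\ref{def:automataEquivalences}, because $\ruA$ is defined purely in terms of $\post_t(\initials(\TA))$, and the $\post$ operator together with the set of initial states is preserved by any BTA isomorphism; hence $\cG{\mathsf{u}}$ produces isomorphic outputs on isomorphic inputs. Everything else is just citation of Theorem~\ref{theoremF}, so no genuine obstacle remains.
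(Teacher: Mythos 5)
Your proposal is correct and follows exactly the route the paper intends: the corollary is stated there as an immediate consequence of Theorem~\ref{theoremF}\ref{theoremF:DetismorphicDBTA}, \ref{theoremF:DetismorphiccoDBTA} and \ref{theoremF:DoubleReversalDBTA}, which is precisely the chain you assemble. Your extra remark that \(\cG{\mathsf{u}}\) respects BTA isomorphism is a small detail the paper leaves implicit, and your justification of it is fine.
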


To be precise, Brzozowski's double-reversal method for word automata constructs the intermediate co-deterministic automaton by combining a reverse operation, followed by a determinization operation and then a reverse operation again.
However, the reverse construction applied to BTAs, i.e., the operation that flips the direction of the transition function and switches final states by initial states, does not yield a BTA.
On the contrary, it produces, a \emph{top-down tree automaton} (TTA).

Top-down tree automata (see~\cite{tata2007} for a detailed description of this model), as opposed to the class of BTAs, start their computations from the root, which is an \emph{initial} state of the automaton, down to the leaves.
Since TTAs can be interpreted as the \emph{reverse} of BTAs, indeed both classes of automata define the regular tree languages, all the results of Section~\ref{sec:CongruencesandBTA} have TTA-equivalents that we defer to Appendix~\ref{appendix:TTA}.
It is worth noting that, as co-DBTAs, deterministic TTAs are strictly less expressive than the general TTAs and  define the subclass of path-closed languages.
Since Brzozowski's method goes through the construction of an intermediate co-DBTA \(\TA^{\cdb}\), this algorithm is restricted to BTAs defining path-closed tree languages.

Relying on the definitions and notation we introduce in Appendix~\ref{appendix:TTA}, we conclude this section by stating Brzozowski's method using the reverse operation.
Namely, given a BTA \(\TA\), \(\TA^R\) denotes the reverse TTA of \(\TA\) and \((\TA^R)^{\db}\) denotes the result of applying the counterpart determinization operation on TTAs.
Thus, \(((\TA^R)^{\db})^R \equiv \TA^{\cdb}\) and the following holds.

\begin{corollary}
\label{cor:RDRD}
Let \(\TA\) be a BTA without unreachable states such that \(L = \lang{\TA}\) is a path-closed language.
Then, \((((\TA^R)^{\db})^R)^{\db} \equiv \cF{\mathsf{u}}(L)\).
\end{corollary}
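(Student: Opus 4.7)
The plan is to obtain Corollary~\ref{cor:RDRD} by a short reduction to Corollary~\ref{thm:Brzozowski-method}. Since the latter already guarantees $(\TA^{\cdb})^{\db} \equiv \cF{\mathsf{u}}(L)$ under the stated hypotheses, it suffices to prove the identity
\[
  ((\TA^R)^{\db})^R \equiv \TA^{\cdb}\,,
\]
and then apply the bottom-up determinization to both sides. So the first step I would take is to set up the reversal operation $(\cdot)^R$ exchanging BTAs and TTAs (whose formal definition and properties are deferred to Appendix~\ref{appendix:TTA}): it flips every transition $q \in \delta(f[q_1,\ldots,q_{\rank{f}}])$ into a top-down rule $q_i$ is reached from $q$ on reading $f$ at position $i$, and it swaps the roles of $F$ and $\initials(\TA)$. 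Accordingly $(\cdot)^R$ is involutive ($(\TA^R)^R \equiv \TA$) and turns the class of BTAs bijectively into the class of TTAs, preserving the accepted language.

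Next I would argue the equivalence $((\TA^R)^{\db})^R \equiv \TA^{\cdb}$ by chasing definitions. The top-down determinization $(\cdot)^{\db}$ on a TTA is the natural counterpart of the bottom-up determinization on a BTA: states are sets of original states, and a transition reads $f$ by sending each child position to the set of possible successors. Under reversal, such a transition becomes precisely a bottom-up transition whose image at the parent is a set $R$ and whose $i$-th child component is $R_i = \{q_i \mid \exists q \in R,\; q_1,\ldots,q_{i-1},q_{i+1},\ldots,q_{\rank{f}} \in Q \colon q \in \delta(f[q_1,\ldots,q_{\rank{f}}])\}$, which matches the bottom-up co-determinization described in Section~\ref{sec:bta}. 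The sole final state $F$ of $\TA^{\cdb}$ corresponds, after reversal, to the unique initial state of $(\TA^R)^{\db}$, i.e.\ the image of $F$ under the reversal; and empty states on the BTA side match unreachable states on the TTA side, so the state-pruning conventions of the two determinizations align. Since $\TA$ has no unreachable states, its reverse $\TA^R$ has no states that are useless for acceptance, so no states are silently dropped that would have mattered on the co-deterministic side either.

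Having established the identity, I would close the argument as follows: apply $(\cdot)^{\db}$ to both sides to get $(((\TA^R)^{\db})^R)^{\db} \equiv (\TA^{\cdb})^{\db}$, and invoke Corollary~\ref{thm:Brzozowski-method} to rewrite the right-hand side as $\cF{\mathsf{u}}(L)$. The hypotheses of that corollary (no unreachable states in $\TA$ and $L$ path-closed) transfer verbatim, and the intermediate object $(\TA^R)^{\db}$ is a deterministic TTA, which exists precisely because $L$ is path-closed (cf.\ Remark~\ref{remark:path-closed-languages}).

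The main obstacle I anticipate is the careful bookkeeping inside the equivalence $((\TA^R)^{\db})^R \equiv \TA^{\cdb}$, specifically verifying that the set-valued successors computed by the top-down subset construction on $\TA^R$ coincide componentwise with the tuple-valued predecessor sets prescribed by the bottom-up co-determinization of $\TA$, and that the treatments of final/initial states and of pruning (empty vs.\ unreachable) are consistent on both sides. This is essentially a diagram-chase that unpacks the two subset constructions symbol by symbol; the bulk of the technical effort lives in the appendix machinery we rely on, so the proof itself fits in a few lines once those constructions are in place.
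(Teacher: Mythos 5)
Your proposal is correct and follows essentially the same route as the paper: the paper also derives Corollary~\ref{cor:RDRD} from Corollary~\ref{thm:Brzozowski-method} together with the identity \(((\TA^R)^{\db})^R \equiv \TA^{\cdb}\), which it justifies exactly as you do, by matching the reversed top-down subset construction against the bottom-up co-determinization (and the empty/unreachable correspondence) via the appendix machinery on reversal. Your bookkeeping of the \(R_i\) components and of the final/initial and pruning conventions coincides with the paper's.
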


\subsection{Generalization of Brzozowski's method}
\label{sec:general-RDRD}
Brzozowski's double-reversal methods builds a co-DBTA in order to guarantee, by Theorem~\ref{automataEqualNerode}\ref{theorem:automata=nerodeUP}, that determinizing the automaton produces the minimal DBTA.

In the word automata case, Brzozowski and Tamm~\cite{brzozowski2014theory} showed that going through a co-deterministic automata is not necessary and defined a class of automata, which strictly contains the co-deterministic ones, for which determinizing the automata yields the minimal deterministic automaton.
Next we generalize that result from word to trees which, incidentally, allows us to drop the restriction to the path-closed languages.

\begin{theorem}
\label{theorem:minimalifreverseatomic}
Let \(\TA = \tuple{Q,\Sigma,\delta,F}\) be a BTA with \(L=\lang{\TA}\).
Then \(\cG{\mathsf{u}}(\TA)\equiv \cF{\mathsf{u}}(L)\) if{}f \(\forall q \in Q \colon   P_{\ru_L}(\mathcal{L}_{\da}(q)) = \mathcal{L}_{\da}(q)\).
\end{theorem}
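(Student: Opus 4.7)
The plan is to reduce the desired isomorphism to the equality $\mathord{\ruA} = \mathord{\ru_L}$, and then translate that equality into the stated closure condition on downward languages. Since, by construction, the states of $\cH^{\mathsf{u}}(\ru, L)$ are precisely the distinct blocks of $\ru$, the BTAs $\cG{\mathsf{u}}(\TA)$ and $\cF{\mathsf{u}}(L)$ are isomorphic exactly when $\ruA$ and $\ru_L$ induce the same partition of $\cT_\Sigma$. Lemma~\ref{automataCongruences}\ref{lemma:automataCongruences:upAB} already gives $\mathord{\ruA} \subseteq \mathord{\ru_L}$, so it suffices to verify that the two equivalences have the same (finite) index, which forces the finer partition to coincide with the coarser one. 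This reductive step is routine; I expect the rest of the argument to hinge on a short characterization of $\ruA$ in terms of membership in downward languages.

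That characterization is the identity $q \in \post_t(\initials(\TA)) \Lra t \in \mathcal{L}_{\da}(q)$, which is immediate from Definition~\ref{def:postpre} (recall that $\mathcal{L}_{\da}(q)$ abbreviates $\mathcal{L}_{\da}(q, \initials(\TA))$). Consequently, $t \ruA r$ holds exactly when $t$ and $r$ lie in the same downward languages $\mathcal{L}_{\da}(q)$ across $q \in Q$.

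For the $(\La)$ direction I would assume $P_{\ru_L}(\mathcal{L}_{\da}(q)) = \mathcal{L}_{\da}(q)$ for every state $q$, pick any pair $t \ru_L r$, and use the closure hypothesis to conclude $t \in \mathcal{L}_{\da}(q) \Lra r \in \mathcal{L}_{\da}(q)$ for every $q$; hence $\post_t(\initials(\TA)) = \post_r(\initials(\TA))$ and $t \ruA r$. Combined with $\mathord{\ruA} \subseteq \mathord{\ru_L}$, this yields $\mathord{\ruA} = \mathord{\ru_L}$ and therefore the isomorphism. For the $(\Ra)$ direction I would fix $q$ and $t \in \mathcal{L}_{\da}(q)$, and observe that any $r \ru_L t$ is also $r \ruA t$ by hypothesis, so $q \in \post_t(\initials(\TA)) = \post_r(\initials(\TA))$, giving $r \in \mathcal{L}_{\da}(q)$; this proves $P_{\ru_L}(\mathcal{L}_{\da}(q)) \subseteq \mathcal{L}_{\da}(q)$, while the reverse inclusion is reflexivity.

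I do not anticipate any genuine obstacle, since both directions reduce to mechanical unfolding of definitions once the $\post$-characterization of $\ruA$ is in place. The only mildly subtle point is the cardinality step that promotes $\cG{\mathsf{u}}(\TA) \equiv \cF{\mathsf{u}}(L)$ to $\mathord{\ruA} = \mathord{\ru_L}$, where one must invoke the general fact that a finer equivalence with the same finite index as a coarser one necessarily coincides with it.
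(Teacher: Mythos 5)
Your proposal is correct and follows essentially the same route as the paper: both directions reduce the isomorphism to the equality \(\mathord{\ruA} = \mathord{\ru_L}\) and then exploit the tautology \(q \in \post_t(\initials(\TA)) \Lra t \in \mathcal{L}_{\da}(q)\). The only differences are presentational — you justify the passage from the isomorphism to \(\mathord{\ruA} = \mathord{\ru_L}\) explicitly via the equal-finite-index argument, and in the \((\La)\) direction you argue element-wise where the paper routes through Lemma~\ref{blockBTAasIntersection}; both are sound.
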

\begin{proof}
First, we show that if \(\cG{\mathsf{u}}(\TA)\) is the minimal DBTA for \(L\) then we have \(\forall q \in Q \colon   P_{\ru_L}(\mathcal{L}_{\da}(q)) = \mathcal{L}_{\da}(q)\).

\medskip
To simplify the notation, let \(I\) denote \(\initials(\TA)\).
Then, we have that:
\begin{align*}
P_{\ru_L}(\mathcal{L}_{\da}(q)) & =\\
\{r \in \cT_\Sigma \mid \exists t \in \mathcal{L}_{\da}(q) \colon  L\,r^{-1} = L\,t^{-1}\} & = \;\; \text{[\(\mathord{\ru_L} = \mathord{\ru_{\TA}}\)]} \\
\{r \in \cT_\Sigma \mid \exists t \in \mathcal{L}_{\da}(q) \colon  \post^{\TA}_r(I) = \post^{\TA}_t(I)\} &\enspace .
\end{align*}
Note that \(\mathord{\ru_L} = \mathord{\ru_{\TA}}\) by hypothesis, since \(\cG{\mathsf{u}}(\TA)\equiv \cF{\mathsf{u}}(L)\).
On the other hand, by definition, \(  q \in \post^{\TA}_t(I) \Ra t \in \mathcal{L}_{\da}(q)\).
Thus, we have the following set inclusion:
\begin{align*}
\{r \in \cT_\Sigma \mid \exists t \in \mathcal{L}_{\da}(q) \colon  \post^{\TA}_r(I) = \post^{\TA}_t(I)\} &\subseteq \\
\{r \in \cT_\Sigma \mid q \in \post^{\TA}_r(I)\} &=\\
\mathcal{L}_{\da}(q) &\enspace .
\end{align*}
By reflexivity of \(\ru_L, \) we have that \(\mathcal{L}_{\da}(q) \subseteq P_{\ru_L}(\mathcal{L}_{\da}(q))\), and thus we conclude \(P_{\ru_L}(\mathcal{L}_{\da}(q)) = \mathcal{L}_{\da}(q) \).

\medskip
Now, assume that \(P_{\ru_L}(\mathcal{L}_{\da}(q)) = \mathcal{L}_{\da}(q)\), for each \(q \in Q\). %
Then, for every \(t \in \cT_\Sigma\),
\begin{align}
P_{\ru_{\TA}}(t) & =^{\dagger} \notag \\
\bigcap\limits_{\substack{q \in \post^{\TA}_t(I)}} \mathcal{L}_{\da}(q) \; \cap \bigcap\limits_{\substack{q \notin \post^{\TA}_t(I)}} (\mathcal{L}_{\da}(q))^{\complement} & =^{\dagger\dagger} \notag \\
\bigcap\limits_{\substack{q \in \post^{\TA}_t(I)}} P_{\ru_L}(\mathcal{L}_{\da}(q)) \; \cap \bigcap\limits_{\substack{q \notin \post^{\TA}_t(I)}} (P_{\ru_L}(\mathcal{L}_{\da}(q)))^{\complement}\enspace . \label{eq:union_blocks_uL}
\end{align}
Note that equality \(\dagger\) holds by Lemma~\ref{blockBTAasIntersection} and \(\dagger\dagger\) holds since \(P_{\ru_L}(\mathcal{L}_{\da}(q)) = \mathcal{L}_{\da}(q)\) by hypothesis.

\medskip
It follows from~\eqref{eq:union_blocks_uL} that \(P_{\ru_{\TA}}(t)\) is a \emph{union} of blocks of \(P_{\ru_L}\), for each \(t \in \cT_{\Sigma}\).
In other words, \(\mathord{\ru_L} \subseteq \mathord{\ru_{\TA}}\).
On the other hand, by Lemma~\ref{automataCongruences}\ref{lemma:automataCongruences:upAB}, \(\mathord{\ru_{\TA}} \subseteq \mathord{\ru_L}\).
Therefore, \(P_{\ru_{\TA}}(t)\) necessarily corresponds to one single block of \(P_{\ru_L}\), namely, \(P_{\ru_L}(t)\).
Since \(P_{\ru_{\TA}}(t) = P_{\ru_L}(t)\) for each \(t \in \cT_\Sigma\), we conclude that \(\cG{\mathsf{u}}(\TA)\equiv \cF{\mathsf{u}}(L)\).
\end{proof}

Given a regular tree language \(L\), the minimal DBTA for \(L\) is \(\cF{\mathsf{u}}(L)\) (Theorem~\ref{theoremF}\ref{theoremF:MinimalDBTA}).
On the other hand, we show (see proof of Lemma~\ref{HBPreservesL}) that the states of the minimal DBTA are in one-to-one correspondence with the blocks of \(P_{\ru_L}\), i.e., for every state \(q\) of \(\cF{\mathsf{u}}(L)\), there exists a tree \(t \in \cT_\Sigma\) such that \(\cL_{\da}(q) = P_{\ru_L}(t)\) and vice-versa.
Therefore, for every \(S \subseteq \cT_\Sigma\), \(P_{\ru_L}(S) = S\)  if{}f \(S\) is a union of downward languages of states of the minimal DBTA for \(L\).
This property and Theorem~\ref{theorem:minimalifreverseatomic} allows us to give an alternative characterization of the class of automata for which the determinization operation yields the minimal DBTA.

\begin{corollary}\label{coro:cond2}
Let \(\TA\) be a BTA with \(L = \lang{\TA}\).
Then \((\TA)^{\db} \equiv \cF{\mathsf{u}}(L)\) if{}f the downward language of every state in \(\TA\) is a union of downward languages of the minimal DBTA for L.
\end{corollary}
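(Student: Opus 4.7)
The plan is to chain Theorem~\ref{theoremF}\ref{theoremF:DetismorphicDBTA} with Theorem~\ref{theorem:minimalifreverseatomic}, and then translate the condition on blocks of $\ru_L$ into the condition on downward languages of the minimal DBTA. By Theorem~\ref{theoremF}\ref{theoremF:DetismorphicDBTA}, $\TA^{\db} \equiv \cG{\mathsf{u}}(\TA)$, so the statement $\TA^{\db} \equiv \cF{\mathsf{u}}(L)$ is equivalent to $\cG{\mathsf{u}}(\TA) \equiv \cF{\mathsf{u}}(L)$. Applying Theorem~\ref{theorem:minimalifreverseatomic}, this is in turn equivalent to the condition
\[
\forall q \in Q \colon P_{\ru_L}(\mathcal{L}_{\da}(q)) = \mathcal{L}_{\da}(q) \enspace .
\]

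The next step is to recognize that, for any set $S \subseteq \cT_\Sigma$, the equality $P_{\ru_L}(S) = S$ holds if{}f $S$ is a union of equivalence classes (blocks) of the congruence $\ru_L$. This is immediate from the definition of $P_{\ru_L}(S) = \bigcup_{t \in S} P_{\ru_L}(t)$ together with reflexivity of $\ru_L$.

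The key identification is between the blocks of $\ru_L$ and the downward languages of the states of $\cF{\mathsf{u}}(L)$. Recall from Definition~\ref{def:up_TA_construction} that the set of states of $\cF{\mathsf{u}}(L) = \cH^{\mathsf{u}}(\ru_L, L)$ is exactly $\{P_{\ru_L}(t) \mid t \in \cT_\Sigma\}$; and from Equation~\eqref{eq:ru_preserve_downward_language} in the proof of Lemma~\ref{HBPreservesL} we have $\mathcal{L}^{\cF{\mathsf{u}}(L)}_{\da}(P_{\ru_L}(t)) = P_{\ru_L}(t)$, for every $t \in \cT_\Sigma$. Hence each block of $\ru_L$ coincides, as a subset of $\cT_\Sigma$, with the downward language of the corresponding state of the minimal DBTA, and the correspondence is a bijection. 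Consequently, ``$S$ is a union of blocks of $\ru_L$'' is the same as ``$S$ is a union of downward languages of states of the minimal DBTA for $L$''.

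Putting these observations together, $\TA^{\db} \equiv \cF{\mathsf{u}}(L)$ if{}f for every $q \in Q$ the set $\mathcal{L}_{\da}(q)$ is a union of downward languages of states of the minimal DBTA for $L$, which is the desired conclusion. The only step requiring care is the bijection between blocks of $\ru_L$ and downward languages of states of $\cF{\mathsf{u}}(L)$, but this is exactly what is proved inside Lemma~\ref{HBPreservesL}, so no new technical machinery is needed.
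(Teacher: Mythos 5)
Your proposal is correct and follows essentially the same route as the paper: chain Theorem~\ref{theoremF}\ref{theoremF:DetismorphicDBTA} with Theorem~\ref{theorem:minimalifreverseatomic}, then use the identification of blocks of \(\ru_L\) with downward languages of states of \(\cF{\mathsf{u}}(L)\) established via Equation~\eqref{eq:ru_preserve_downward_language} in the proof of Lemma~\ref{HBPreservesL}. No gaps.
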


It is worth pointing that similarly to path-closedness the conditions of Corollary~\ref{coro:cond2} and Theorem~\ref{theorem:minimalifreverseatomic} are decidable since isomorphism is decidable, \(\cG{\mathsf{u}}(\TA)\) and \((\TA)^{\db}\) are effectively computable and so is \(\cF{\mathsf{u}}(L)\) \cite{tata2007}.

Finally, we give the counterpart result of Theorem~\ref{theorem:minimalifreverseatomic} for co-DBTAs.
Namely, we show a sufficient and necessary condition on a BTA \(\TA\) that guarantees that when we co-determinize it (\(\TA^{\mathsf{cD}}\)) we obtain the minimal co-DBTA for \(\lang{\TA}\).

\begin{theorem}
\label{cominimalifreverseatomic}
Let \(\TA = \tuple{Q,\Sigma,\delta,F}\) be a BTA without unreachable states and such that \(L=\lang{\TA}\) is a path-closed language.
Then \(\cG{\mathsf{d}}(\TA) \equiv \cF{\mathsf{d}}(L)\)  if{}f \(\forall q \in Q \colon   P_{\rd_L}(\mathcal{L}_{\ua}(q)) = P_{\simp}(\mathcal{L}_{\ua}(q))\).
\end{theorem}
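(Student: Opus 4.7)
The plan is to mirror the proof of Theorem~\ref{theorem:minimalifreverseatomic} taking care of the asymmetry between the upward and downward sides: while the blocks of $\ru_{\TA}$ are expressed via plain downward languages of states, those of $\rd_{\TA}$ are expressed via $P_{\simp}$-closures of upward languages (see Lemma~\ref{blockBTAasIntersection}). This asymmetry is precisely what forces $P_{\simp}(\mathcal{L}_{\ua}(q))$ to appear on the right-hand side of the claimed equality instead of $\mathcal{L}_{\ua}(q)$ alone.

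For the forward direction, I would assume $\cG{\mathsf{d}}(\TA) \equiv \cF{\mathsf{d}}(L)$, i.e.\ $\mathord{\rdA} = \mathord{\rd_L}$, and unfold
\[
P_{\rd_L}(\mathcal{L}_{\ua}(q)) = \{y \in \cC_\Sigma \mid \exists x \in \mathcal{L}_{\ua}(q) \colon y^{-1}L = x^{-1}L\} = \{y \mid \exists x \in \mathcal{L}_{\ua}(q) \colon \pre_y(F) = \pre_x(F)\} .
\]
For the inclusion $\subseteq$, observe that $x \in \mathcal{L}_{\ua}(q)$ forces $q \in \pre_x(F)$ (take the witness $x \simp x$ in the definition of $\pre$), so under the hypothesis $\pre_y(F) = \pre_x(F)$ we obtain $q \in \pre_y(F)$, i.e.\ $y \in P_{\simp}(\mathcal{L}_{\ua}(q))$. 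For the converse inclusion, if $y \in P_{\simp}(\mathcal{L}_{\ua}(q))$ then some $z \in \mathcal{L}_{\ua}(q)$ satisfies $z \simp y$; since $\TA$ has no unreachable states and $L$ is path-closed, Lemma~\ref{lemma:simpImpliesQuotientEqual} gives $z^{-1}L = y^{-1}L$, hence $y \in P_{\rd_L}(z) \subseteq P_{\rd_L}(\mathcal{L}_{\ua}(q))$.

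For the backward direction, assume $P_{\rd_L}(\mathcal{L}_{\ua}(q)) = P_{\simp}(\mathcal{L}_{\ua}(q))$ for every $q \in Q$. By Lemma~\ref{blockBTAasIntersection},
\[
P_{\rdA}(x) = \bigcap_{q \in \pre_x(F)} P_{\simp}(\mathcal{L}_{\ua}(q)) \; \cap \; \bigcap_{q \notin \pre_x(F)} \bigl(P_{\simp}(\mathcal{L}_{\ua}(q))\bigr)^{\complement} .
\]
By the hypothesis each $P_{\simp}(\mathcal{L}_{\ua}(q))$ equals $P_{\rd_L}(\mathcal{L}_{\ua}(q))$, which is a union of blocks of $P_{\rd_L}$; its complement is then also a union of such blocks. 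Therefore every $P_{\rdA}(x)$ is a union of blocks of $P_{\rd_L}$, giving $\mathord{\rd_L} \subseteq \mathord{\rdA}$. Combined with the reverse inclusion $\mathord{\rdA} \subseteq \mathord{\rd_L}$ from Lemma~\ref{automataCongruences}\ref{lemma:automataCongruences:downAT} (which uses exactly the assumptions that $\TA$ has no unreachable states and $L$ is path-closed), we get $\mathord{\rdA} = \mathord{\rd_L}$, so the isomorphism $\cG{\mathsf{d}}(\TA) \equiv \cF{\mathsf{d}}(L)$ follows from Definition~\ref{def:FG}.

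The main obstacle is the $\supseteq$ inclusion in the forward direction: a naive attempt only shows that every element of $P_{\rd_L}(\mathcal{L}_{\ua}(q))$ lies in $P_{\simp}(\mathcal{L}_{\ua}(q))$, and to reverse this one must transport $\simp$-equivalence into $\rd_L$-equivalence. This step genuinely requires path-closedness and absence of unreachable states, and it is exactly here that Lemma~\ref{lemma:simpImpliesQuotientEqual} is invoked; without these hypotheses the theorem would fail, which also matches why the same assumptions appear in the statement.
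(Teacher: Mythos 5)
Your proposal is correct and follows essentially the same route as the paper's proof: the forward direction unfolds $P_{\rd_L}(\mathcal{L}_{\ua}(q))$ via $\mathord{\rd_L}=\mathord{\rdA}$ and closes the loop with Lemma~\ref{lemma:simpImpliesQuotientEqual}, and the backward direction uses Lemma~\ref{blockBTAasIntersection} together with Lemma~\ref{automataCongruences}\ref{lemma:automataCongruences:downAT} to show each block of $\rdA$ is a single block of $\rd_L$. Your closing remark correctly identifies where path-closedness and reachability are genuinely needed, matching the paper's use of these hypotheses.
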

\begin{proof}
First, we will show that if \(\cG{\mathsf{d}}(\TA) \equiv \cF{\mathsf{d}}(\TA)\) then \(P_{\rd_L}(\mathcal{L}_{\ua}(q)) = P_{\simp}(\mathcal{L}_{\ua}(q)), \forall q \in Q\).
\begin{align*}
P_{\rd_L}(\mathcal{L}_{\ua}(q)) & = \\
\{x \in \cC_{\Sigma} \mid \exists y \in \mathcal{L}_{\ua}(q) \colon  y^{-1}L = x^{-1}L\} & = \quad \text{[\(\mathord{\rd_L} = \mathord{\rdA}\)]} \\
\{x \in \cC_{\Sigma} \mid \exists y \in \mathcal{L}_{\ua}(q) \colon  \pre^{\TA}_y(F) = \pre^{\TA}_x(F)\} &\enspace .
\end{align*}
Note that \(\mathord{\rd_L} = \mathord{\rdA}\) by hypothesis, since \(\cG{\mathsf{d}}(\TA) \equiv \cF{\mathsf{d}}(L)\).
On the other hand, since \( y \in \mathcal{L}_{\ua}(q) \Ra q \in \pre^{\TA}_y(F)\), we have the following set inclusion:
\begin{align*}
\{x \in \cC_{\Sigma} \mid \exists y \in \mathcal{L}_{\ua}(q) \colon  \pre^{\TA}_y(F) = \pre^{\TA}_x(F)\} & \subseteq \\
\{x \in \cC_{\Sigma} \mid q \in \pre^{\TA}_x(F)\} & = \\
P_{\simp}(\mathcal{L}_{\ua}(q)) \enspace .
\end{align*}
Since \(L\) is path-closed and \(\cA\) has no unreachable states, by Lemma~\ref{lemma:simpImpliesQuotientEqual}, \(x \simp y \Ra x^{-1}L = y^{-1}L\), for every \(x,y \in \cC_{\Sigma}\).
Therefore, \(P_{\simp}(\mathcal{L}_{\ua}(q)) \subseteq P_{\rd_L}(\mathcal{L}_{\ua}(q))\), hence we conclude that \(P_{\rd_L}(\mathcal{L}_{\ua}(q)) = P_{\simp}(\mathcal{L}_{\ua}(q))\).

\medskip
Now, we will prove that if \(P_{\rd_L}(\mathcal{L}_{\ua}(q)) = P_{\simp}(\mathcal{L}_{\ua}(q))\), for each \(q \in Q\), then  \(\cG{\mathsf{d}}(\TA)\equiv \cF{\mathsf{d}}(L)\).  %
For every \(x \in \cC_\Sigma\):
\begin{align}
P_{\rd_{\TA}}(x) & =^{\dagger} \notag  \\
\bigcap_{\mathclap{q \in \pre^{\TA}_x(F)}} P_{\simp}(\mathcal{L}_{\ua}(q)) \; \cap \bigcap_{\mathclap{q \notin \pre^{\TA}_x(F)}}(P_{\simp}(\mathcal{L}_{\ua}(q)))^{\complement} & =^{\dagger\dagger} \notag \\
\bigcap\limits_{\substack{q \in \pre^{\TA}_x(F)}} P_{\rd_L}(\mathcal{L}_{\ua}(q)) \;\; \cap \bigcap\limits_{\substack{q \notin \pre^{\TA}_x(F)}} (P_{\rd_L}(\mathcal{L}_{\ua}(q)))^{\complement} \label{eq:union_blocks} &\enspace .
\end{align}
Note that equality \(\dagger\) holds by Lemma~\ref{blockBTAasIntersection} and \(\dagger\dagger\) holds since \(P_{\simp}(\mathcal{L}_{\ua}(q)) = P_{\rd_L}(\mathcal{L}_{\ua}(q))\) by hypothesis.

It follows from~\eqref{eq:union_blocks} that \(P_{\rd_{\TA}}(x)\) is a \emph{union} of blocks of \(P_{\rd_L}\).
In other words, \(x~\mathord{\rd_L}~y \Ra x~\mathord{\rdA}~y\), for every \(x,y \in \cC_{\Sigma}\).
By Lemma~\ref{automataCongruences}\ref{lemma:automataCongruences:downAT}, we have that \(x~\mathord{\rdA}~y \Ra x~\mathord{\rd_L}~y \).
Thus, \(P_{\rd_{\TA}}(x)\) necessarily corresponds to one single block of \(\mathord{\rd_L}\), namely, \(P_{\rd_L}(x)\).
Since \(P_{\rd_{\TA}}(x) = P_{\rd_L}(x)\), for each \(x \in \cC_\Sigma\), we conclude that \(\cG{\mathsf{d}}(\TA)\equiv \cF{\mathsf{d}}(L)\).
\end{proof}

\begin{corollary}
\label{coro:condiii}
Let \(\TA\) be a BTA with \(L = \lang{\TA}\).
Then \((\TA)^{\cdb} \equiv \cF{\mathsf{d}}(L)\) if{}f the set of contexts root-to-pivot equivalent to the upward language of every state in \(\TA\) is a union of upward languages of the minimal co-DBTA for \(L\).
\end{corollary}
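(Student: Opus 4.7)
The plan is to reduce the corollary to Theorem~\ref{cominimalifreverseatomic} by translating its state-level equality \(P_{\rd_L}(\mathcal{L}_{\ua}(q)) = P_{\simp}(\mathcal{L}_{\ua}(q))\) into the set-theoretic condition of the statement. The approach mirrors, step for step, the derivation of Corollary~\ref{coro:cond2} from Theorem~\ref{theorem:minimalifreverseatomic}; the only novelty is the wrapping by \(P_{\simp}(\cdot)\) on the right-hand side of the reference equality, which must be matched on the set-theoretic side.

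First, I would identify the upward languages of the states of the minimal co-DBTA \(\cF{\mathsf{d}}(L)\) with the \(\rd_L\)-blocks of non-empty downward quotient, that is, prove that for every state \(P_{\rd_L}(x)\) of \(\cF{\mathsf{d}}(L)\) one has \(\mathcal{L}^{\cF{\mathsf{d}}(L)}_{\ua}(P_{\rd_L}(x)) = P_{\rd_L}(x)\). Since \(\cF{\mathsf{d}}(L) = \cH^{\mathsf{d}}(\rd_L, L)\) is co-deterministic with unique final state \(P_{\rd_L}(\cX)\) (by Lemmas~\ref{NerodeStronglyDownward} and~\ref{downwardIsDet}), every context \(c\) with \(c^{-1}L \neq \varnothing\) admits a unique top-down run; the argument proceeds by induction on the \(\cX\)-height of \(c\), using the form of the transitions in Definition~\ref{def:down_TA_construction} to check that this run labels the pivot of \(c\) with exactly \(P_{\rd_L}(c)\). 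Consequently, any subset of contexts with non-empty downward quotient is a union of upward languages of states of \(\cF{\mathsf{d}}(L)\) if and only if it is a union of non-empty \(\rd_L\)-blocks.

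Next, I would check that \(P_{\simp}(\mathcal{L}_{\ua}(q))\) consists only of contexts with non-empty downward quotient, which follows from Lemma~\ref{lemma:simpImpliesQuotientEqual} together with the fact that \(\TA\) has no unreachable states, and then prove that \(P_{\rd_L}(\mathcal{L}_{\ua}(q)) = P_{\simp}(\mathcal{L}_{\ua}(q))\) holds if and only if \(P_{\simp}(\mathcal{L}_{\ua}(q))\) is a union of \(\rd_L\)-blocks. The forward direction is immediate since \(P_{\rd_L}(\cdot)\) is by definition a union of \(\rd_L\)-blocks; for the converse, Lemma~\ref{lemma:simpImpliesQuotientEqual} gives the inclusion \(P_{\simp}(\mathcal{L}_{\ua}(q)) \subseteq P_{\rd_L}(\mathcal{L}_{\ua}(q))\), and if \(P_{\simp}(\mathcal{L}_{\ua}(q))\) is a union of \(\rd_L\)-blocks containing \(\mathcal{L}_{\ua}(q)\), then it must also contain the whole saturation \(P_{\rd_L}(\mathcal{L}_{\ua}(q))\). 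Chaining these equivalences with the identification of the previous paragraph and Theorem~\ref{cominimalifreverseatomic} closes the argument.

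I expect the main obstacle to be the identification \(\mathcal{L}^{\cF{\mathsf{d}}(L)}_{\ua}(P_{\rd_L}(x)) = P_{\rd_L}(x)\). The inclusion \(\supseteq\) follows by noting that for \(c \rd_L x\) any \(t \in x^{-1}L = c^{-1}L\) yields a valid bottom-up run on \(\subs{c}{t}\) whose unique top-down reading, forced by co-determinism and the final state \(P_{\rd_L}(\cX)\), must assign \(P_{\rd_L}(x)\) to the pivot. The inclusion \(\subseteq\) relies on showing that this top-down labelling depends only on the root-to-pivot prefix of \(c\) together with the non-emptiness witnesses supplied by the sibling subtrees, hence only on the \(\rd_L\)-class of \(c\).
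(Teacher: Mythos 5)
Your proposal is correct and follows essentially the route the paper intends: the paper states this corollary without proof as the dual of Corollary~\ref{coro:cond2}, whose justification (identify the upward languages of the states of the minimal co-DBTA with the non-empty-quotient \(\rd_L\)-blocks, then translate the saturation condition of Theorem~\ref{cominimalifreverseatomic}) is exactly what you carry out. Your only additions are details the paper leaves implicit, namely the identification \(\mathcal{L}^{\cF{\mathsf{d}}(L)}_{\ua}(P_{\rd_L}(x)) = P_{\rd_L}(x)\) and the tacit hypotheses (no unreachable states, \(L\) path-closed) inherited from Theorem~\ref{cominimalifreverseatomic}, both of which you handle correctly.
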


It is worth pointing that the conditions of Corollary~\ref{coro:condiii} and Theorem~\ref{cominimalifreverseatomic} are decidable since isomorphism is decidable, \(\cG{\mathsf{d}}(\TA)\) and \((\TA)^{\cdb}\) are effectively computable and so is \(\cF{\mathsf{d}}(L)\) \cite[\S~2.11]{gecseg2015tree}.

\section{Related work and conclusions}
In this paper, we build on previous work on word automata~\cite{ganty2019congruence} and present a congruence-based perspective on the determinization and minimization operations for bottom-up (and top-down, see Appendix~\ref{appendix:TTA}) tree automata.
As a consequence, we obtain the first, to the best of our knowledge, proof of correctness of the double-reversal method for BTAs.
Bj{\"{o}}rklund and Cleophas, in their taxonomy of minimization algorithms for tree automata~\cite{Bjorklund2016Taxonomy}, proposed a double-reversal method for DBTAs.
They observed that the reverse operation is embedded within the notion of top-down and bottom-up determinization although they did not include a proof of correctness of the algorithm.

Courcelle et al.~\cite{courcelle1991geometrical} also studied the problem of determinizing and minimizing word and tree automata by offering a geometrical and general view on these operations.
Roughly speaking, our framework is an instantiation of theirs using a concrete decomposition of their binary relations (namely, deterministic and co-deterministic decompositions).
However, they focus on the so-called \emph{lr-determinism} of BTAs, which is a relaxed version of our notion of \emph{co-determinism} for BTAs that is defined in terms of the downward languages of the states of the BTA instead of being a purely syntactic notion, as our definition.
As a consequence, the class of languages that are \emph{lr-determinizable}, i.e., the so-called \emph{homogeneous languages}, includes the path-closed languages.
While theirs is a more general setting, our framework is constructive, in the sense that, it is defined upon congruences that allows us to extract automata constructions.

\begin{figure}[!ht]
\vspace*{-2mm}
\centering
\begin{minipage}[l]{0.46\textwidth}
\begin{tikzcd}[column sep=3.5em, row sep=normal]
{\BTA} \ar[d, "R",leftrightarrow]  \ar[start anchor=70, end anchor=110, rr, bend left=20, "\cF{\mathsf{u}}"] \ar[r, "\cG{\mathsf{d}}" ] & \cG{\mathsf{d}}(\BTA) \ar[d, "R",leftrightarrow] \ar[r, "\cG{\mathsf{u}}"] & \cG{\mathsf{u}}(\cG{\mathsf{d}}(\BTA)) \ar[d, "R",leftrightarrow]\\
{\TTA} \ar[r, "\cJ{\mathsf{d}}"] \ar[start anchor=290, end anchor=250, rr, bend right=20, "\cK{\mathsf{u}}"] & \cJ{\mathsf{d}}(\TTA) \ar[r, "\cJ{\mathsf{u}}"] & \cJ{\mathsf{u}}(\cJ{\mathsf{d}}(\TTA))
\end{tikzcd}
\end{minipage}\\
\medskip
\begin{minipage}[r]{0.46\textwidth}
\begin{tikzcd}[column sep=3.5em, row sep=normal]
{\TTA} \ar[d, "R",leftrightarrow]  \ar[start anchor=70, end anchor=110, rr, bend left=20, "\cK{\mathsf{d}}"] \ar[r, "\cJ{\mathsf{u}}" ] & \cJ{\mathsf{u}}(\TTA) \ar[d, "R",leftrightarrow] \ar[r, "\cJ{\mathsf{d}}"] & \cJ{\mathsf{d}}(\cJ{\mathsf{u}}(\TTA)) \ar[d, "R",leftrightarrow]\\
{\BTA} \ar[r, "\cG{\mathsf{u}}"] \ar[start anchor=290, end anchor=250, rr, bend right=20, "\cF{\mathsf{d}}"] & \cG{\mathsf{u}}(\BTA) \ar[r, "\cG{\mathsf{d}}"] & \cG{\mathsf{d}}(\cG{\mathsf{u}}(\BTA))
\end{tikzcd}
\end{minipage}\vspace*{-2mm}
\caption{Relations between the automata constructions \(\cG{\mathsf{d}}, \cG{\mathsf{u}}\), \(\cJ{\mathsf{d}},\cJ{\mathsf{u}},\cF{\mathsf{d}},\cF{\mathsf{u}}, \cK{\mathsf{d}}\) and \(\cK{\mathsf{u}}\).
Note that constructions \(\cF{\mathsf{d}}, \cF{\mathsf{u}}, \cK{\mathsf{d}}\) and \(\cK{\mathsf{u}}\) are applied  to the language defined by the automaton in the origin of the labeled arrow, while the others are applied directly to the automaton.
The upper arcs of the diagrams follow from Theorem~\ref{theoremF}\ref{theoremF:DoubleReversalDBTA} and Theorem~\ref{corolK}\ref{corolK:DoubleReversal}.
The squares and the bottom arc follow from Corollary~\ref{corollary:HHR}, since \(\TTA = (\BTA)^R\).
Incidentally, the diagram shows a new relation which follows from the definition of \(\cH^{\mathsf{uR}}, \cH^{\mathsf{dR}}\) and the fact that \(\TTA = (\BTA)^R\): \(\cJ{\mathsf{d}}(\cJ{\mathsf{d}}(\TTA)) \equiv \cK{\mathsf{u}}(\TTA)\), the minimal co-deterministic TTA for \(\lang{\TTA}\).}
\label{Figure:diagramAutomata}\vspace*{-2mm}
\end{figure}

We also give a generalization of the double-reversal method in the same lines as the generalization of Brzozowski and Tamm~\cite{brzozowski2014theory} for the case of word automata, which further evidences the connection between congruences and determinization of automata.
Note that the double-reversal method only applies to path-closed languages since it requires a co-determinization step, which is only possible for that class of languages.
However, the generalized double-reversal method drops this restriction since, for every regular language, its minimal DBTA is already an automaton satisfying the condition of the generalized double-reversal method.

\medskip
Figure~\ref{Figure:diagramAutomata} summarizes the relations between these tree automata constructions.
Note that it includes the counterpart TTA congruence-based constructions (\(\cJ{\mathsf{d}}\), \(\cJ{\mathsf{u}}\), \(\cK{\mathsf{d}}\)  and \(\cK{\mathsf{u}}\)) whose definition we deferred to the Appendix.

\medskip
As a final note, we show how previous results on word automata follow from our results when we only consider monadic trees.
First, note that, in the monadic case:
\begin{enumerate}
\renewcommand\labelenumi{\theenumi}
\renewcommand{\theenumi}{(\roman{enumi})}
\itemsep=0.9pt
\item every tree language is path-closed;
\item every downward congruence is strongly downward w.r.t. a given tree language, and
\item the notion of root-to-pivot equivalence between contexts collapes to the standard notion of equality.
\end{enumerate}
A consequence of the latter is that our definition of \(\pre(\cdot)\) for tree automata coincides with the standard one for word automata.
As a result of these observations, in the monadic case, Corollary~\ref{thm:Brzozowski-method} (Brzozowski’s double-reversal method for BTAs) collapses to the counterpart result for word automata~\cite{brzozowski1962canonical}.
Finally, Theorem~\ref{theorem:minimalifreverseatomic} (generalization of the double-reversal method) also collapses to the counterpart generalization for word automata given by Brzozowski and Tamm~\cite{brzozowski2014theory}, later addressed in~\cite{ganty2019congruence}.\vspace*{-1mm}

\subsection{Future work}
We have not considered yet how to build an automaton that satisfies the condition of Theorem~\ref{theorem:minimalifreverseatomic}.
In particular, it is worth considering whether lr-deterministic automata satisfy this condition.

On the other hand, Ganty et al.~\cite{ganty2020residual} showed that quasiorders, i.e., reflexive and transitive binary relations, are related to residual word automata in the same way congruences are related to deterministic word automata.
We believe that relaxing the equivalences presented in this work to obtain quasiorders would allow us to offer a new perspective on residual tree automata, defined by Carme et al.~\cite{carme2003residual}.\vspace*{-1mm}

\subsection*{Acknowledgments}
The authors thank the reviewers for their insightful comments. Thank for the supporting projects through grants and scholarships:

Pierre Ganty was partially supported by the Madrid regional project S2018 / TCS-4339 BLOQUES, the
Spanish project PGC2018-102210-B-I00 BOSCO and the Ramón y Cajal Fellowship RYC-2016-20281.

Elena Gutierrez was partially supported by a grant BES-2016-077136 from the Spanish Ministry of Economy, Industry and Competitiveness.

\appendix
\section{Appendix}

\subsection{Top-down tree automata as reverse of bottom-up tree automata}
\label{appendix:TTA}
In the main part of the document, we focus   on bottom-up tree automata.
This decision is motivated by the fact that deterministic BTAs are strictly more expressive than their top-down counterparts.
On the other hand, TTAs can be interpreted as the reverse of BTAs, and thus we can easily extend the results obtained for bottom-up tree automata to their top-down counterparts.
This will be the goal of this section.

Our ultimate purpose is to give a complete perspective on Brzozowski's method to minimize BTAs as a technique that combines a reverse operation followed by determinization operation twice (see Figure~\ref{Figure:diagramAutomata}).
As a product, we obtain a counterpart method for the minimization of TTAs.
In the following, we use \(\BTA\) to denote a BTA and \(\TTA\) for a TTA.

\subsubsection{Top-down tree automata}
\begin{definition}[Top-down tree automaton]
A \emph{top-down tree automaton} (TTA for short) is a tuple \(\TTA = \tuple{Q, \Sigma, \delta, I}\) where \(Q\) is a finite set of \emph{states}; \(\Sigma\) is a ranked alphabet of rank \(n\); \(\delta: Q \to \wp(\bigcup_{i=0}^n \Sigma_i \times Q^i)\) is the \emph{transition function} and \(I \subseteq Q\) is the set of \emph{initial states}.
Given a TTA \(\TTA\), we define its set of \emph{final states} as \(\finals(\TTA) \udiffg  \{q \in Q \mid \exists a \in \Sigma_0 \colon a[\,] \in \delta(q)\}\).
\end{definition}

A TTA is \emph{deterministic} (DTTA for short) if{}f \(I\) is a singleton and for every state \(q \in Q\) and symbol \(f \in \Sigma_n\), with \(n \geq 1\), we have: %
if \(f[q_1,\ldots,q_{\rank{f}}]\in\delta(q)\) and \(f[q'_1,\ldots,q'_{\rank{f}}]\in\delta(q)\) then \(q_i=q'_i\) for each \(i=\range{\rank{f}}\).
Similarly, a TTA is \emph{co-deterministic} (co-DTTA for short) if{}f every set of states in the image of \(\delta^{-1}\) is a singleton or is empty.

We define the \emph{move} relation on a TTA, denoted by \(\xrightarrow{}_{\TTA} \in  \cT_{\Sigma \cup Q} \times \cT_{\Sigma \cup Q}\), as follows.
Let \(t = \subs{x}{q[t_1,\ldots,t_{\rank{f}}]}\) and \(t' = \subs{x}{f[t_1,\ldots,t_{\rank{f}}]}\) for some \(x \in \cC_{\Sigma\cup Q}, q \in Q, f \in \Sigma\) and \(t_1,\ldots,t_{\rank{f}} \in \cT_{Q}\).
Then \(t \ra_{\TTA} t'  \udiff f[t_1(\varepsilon),\ldots,t_{\rank{f}}(\varepsilon)]\in \delta(q)\).
We use \(\to^*_{\TTA}\) to denote the transitive closure of \(\xrightarrow{}_{\TTA}\).
The \emph{language} defined by \(\TTA\) is \(\lang{\TTA} \udiffg  \{t \in \cT_{\Sigma} \mid \exists t' \in \cT_{Q} \colon  t'(\varepsilon) \in I, t' \to^*_{\TTA} t\}\).
The language definition might seem counterintuitive because it starts with a tree \(t'\) in \(\cT_Q\) and ends up with a tree \(t\) in \(\cT_\Sigma\) which is the one ``recognized'' by the TTA.
Intuitively, a run of a TTA accepting a tree \(t\in\cT_\Sigma\) first ``guesses'' for each node \(v \in \dom(t)\) a state labelling \(v\) and then tries to reinstate the original labels in a top down fashion using the move relation.
We made this unusual choice to maintain coherence with the move relation for BTA.
In the example below we depicted such a sequence of moves from right to left because we think it is easier for the reader to parse the sequence of moves backwards.

\begin{example}\label{example:TTA}
Let \(\TTA = \tuple{Q,\Sigma_0 \cup \Sigma_2,\delta,I}\) be a TTA with \(Q = \{q_0,q_1\}\), \(\Sigma_0 =\{\true,\false\}\), \(\Sigma_2 = \{\AND,\OR\}\), \(I = \{q_1\}\), and
\begin{align*}
  \delta(q_0) &= \{\AND[q_0,q_1], \AND[q_0,q_0], \AND[q_1,q_0], \OR[q_0,q_0], \false[\,]\},\\
  \delta(q_1) &=  \{\AND[q_1,q_1], \OR[q_1,q_0],  \OR[q_0,q_1], \OR[q_1,q_1], \true[\,]\} \enspace .
\end{align*}
As in Example~\ref{example:BTA}, \(\lang{\TTA}\) is defined as the set of all trees of the form \(t \in \cT_{\Sigma_0 \cup \Sigma_2}\) which yield to propositional formulas, over the binary connectives \(\land\) and \(\lor\) and the constants \(\true\) and \(\false\), that evaluate to \(\true\).
For instance, the following is a sequence of moves accepting a tree \(t \in \cT_{\Sigma_0 \cup \Sigma_2}\) (listed last below) such that \(t\in \lang{\TA}\).

\noindent
\centering
\tree{15}{30}{
\Tree
[.{\(q_1\)}
  [.{\(q_1\)} {\(q_1\)} {\(q_0\)} ]
  [.{\(q_1\)} ]
]
}
\tmoveT{15}%
\tree{15}{30}{
\Tree
[.{\OR}
  [.{\(q_1\)} {\(q_1\)} {\(q_0\)} ]
  [.{\(q_1\)} ]
]
}
\tmoveT{15}%
\tree{15}{30}{
\Tree
[.{\OR}
  [.{\OR} {\(q_1\)} {\(q_0\)} ]
  [.{\(q_1\)} ]
]
}%
\tmoveT{15}%
\tikz{\node at (0,0) {};
\node at (0,15pt) {\(\ldots\)};}%
\tmoveT{15}%
\tree{15}{30}{
\Tree
[.{\OR}
  [.{\OR} {\(\true\)} {\(\false\)} ]
  [.{\(\true\)} ]
]
}
\tdot{15}

Note that, \(\subs{\OR[\cX,q_1]}{q_1[q_1,q_0]} \ra_{\TTA} \subs{\OR[\cX,q_1]}{\OR[q_1,q_0]}\) as \(\OR[q_1,q_0] \in \delta(q_1)\).
Observe that \(\TTA\) is co-deterministic, but not deterministic since \(\land[q_0, q_1], \land[q_1,q_0] \in \delta(q_0)\).
 \eox
\end{example}

For each \(q \in Q\) and \(S \subseteq Q\), define the \emph{upward} and \emph{downward language} of \(q\), respectively, as follows:
\begin{align*}
  \mathcal{L}^{\TTA}_{\ua}(q, S) &\udiffg  \{ c \in \cC_{\Sigma} \mid \exists t \in \cT_Q \colon t \to^*_{\TTA} \subs{c}{q}, t(\varepsilon) \in S \}\\
  \mathcal{L}^{\TTA}_{\da}(q, S) &\udiffg  \{ t \in \cT_{\Sigma} \mid \exists t' \in \cT_Q \colon  t' \to^*_{\TTA} t, t'(\varepsilon) = q, \ell(t') \subseteq S\}\enspace.
\end{align*}
We will simplify the notation and write \( \mathcal{L}^{\TTA}_{\ua}(q)\) when \(S = I\) and \( \mathcal{L}^{\TTA}_{\da}(q)\) when \(S = \finals(\TTA)\).
Also, we will drop the superscript \(\TTA\) when the TTA \(\TTA\) is clear from the context.

A state \(q \in Q\) of a TTA is \emph{unreachable} (resp.\ \emph{empty}) if{}f its upward (resp.\ downward) language is empty.
It is straightforward to check that for every TTA \(\TTA\) we have that \(\lang{\TTA} = \bigcup_{q\in I}\mathcal{L}_{\da}(q)\).

Given a TTA \(\TTA = \tuple{Q, \Sigma, \delta, I}\) without empty states, the \emph{top-down determinization}~\cite{cleophas2008tree} builds the DTTA \(\tuple{\wp(Q), \Sigma, \delta', \{I\}}\) where the transition function \(\delta'\) is defined as follows.
For each \(f \in \Sigma \setminus \Sigma_0\) and \(R \in \wp(Q)\), we have that \(f[R_1,\ldots,R_{\rank{f}}] \in \delta'(R) \), where
\(R_i \udiffg  \{q_i \!\in\! Q \mid \exists q \!\in\! R, q_1,\ldots, q_{i{-}1},q_{i{+}1},\ldots,q_{\rank{f}} \!\in\! Q \colon  f[q_1,\ldots,q_i, \ldots, q_{\rank{f}}] \in \delta(q)\}\).
On the other hand, for every \(a \in \Sigma_0\) and \(R \in \wp(Q)\)  such that \(\exists q\in R \colon  a \in \delta(q)\), we have that \(a[\,] \in \delta'(R)\).
We abuse notation (w.r.t. the notation introduced for BTAs) and denote by \((\TTA)^{\db}\) the result of applying the top-down determinization to \(\TTA\).

As shown by Cleophas~\cite{cleophas2008tree}, the automaton  \((\TTA)^{\db}\) is top-down deterministic and, whenever \(\lang{\TTA}\) is path-closed, \(\lang{\TTA} = \lang{(\TTA)^{\db}}\).

\subsubsection{Reverse tree automata constructions}
\label{sec:appendix-reverseTAs}
We define the \emph{reverse TTA} of a BTA \(\BTA = \tuple{Q,\Sigma,\delta,F}\) as \((\BTA)^R \udiffg  \tuple{Q,\Sigma,\delta_r,I}\), where $f[q_1,\ldots$, $q_{\rank{f}}] \in \delta_r(q) \udiff q \in \delta(f[q_1,\ldots,q_{\rank{f}}])\) and \(I \udiffg  F$.
Analogously, we define the \emph{reverse BTA} of a TTA \(\TTA = \tuple{Q,\Sigma,\delta,I}\) as \((\TTA)^R \udiffg  \tuple{Q,\Sigma,\delta_r,F}\), where \(q \in \delta_r(f[q_1,\ldots,q_{\rank{f}}]) \udiff f[q_1,\ldots,q_{\rank{f}}] \in \delta(q)\) and \(F \udiffg  I\).
Observe that the BTA and TTA shown in Examples~\ref{example:BTA} and~\ref{example:TTA} are the reverse of each other.

Let \(\BTA\) be a BTA and let \(\TTA\) be its reverse, i.e., \(\TTA \udiffg  (\BTA)^R\).
It is easy to check that \(\lang{\BTA} = \lang{\TTA}\) and \(\BTA\) is a co-DBTA (resp.\ DBTA) if{}f \(\TTA\) is a DTTA (resp.\ co-DTTA).
Moreover, we have that \((\BTA)^{\mathsf{cD}} \equiv (\TTA)^{\mathsf{D}}\) and a state of \(\BTA\) is unreachable (resp.\ empty) if{}f it is empty (resp.\ unreachable) in \(\TTA\).

Note that we can define the TTA-equivalent of the constructions  \(\cH^{\mathsf{u}}\) and \(\cH^{\mathsf{d}}\) simply by reversing their transition functions and setting their final states as the initial ones.
\begin{definition}%
Let \(L \subseteq \cT_\Sigma\), and \(\ru\) and \(\rd\) be an upward and downward congruence, respectively.
Define:
\begin{align*}
\cH^{\mathsf{uR}} \udiffg \cH^{\mathsf{u}}(\ru, L))^R \quad\quad\quad \cH^{\mathsf{dR}} \udiffg  (\cH^{\mathsf{d}}(\rd, L))^R
\end{align*}
\end{definition}
Clearly, \(\cH^{\mathsf{uR}}\) yields a co-DTTA if{}f \(\cH^{\mathsf{u}}\) yields a DBTA, and \(\cH^{\mathsf{dR}}\) yields a DTTA if{}f \(\cH^{\mathsf{d}}\) yields a co-DBTA.
Thus, the following result is a consequence of Lemma~\ref{HBPreservesL}, and the one after is a consequence of Lemmas~\ref{cor:HTPreservesL} and~\ref{downwardIsDet}.
\begin{corollary}
\label{cor:HTPreservesL}
Let \(L \subseteq \cT_{\Sigma}\) be a tree language and let \(\ru\) be an upward congruence such that \(t\ru r \Ra L\,t^{-1} = L\,r^{-1}\) for every \(t,r \in \cT_\Sigma \).
Then,  \(\cH^{\mathsf{uR}}(\ru,L)\) is a co-DTTA with \(\lang{\cH^{\mathsf{uR}}(\ru,L)} = L\).
\end{corollary}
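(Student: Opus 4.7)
The plan is to derive this corollary directly from the corresponding BTA result (Lemma~\ref{HBPreservesL}) together with the two structural facts about reversal recorded in Section~\ref{sec:appendix-reverseTAs}. Specifically, from the definition $\cH^{\mathsf{uR}}(\ru,L) \udiffg (\cH^{\mathsf{u}}(\ru,L))^R$, everything reduces to transferring properties across the reverse operation $(\cdot)^R$.

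First I would invoke Remark~\ref{remark:DBTA}, which tells us that $\cH^{\mathsf{u}}(\ru,L)$ is a well-defined DBTA under the sole assumption that $\ru$ has finite index and is an upward congruence (both of which hold by hypothesis). Next I would apply the observation stated just before the corollary: a BTA $\BTA$ is a DBTA if and only if $(\BTA)^R$ is a co-DTTA. Applied to $\BTA = \cH^{\mathsf{u}}(\ru,L)$, this immediately yields that $\cH^{\mathsf{uR}}(\ru,L)$ is a co-DTTA.

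It remains to show the language equality. By Lemma~\ref{HBPreservesL}, which is applicable precisely because $\ru$ is an upward congruence of finite index satisfying $t\ru r \Ra L\,t^{-1} = L\,r^{-1}$, we have $\lang{\cH^{\mathsf{u}}(\ru,L)} = L$. The other observation from Section~\ref{sec:appendix-reverseTAs} states that $\lang{\BTA} = \lang{(\BTA)^R}$ for every BTA $\BTA$; applied to $\cH^{\mathsf{u}}(\ru,L)$ this gives $\lang{\cH^{\mathsf{uR}}(\ru,L)} = \lang{\cH^{\mathsf{u}}(\ru,L)} = L$, completing the argument.

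There is essentially no obstacle: the proof is a two-line corollary of Lemma~\ref{HBPreservesL} once the two general facts about reversal (preservation of language, and swapping of determinism flavors between BTA and TTA) are in hand. The only subtlety worth flagging explicitly is to confirm that the hypothesis of Lemma~\ref{HBPreservesL} is exactly the hypothesis of the corollary, so no extra work on the congruence side is required; and that the reverse operation sends DBTAs to co-DTTAs rather than to DTTAs, which is why the resulting automaton is a co-DTTA (as claimed) rather than a DTTA.
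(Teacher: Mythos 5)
Your proposal is correct and follows exactly the route the paper intends: the paper derives this corollary from Lemma~\ref{HBPreservesL} together with the observations in Section~\ref{sec:appendix-reverseTAs} that reversal preserves the language and sends DBTAs to co-DTTAs, which is precisely your argument (with Remark~\ref{remark:DBTA} supplying determinism of \(\cH^{\mathsf{u}}(\ru,L)\)). No gaps.
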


\begin{corollary}
\label{cor:downward_is_det}
Let \(L \subseteq \cT_{\Sigma}\) be a path-closed language and let \(\rd\) be a strongly downward congruence w.r.t. \(L\) such that \(x \rd y \Ra x^{-1}L = y^{-1}L\) for every \(x,y \in \cC_{\Sigma}\).
Then, \(\cH^{\mathsf{dR}}(\rd,L)\) is a DTTA with \(\lang{\cH^{\mathsf{dR}}(\rd,L)} = L\).
\end{corollary}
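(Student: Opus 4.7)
The plan is to deduce this corollary directly from the two lemmas it cites, combined with the elementary properties of the reverse operation on tree automata established earlier in Appendix~\ref{appendix:TTA}. By definition, $\cH^{\mathsf{dR}}(\rd, L) \udiffg (\cH^{\mathsf{d}}(\rd, L))^R$, so the result splits cleanly into two independent claims about $\cH^{\mathsf{d}}(\rd, L)$ that are transferred to $(\cH^{\mathsf{d}}(\rd, L))^R$ via reversal.

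First, I would handle the language-preservation claim. Lemma~\ref{HTPreservesL}, applied under the hypothesis that $\rd$ is a downward congruence satisfying $x \rd y \Ra x^{-1}L = y^{-1}L$, gives the inclusion $\lang{\cH^{\mathsf{d}}(\rd, L)} \supseteq L$ unconditionally; then, since $L$ is assumed path-closed, the same lemma yields the reverse inclusion, so $\lang{\cH^{\mathsf{d}}(\rd, L)} = L$. Using the elementary fact recorded in Appendix~\ref{appendix:TTA} that reversing a BTA preserves its language (i.e., $\lang{\TA} = \lang{\TA^R}$), I conclude $\lang{\cH^{\mathsf{dR}}(\rd, L)} = L$.

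Second, I would handle the determinism claim. Lemma~\ref{downwardIsDet} guarantees that, under the stronger hypothesis that $\rd$ is \emph{strongly} downward w.r.t.\ $L$, the automaton $\cH^{\mathsf{d}}(\rd, L)$ is a co-DBTA. Appendix~\ref{appendix:TTA} observes that a BTA is a co-DBTA if and only if its reverse is a DTTA. Hence $(\cH^{\mathsf{d}}(\rd, L))^R = \cH^{\mathsf{dR}}(\rd, L)$ is a DTTA.

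There is essentially no obstacle here; the work has already been done in Lemmas~\ref{HTPreservesL} and~\ref{downwardIsDet}, and all that remains is to observe that the two required properties (accepting $L$ and being deterministic top-down) are exactly the images under reversal of the two properties proved there (accepting $L$ and being co-deterministic bottom-up). The only thing to double-check is that the hypotheses of both lemmas are indeed assumed in the statement of the corollary: path-closedness of $L$ is needed for Lemma~\ref{HTPreservesL} to give the ``$\subseteq$'' direction, and strong-downwardness of $\rd$ w.r.t.\ $L$ is needed for Lemma~\ref{downwardIsDet} — both are present.
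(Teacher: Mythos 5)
Your proposal is correct and follows exactly the paper's route: the paper dispatches this corollary in one sentence by noting that $\cH^{\mathsf{dR}}(\rd,L)=(\cH^{\mathsf{d}}(\rd,L))^R$, that reversal preserves the language and turns a co-DBTA into a DTTA, and then invoking Lemma~\ref{HTPreservesL} (with path-closedness for the $\subseteq$ inclusion) and Lemma~\ref{downwardIsDet}. Your hypothesis check at the end — that strong downwardness is in particular downwardness, so both lemmas apply — is the only point worth verifying, and you handled it.
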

Next, we define congruences based on the states of a given TTA.
These TTA-based congruences are finer than (or equal to) the corresponding language-based ones and are thus said to \emph{approximate} the language-based congruences.
To that end, we first define the \(\post(\cdot)\) and \(\pre(\cdot)\) operators for TTAs.

\begin{definition}
\label{def:postpre-TTA}
Let \(\TTA = \tuple{Q,\Sigma,\delta,I}\) be a TTA and let \(t \in \cT_\Sigma\), \(x \in \cC_\Sigma\) and \(S \subseteq Q\).
Define:
\begin{align*}
\post^{\TTA}_x(S) & \udiffg  \{q \in Q \mid x \in P_{\simp^S}(\mathcal{L}^{\TTA}_{\ua}(q,S))\} \\
\pre^{\TTA}_t(S) & \udiffg  \{q \in Q \mid t \in \mathcal{L}^{\TTA}_{\da}(q, S)\} \enspace .
\end{align*}
\end{definition}
Note that we will omit the superscript \(\TTA\) when it is clear from the context.
\begin{definition}\label{def:TTA-Equivalences}
Let \({\TTA} = \tuple{Q,\Sigma,\delta,I}\) be a TTA and let \(t, r \in \cT_{\Sigma}\) and \(x, y \in \cC_{\Sigma}\).
The \emph{upward and downward TTA-based equivalences} are respectively given by:
\begin{align*}
  t \ruAT r &\udiff \pre_{t}(\finals({\TTA})) = \pre_{r}(\finals({\TTA}))\\
  x \rdAT y &\udiff \post_{x}(I) = \post_{y}(I)\enspace.
\end{align*}
\end{definition}

\begin{lemma}
\label{congruencesCoincideTwo}
Let \({\BTA}\) be a BTA and \(\TTA \udiffg  (\BTA)^R\).
Then, the following hold:
\begin{enumerate}
\renewcommand\labelenumi{\theenumi}
\renewcommand{\theenumi}{(\alph{enumi})}
\item \(\mathord{\ru_{\lang{\BTA}}} = \mathord{\ru_{\lang{\TTA}}}\) and \(\mathord{\rd_{\lang{\BTA}}} = \mathord{\rd_{\lang{\TTA}}}\).
\item \(\mathord{\ruAB} = \mathord{\ru_{\TTA}}\) and \(\mathord{\rdAB} = \mathord{\rd_{\TTA}}\)\label{lemma:congruencesCoincide-b}.
\end{enumerate}
\end{lemma}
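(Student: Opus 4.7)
The plan relies on three basic correspondences between $\BTA$ and $\TTA = (\BTA)^R$, which I would verify first as short observations directly from the definition of the reverse: (i) $\lang{\BTA} = \lang{\TTA}$; (ii) $\initials(\BTA) = \finals(\TTA)$ and the set of final states of $\BTA$ coincides with the set of initial states of $\TTA$; (iii) the move relations are mutual reverses, i.e., $t \ra_{\BTA} t'$ if{}f $t' \ra_{\TTA} t$, and hence $t \ra^*_{\BTA} t'$ if{}f $t' \ra^*_{\TTA} t$.

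Part (a) is then immediate: since $\ru_L$ and $\rd_L$ depend only on $L$ through the upward and downward quotients of $L$, the identity $\lang{\BTA} = \lang{\TTA}$ from (i) gives both equalities at once.

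For part (b), I would first derive two ``cross'' identities on state languages, for every $q \in Q$ and $S \subseteq Q$:
\[
\mathcal{L}^{\BTA}_{\da}(q, S) = \mathcal{L}^{\TTA}_{\da}(q, S), \qquad \mathcal{L}^{\BTA}_{\ua}(q, S) = \mathcal{L}^{\TTA}_{\ua}(q, S).
\]
Both reduce to unfolding the definitions of the upward/downward languages of states in a BTA (Section~\ref{sec:bta}) and in a TTA, and then applying (iii); the only difference between the BTA and the TTA version is the direction in which the rewriting chain is read. Combining these identities with (ii), one obtains $\post^{\BTA}_t(\initials(\BTA)) = \pre^{\TTA}_t(\finals(\TTA))$ for every $t \in \cT_\Sigma$, which yields $\mathord{\ruAB} = \mathord{\ruAT}$ directly from Definitions~\ref{def:automataEquivalences} and~\ref{def:TTA-Equivalences}.

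The only delicate point lies in the downward half, and this is the step I expect to require the most care. Here $\pre^{\BTA}_x(F)$ is defined through the root-to-pivot equivalence $\simp^F$ computed in $\BTA$, while $\post^{\TTA}_x(I)$ is defined through $\simp^I$ computed in $\TTA$. So I would verify that these two root-to-pivot equivalences coincide as relations on $\cC_\Sigma$: condition (i) of Definition~\ref{def:root-pivot} is purely syntactic (it constrains the path from root to pivot) and so is automaton-independent, while condition (ii) is an existential statement about membership in upward languages of states, which by the second cross identity above produces the same set regardless of whether it is evaluated in $\BTA$ or in $\TTA$. This gives $\pre^{\BTA}_x(F) = \post^{\TTA}_x(I)$ and hence $\mathord{\rdAB} = \mathord{\rdAT}$. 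Once the cross identity on upward languages is in place, this last step is routine bookkeeping.
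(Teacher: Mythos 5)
Your proposal is correct and follows essentially the same route as the paper: part (a) from $\lang{\BTA}=\lang{\TTA}$, and part (b) from the identities $\post^{\BTA}_t(S)=\pre^{\TTA}_t(S)$ and $\pre^{\BTA}_x(S)=\post^{\TTA}_x(S)$, which the paper asserts directly from Definitions~\ref{def:postpre} and~\ref{def:postpre-TTA}. You merely spell out what the paper leaves implicit, namely the cross identities on state languages obtained from reversing the move relation and the fact that the root-to-pivot equivalences computed in $\BTA$ and in $\TTA$ coincide.
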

\begin{proof}
Let \({\BTA} = \tuple{Q,\Sigma,\delta,F}\) be a BTA with \(\TTA = (\BTA)^R\).
\begin{enumerate}
\renewcommand\labelenumi{\theenumi}
\renewcommand{\theenumi}{(\alph{enumi})}
\item \(\mathord{\ru_{\lang{\BTA}}} = \mathord{\ru_{\lang{\TTA}}}\) and \(\mathord{\rd_{\lang{\BTA}}} = \mathord{\rd_{\lang{\TTA}}}\).

Trivial, since \(\lang{\BTA} = \lang{\TTA}\).

\item \(\mathord{\ruAB} = \mathord{\ru_{\TTA}}\) and \(\mathord{\rdAB} = \mathord{\rd_{\TTA}}\).

Relying on Definitions~\ref{def:postpre} and~\ref{def:postpre-TTA}, it is easy to see that, for every \(t \in \cT_{\Sigma}, x\in \cC_{\Sigma}\) and \(S \subseteq Q \colon \post_{t}^{{\BTA}}(S)  = \pre_{t}^{\TTA}(S)\) and \(\post_{x}^{\TTA}(S)  = \pre_{x}^{{\BTA}}(S)\).
Thus, by Definitions~\ref{def:automataEquivalences} and~\ref{def:TTA-Equivalences}, \(\mathord{\ruAB} = \mathord{\ru_{\TTA}}\) and \(\mathord{\rdAB} = \mathord{\rd_{\TTA}}\).
\end{enumerate}

\vspace*{-6mm}
\end{proof}

As a consequence of Lemma~\ref{automataCongruences} and~\ref{congruencesCoincideTwo}\ref{lemma:congruencesCoincide-b}, we have the following corollary.

\begin{corollary}\label{corol:upwardDownward}
Let \({\TTA}\) be a TTA with \(L = \lang{{\TTA}}\).
Then,
\begin{enumerate}
\renewcommand\labelenumi{\theenumi}
\renewcommand{\theenumi}{(\alph{enumi})}
\item \(\ruAT\) is an upward congruence and  \(\mathord{\ruAT} \subseteq \mathord{\ru_L}\).\label{corol:automataCongruences:upAT}
\item If \(\TA\) has no empty states and \(L\) is path-closed then \(\rdAT\) is a strongly downward congruence w.r.t. \(L\) and \( \mathord{\rdAT} \subseteq \mathord{\rd_L} \). \label{corol:automataCongruences:downAT}
\end{enumerate}
\end{corollary}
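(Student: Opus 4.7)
The plan is to transport the lemma on BTA-based congruences (Lemma~\ref{automataCongruences}) across the reversal correspondence, using Lemma~\ref{congruencesCoincideTwo} to identify the TTA-based congruences with their BTA counterparts. Concretely, I would set \(\BTA \udiffg (\TTA)^R\), so that \(\BTA\) is a BTA with \(\lang{\BTA} = \lang{\TTA} = L\). Recall from the preceding discussion that reversal swaps ``empty'' and ``unreachable'' states: a state is empty in \(\TTA\) iff it is unreachable in \(\BTA\), and \(\BTA\) is a co-DBTA iff \(\TTA\) is a DTTA, etc.

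For part~(a), I would chain three facts. First, by Lemma~\ref{automataCongruences}\ref{lemma:automataCongruences:upAB} applied to \(\BTA\), the relation \(\ruAB\) is an upward congruence on \(\cT_\Sigma\) with \(\mathord{\ruAB} \subseteq \mathord{\ru_{\lang{\BTA}}}\). Second, by Lemma~\ref{congruencesCoincideTwo}\ref{lemma:congruencesCoincide-b}, the BTA-based upward congruence for \(\BTA\) coincides with the TTA-based upward congruence for \(\TTA\), i.e.\ \(\mathord{\ruAB} = \mathord{\ru_{\TTA}} = \mathord{\ruAT}\). Third, since \(\lang{\BTA} = \lang{\TTA} = L\), the language-based congruences agree as well: \(\mathord{\ru_{\lang{\BTA}}} = \mathord{\ru_L}\). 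Combining these three identifications yields both that \(\ruAT\) is an upward congruence (congruence is a property of the relation, which is identical under these identifications) and the inclusion \(\mathord{\ruAT} \subseteq \mathord{\ru_L}\).

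For part~(b), I would argue analogously but first check that the hypotheses of Lemma~\ref{automataCongruences}\ref{lemma:automataCongruences:downAT} are met by \(\BTA\). Since \(\TTA\) has no empty states, by the reversal correspondence \(\BTA\) has no unreachable states; and \(\lang{\BTA} = L\) is path-closed by assumption. Thus Lemma~\ref{automataCongruences}\ref{lemma:automataCongruences:downAT} delivers that \(\rdAB\) is a strongly downward congruence w.r.t.\ \(L\) with \(\mathord{\rdAB} \subseteq \mathord{\rd_L}\). Then Lemma~\ref{congruencesCoincideTwo}\ref{lemma:congruencesCoincide-b} gives \(\mathord{\rdAB} = \mathord{\rd_{\TTA}} = \mathord{\rdAT}\), and the definition of strongly downward congruence (Definition~\ref{def:strong_downward_congruence}) depends only on the relation and \(L\), so the property transfers verbatim.

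The only delicate point, and hence the main thing to verify carefully rather than wave at, is that the hypothesis ``no empty states'' on \(\TTA\) really does become ``no unreachable states'' on \(\BTA\) under reversal, because this is what unlocks Lemma~\ref{automataCongruences}\ref{lemma:automataCongruences:downAT}. This follows directly from the definitions: a state \(q\) is empty in \(\TTA\) iff \(\mathcal{L}^{\TTA}_{\da}(q) = \varnothing\), and under the transition-reversal \(\mathcal{L}^{\TTA}_{\da}(q) = \mathcal{L}^{\BTA}_{\da}(q)\), which is exactly the downward language whose non-emptiness defines reachability in \(\BTA\). Once this is noted, the rest of the proof is just a mechanical substitution via the two identifications supplied by Lemma~\ref{congruencesCoincideTwo}, so no new combinatorial content is required.
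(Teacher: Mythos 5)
Your proof is correct and follows exactly the route the paper intends: the paper derives this corollary directly from Lemma~\ref{automataCongruences} applied to \((\TTA)^R\) together with the identification of congruences in Lemma~\ref{congruencesCoincideTwo}\ref{lemma:congruencesCoincide-b}. Your explicit check that ``no empty states'' for \(\TTA\) translates to ``no unreachable states'' for \((\TTA)^R\) is the right hypothesis to verify and matches the reversal correspondence stated in Appendix~\ref{sec:appendix-reverseTAs}.
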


\subsubsection{Determinization and minimization of TTAs using congruences}
We can now define the TTA-equivalents of the constructions from Definition~\ref{def:FG}.
\begin{definition}%
\label{def:JK}
Let \({\TTA}\) be a TTA  with \(L = \lang{{\TTA}}\).
Define:
\begin{align*}
\cJ{\mathsf{u}}({\TTA}) & \udiffg  \cH^{\mathsf{uR}}(\ruAT, {\TTA}) & \cK{\mathsf{u}}(L) & \udiffg  \cH^{\mathsf{uR}}(\ru_L, L) \\
\cJ{\mathsf{d}}({\TTA}) & \udiffg  \cH^{\mathsf{dR}}(\rdAT, {\TTA}) & \cK{\mathsf{d}}(L) & \udiffg  \cH^{\mathsf{dR}}(\rd_L, L) \enspace .
\end{align*}
\end{definition}

It follows from Lemma~\ref{congruencesCoincideTwo} that the constructions \(\cJ{\mathsf{u}}({\TTA})\) and \(\cJ{\mathsf{d}}({\TTA})\) are related to \(\cG{\mathsf{u}}(\BTA)\) and \(\cG{\mathsf{d}}(\BTA)\), respectively, through the reverse construction.
The same holds for the constructions \(\cK{\mathsf{u}}(L)\), \(\cK{\mathsf{d}}(L)\) and \(\cF{\mathsf{u}}(L)\), \(\cF{\mathsf{d}}(L)\), respectively.

\begin{corollary}\label{corollary:HHR}
Let \(\BTA\) be a BTA and \(\TTA \udiffg  (\BTA)^R\) with \(L = \lang{\TTA} = \lang{\BTA}\).
Then,
\begin{align*}
\cJ{\mathsf{u}}({\TTA}) &\equiv (\cG{\mathsf{u}}(\BTA))^R & \cK{\mathsf{u}}(L) &\equiv (\cF{\mathsf{u}}(L))^R \\
\cJ{\mathsf{d}}({\TTA}) &\equiv (\cG{\mathsf{d}}(\BTA))^R & \cK{\mathsf{d}}(L) &\equiv (\cF{\mathsf{d}}(L))^R \enspace .
\end{align*}
\end{corollary}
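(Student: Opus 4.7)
The plan is to prove each of the four isomorphisms by unfolding the definitions of the constructions in Definition~\ref{def:JK} and Definition~\ref{def:FG}, and then applying Lemma~\ref{congruencesCoincideTwo}\ref{lemma:congruencesCoincide-b} to identify the BTA-based and TTA-based congruences under the hypothesis $\TTA = (\BTA)^R$. The statements involving $\cF{\mathsf{u}}, \cF{\mathsf{d}}, \cK{\mathsf{u}}, \cK{\mathsf{d}}$ are even more direct since they all use language-based congruences, which do not depend on the choice of automaton.

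For the first equivalence, I would compute
\[
\cJ{\mathsf{u}}(\TTA) \;\udiffg\; \cH^{\mathsf{uR}}(\ruAT, L) \;=\; \bigl(\cH^{\mathsf{u}}(\ruAT, L)\bigr)^{R}
\]
using the definition of $\cH^{\mathsf{uR}}$. By Lemma~\ref{congruencesCoincideTwo}\ref{lemma:congruencesCoincide-b}, $\mathord{\ruAT} = \mathord{\ruAB}$, so $\cH^{\mathsf{u}}(\ruAT, L) = \cH^{\mathsf{u}}(\ruAB, L) \udiffg \cG{\mathsf{u}}(\BTA)$, and reversing both sides yields $\cJ{\mathsf{u}}(\TTA) \equiv (\cG{\mathsf{u}}(\BTA))^{R}$. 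The same argument applied to $\rdAT = \rdAB$ gives $\cJ{\mathsf{d}}(\TTA) \equiv (\cG{\mathsf{d}}(\BTA))^{R}$.

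The remaining two equivalences require no invocation of Lemma~\ref{congruencesCoincideTwo}: since $\cK{\mathsf{u}}(L) \udiffg \cH^{\mathsf{uR}}(\ru_L, L) = (\cH^{\mathsf{u}}(\ru_L, L))^{R} = (\cF{\mathsf{u}}(L))^{R}$ by the respective definitions, and analogously $\cK{\mathsf{d}}(L) = (\cF{\mathsf{d}}(L))^{R}$, both equalities are immediate.

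There is no real obstacle in this proof, as each step is a direct unfolding of a definition or an invocation of Lemma~\ref{congruencesCoincideTwo}. The only point worth checking carefully is that the reverse operation is a well-defined bijection on BTAs and TTAs that commutes with the automata constructions $\cH^{\mathsf{u}}$ and $\cH^{\mathsf{d}}$ up to isomorphism, which is immediate from the definitions of reverse BTA/TTA and of $\cH^{\mathsf{uR}}, \cH^{\mathsf{dR}}$.
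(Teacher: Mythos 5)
Your proposal is correct and follows essentially the same route as the paper: the paper derives Corollary~\ref{corollary:HHR} directly from Lemma~\ref{congruencesCoincideTwo}\ref{lemma:congruencesCoincide-b} (identifying \(\mathord{\ruAT}\) with \(\mathord{\ruAB}\) and \(\mathord{\rdAT}\) with \(\mathord{\rdAB}\) when \(\TTA = (\BTA)^R\)) together with the definitions of \(\cH^{\mathsf{uR}}, \cH^{\mathsf{dR}}\), exactly as you do, and the \(\cK{\mathsf{u}}/\cF{\mathsf{u}}\) and \(\cK{\mathsf{d}}/\cF{\mathsf{d}}\) cases are immediate since the language-based congruences depend only on \(L\).
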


Finally, we obtain the following result as the TTA-equivalent of Theorem~\ref{theoremF}.

\begin{corollary}\label{corolK}
Let \({\TTA}\) be a TTA with \(L = \lang{\TTA}\).
Then the following properties hold:
\begin{enumerate}
\renewcommand\labelenumi{\theenumi}
\renewcommand{\theenumi}{(\alph{enumi})}
\item \(\lang{\cK{\mathsf{u}}(L)} = L  = \lang{\cJ{\mathsf{u}}({\TTA})}\).
\label{corolK:languageTTA}
\item If \(L\) is path-closed then \(\lang{\cK{\mathsf{d}}(L)} = L = \lang{\cJ{\mathsf{d}}({\TTA})}\).\label{corolK:languageTTAMin}
\item\(\cJ{\mathsf{u}}({\TTA}) \equiv (\TTA)^{\cdb}\). \label{corolK:DetismorphiccDTTA}
\item If \(L\) is path-closed and \({\TTA}\) has no empty states then \(\cJ{\mathsf{d}}({\TTA}) \equiv (\TTA)^{\db}\). \label{corolK:coDetismorphicDTTA}
\item \(\cK{\mathsf{u}}\) is isomorphic to the minimal co-DTTA for \(L\).
\item If \(L\) is path-closed then \(\cK{\mathsf{d}}(L)\) is isomorphic to the minimal DTTA for \(L\).\label{corolK:MinimalDTTA}
\item If \(L\) is path-closed then \(\cJ{\mathsf{u}}(\cJ{\mathsf{d}}(\TTA)) \equiv \cK{\mathsf{u}}(L)\).\label{corolK:doubleReversalcoDTTA}
\item If \(L\) is path-closed then \(\cJ{\mathsf{d}}(\cJ{\mathsf{u}}(\TTA)) \equiv \cK{\mathsf{d}}(L)\).\label{corolK:DoubleReversal}
\end{enumerate}
\end{corollary}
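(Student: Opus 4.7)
The plan is to reduce every claim in Corollary~\ref{corolK} to its BTA-analogue in Theorem~\ref{theoremF} by applying the reverse operation. Given the TTA $\TTA$, I would set $\BTA \udiffg (\TTA)^R$, so that by the discussion in Appendix~\ref{sec:appendix-reverseTAs} we have $\lang{\BTA} = \lang{\TTA} = L$, and determinism/co-determinism of $\BTA$ matches co-determinism/determinism of $\TTA$. Corollary~\ref{corollary:HHR} then translates each of the TTA-level constructions back to the reverse of a BTA-level one: $\cJ{\mathsf{u}}(\TTA) \equiv (\cG{\mathsf{u}}(\BTA))^R$, $\cJ{\mathsf{d}}(\TTA) \equiv (\cG{\mathsf{d}}(\BTA))^R$, $\cK{\mathsf{u}}(L) \equiv (\cF{\mathsf{u}}(L))^R$, and $\cK{\mathsf{d}}(L) \equiv (\cF{\mathsf{d}}(L))^R$. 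Since reversal preserves the language, claims (a) and (b) fall out of Theorem~\ref{theoremF}(a),(b) applied to $\BTA$, together with Lemma~\ref{congruencesCoincideTwo} to align the path-closedness hypothesis.

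Claims (c) and (d) use in addition the relation $(\BTA)^{\cdb} \equiv (\TTA)^{\db}$ noted in Appendix~\ref{sec:appendix-reverseTAs}: from Theorem~\ref{theoremF}(c) we get $\cG{\mathsf{u}}(\BTA) \equiv \BTA^{\db}$, so $\cJ{\mathsf{u}}(\TTA) \equiv (\BTA^{\db})^R$; reading this identity “in the other direction,” the reverse of $\BTA^{\db}$ is precisely $(\BTA^R)^{\cdb} = \TTA^{\cdb}$, yielding (c). Claim (d) is the dual, using Theorem~\ref{theoremF}(d) and the fact that the hypothesis ``$\TTA$ has no empty states'' is exactly ``$\BTA$ has no unreachable states.'' For (e) and (f), minimality transfers along $(\cdot)^R$ because reversal is a bijection on states: the minimal DBTA for $L$ reverses to a co-DTTA with the same number of states recognizing $L$, and no co-DTTA recognizing $L$ can be strictly smaller because otherwise its reverse would contradict Theorem~\ref{theoremF}(e); symmetrically for (f) relying on Theorem~\ref{theoremF}(f) and path-closedness. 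Claims (g) and (h) follow by chaining: $\cJ{\mathsf{u}}(\cJ{\mathsf{d}}(\TTA)) \equiv (\cG{\mathsf{u}}(\cG{\mathsf{d}}(\BTA)))^R \equiv (\cF{\mathsf{u}}(L))^R \equiv \cK{\mathsf{u}}(L)$ by Corollary~\ref{corollary:HHR} (applied twice) and Theorem~\ref{theoremF}(g); likewise for (h) via Theorem~\ref{theoremF}(h).

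The main obstacle I anticipate is bookkeeping at the level of Corollary~\ref{corollary:HHR}: one has to check carefully that reversal commutes with the hypothesis of each clause in Theorem~\ref{theoremF}. Concretely, for (d) one must confirm that ``no unreachable states in $(\TTA)^R$'' and ``no empty states in $\TTA$'' coincide (they do, because reversal swaps $\mathcal{L}_\da$ and $\mathcal{L}_\ua$), and for (f) that ``minimal co-DBTA for $L$'' reverses to ``minimal DTTA for $L$'' with the same minimality witness. Beyond this, the chaining in (g)–(h) requires only that $((\cdot)^R)^R$ is the identity up to isomorphism and that $\cJ{\mathsf{u}} \circ \cJ{\mathsf{d}}$ unfolds, via Corollary~\ref{corollary:HHR}, into $(\cG{\mathsf{u}} \circ \cG{\mathsf{d}})$ applied to $\BTA$ and then reversed — a computation that is routine once the reversal bookkeeping is set up. No new congruence-theoretic content is required; the whole corollary is a systematic transport of Theorem~\ref{theoremF} across the involution $(\cdot)^R$.
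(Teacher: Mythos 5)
Your proposal is correct and matches the paper's intended argument exactly: the paper states Corollary~\ref{corolK} as an immediate consequence of Corollary~\ref{corollary:HHR} and Theorem~\ref{theoremF}, i.e., precisely the transport of each clause across the involution \((\cdot)^R\) that you describe, including the bookkeeping that reversal swaps empty/unreachable states, determinism/co-determinism, and \(\db\)/\(\cdb\). Nothing further is needed.
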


It follows from Corollary~\ref{corolK}\ref{corolK:DetismorphiccDTTA},~\ref{corolK:coDetismorphicDTTA} and~\ref{corolK:DoubleReversal} that, given a TTA \(\TA\), we have that \(((((\TTA)^R)^{\db})^R)^{\db}\) is the minimal DTTA for \(\lang{\TTA}\).

\end{document}